\newtheorem{theorem}{Theorem}
\newtheorem{corollary}[theorem]{Corollary}
\newtheorem{definition}[theorem]{Definition}
\newtheorem{example}[theorem]{Example}
\newtheorem{proposition}[theorem]{Proposition}
\newtheorem{remark}[theorem]{Remark}
\newenvironment{proof}[1][Proof]{\noindent\textbf{#1.} }{\ \rule{0.5em}{0.5em}}
\begin{document}
\title{(Pseudo) Generalized Kaluza-Klein $G$-Spaces and Einstein Equations}
\author{C. M. Arcu\c{s} and E. Peyghan}
\maketitle

\maketitle
\begin{abstract}
Introducing the Lie algebroid generalized tangent bundle of a Kaluza-Klein bundle, we
develop the theory of general distinguished linear connections for this space. In particular, using the Lie algebroid generalized tangent bundle of the Kaluza-Klein vector
bundle, we present the $\left( g,h\right) $-lift of a curve on the base $M$ and we  characterize the horizontal and vertical parallelism of the $\left( g,h\right) $-lift of accelerations with respect to a distinguished linear $\left( \rho ,\eta \right) $-connection. Moreover, we study the torsion, curvature and Ricci
tensor field associated to a distinguished linear $\left( \rho ,\eta \right)
$-connection and we obtain the identities of Cartan and Bianchi type in the
general framework of the Lie algebroid generalized tangent bundle of a
Kaluza-Klein bundle. Finally, we introduce the theory of (pseudo) generalized
Kaluza-Klein G-spaces and we develop the Einstein equations in this general framework.
\end{abstract}
\section{Introduction}
Recently, Lie algebroids are important issues in physics and mechanics since the extension of Lagrangian and Hamiltonian systems to their entity \cite{f, LMM, Pop, W} and catching the poisson  structure \cite{popescu0}. Then, Arcu\c{s} introduced generalized Lie algebroids as the extension of Lie algebroids and he studied geometrical and physical concepts for these spaces \cite{A0, A, A2, A1}. Indeed a generalized Lie algebroid is a extension of Lie algebroid from one base manifold to a pair of diffeomorphic base manifolds.

The Einstein theory of general relativity and Maxwell theory of
electromagnetism was independent developed in the world of physicists. In \cite{K} Kaluza proposed to unify these two theories using a
five-dimensional manifold. Kaluza's achievement was possible if the
components of the pseudo-Riemannian metric on the five-dimensional manifold
does not depend on the fifth coordinate (cylinder condition). Then, Klein \cite{Ke} added the
condition of compactification, namely the space is closed by a very
small circle in the direction of the fifth dimension. Therefore, the Kaluza-Klein theory emerged to unify of the Einstein theory of general
gravity and Maxwell theory of electromagnetism.

In \cite{EB}, Einstein and Bergmann proposed a first
generalization of Kaluza-Klein theory using a pseudo-Riemannian metric such
that its components are periodic in the fifth coordinate. So, the cylinder
condition was partially satisfied and, for the firs time, a covariant
derivative of a vector bundle over the five-dimensional manifold was
introduced.

A well known generalization of the five-dimensional Kaluza-Klein theory is
the so called the space time matter (STM) theory. Cosmological solutions in
which both the cylinder condition and the compactification condition were
removed are presented in \cite{C, LW, P, PW}. An
excelent survey on STM theory is presented in the paper of Overduin and Wesson
\cite{OW}.

The idea to construct exact solutions with Lie and Clifford algebroid symmetries  in modified and extra dimension gravity and matter field theories was elaborated originally in a series of preprints by S. Vacaru \cite{V1, V2, V3}; see further developments and reviews of results on nonholonomic algebroids  and Einstein-Direact structures, Finsler-Lagrange-Hamilton algebroid spaces and geometric flows on Lie algebroid in \cite{V4, V5, V6}.

Recently, Bejancu developed a new point of
view on a general Kaluza-Klein theory using the product manifold $\overline{M}%
=M\times K,$ where $M$ is a four-dimensional manifold and $K$ is a one-dimensional manifold \cite{B0, B, B1}. The tangent bundle of $\overline{M}$ is the space
used to develop the theory as a direct sum between the horizontal
distribution $H\overline{M}$ and vertical distribution $V\overline{M}.$ The
novelty is determined by the Riemannian horizontal connection which plays
the same role as Levi-Civita linear connection on the space time manifold $M$
for the classical Kaluza-Klein theory. Using this linear connection, which is not distinguished connection, Bejancu introduced the
Einstein gravitational tensor field and he write the field equations on $%
\left( \overline{M},\overline{g}\right) .$ A new method for the study of
general higher dimension Kaluza-Klein theory is presented by Bejancu in \cite{B2}.

We remark that Bejancu used the usual Lie algebroid tangent bundle. The purpose of this paper is to develop a general method to study of the Kaluza-Klein theory using the generalized connection theory presented by Arcu\c{s} in \cite{A0, A, A2, A1}.

This paper is arranged as follows. In Sec. 3, using a (generalized) Lie algebroid
presented in Sec. 2, we introduce the Lie algebroid generalized tangent
bundle of a Kaluza-Klein bundle and we study the connections on a Kaluza-Klein bundle. Moreover, we
develop the theory of general distinguished linear connections for the Lie
algebroid generalized tangent bundle.
In Sec. 4, using the Lie algebroid generalized tangent bundle of the Kaluza-Klein vector
bundle, we present the $\left( g,h\right) $-lift of a curve on the base $M$. A characterization of horizontality (respectively, verticality)
parallelism of the $\left( g,h\right) $-lift of accelerations with respect to
a distinguished linear $\left( \rho ,\eta \right) $-connection is presented.
Sec. 5 is dedicated to study the torsion and curvature of a
distinguished linear $\left( \rho ,\eta \right) $-connection. Moreover, the Ricci
tensor field associated to a distinguished linear $\left( \rho ,\eta \right)
$-connection is introduced and identities of Cartan and Bianchi type in the
general framework of the Lie algebroid generalized tangent bundle of a
Kaluza-Klein bundle are presented. Finally, in Sec. 6, we present the theory of (pseudo) generalized
Kaluza-Klein G-spaces. A lot of examples of Kaluza-Klein G-spaces are
presented and the Einstein equations are developed in this general framework.
In particular, using the identities morphisms, the
usual Kaluza-Klein theory used by Bejancu is obtained, but the difference is that the
metrical linear connection used in our paper is distinguished connection.

\section{Preliminaries}
Let $(F,\nu ,N)$ be a vector bundle, $\Gamma ( F, \nu, N)$ be the set of the sections of it and $\mathcal{F}(
N)$ be the smooth real-valued functions on $N$. Then $(\Gamma
(F,\nu ,N) ,+,\cdot)$ is a $\mathcal{F}(N)$-module. If $(\varphi ,\varphi _{0})$ is a morphism from $( F,\nu ,N)$ to $(F', \nu', N')$ such that $\varphi _{0}$ is a isomorphism from $N$ to $N'$, then using the operation
\begin{equation*}
\begin{array}{ccc}
\mathcal{F}\left( N\right) \times \Gamma \left(F', \nu', N'\right) & ^{\underrightarrow{~\ \ \cdot ~\ \ }} & \Gamma \left(
F', \nu', N'\right),\\
\left( f,u^{\prime }\right) & \longmapsto & f\circ \varphi _{0}^{-1}\cdot
u^{\prime },
\end{array}%
\end{equation*}%
it results that $(\Gamma( F', \nu', N') ,+,\cdot)$ is a $\mathcal{F}(N)$-module and we
obtain the modules morphism
\begin{equation*}
\begin{array}{ccc}
\Gamma \left(F,\nu ,N\right) & ^{\underrightarrow{~\ \ \Gamma \left(
\varphi ,\varphi _{0}\right) ~\ \ }} & \Gamma \left( F', \nu', N'\right),\\
u & \longmapsto & \Gamma \left( \varphi ,\varphi _{0}\right) u,
\end{array}%
\end{equation*}%
defined by
\begin{equation*}
\begin{array}{c}
\Gamma \left( \varphi ,\varphi _{0}\right) u\left( y\right) =\varphi \left(
u_{\varphi _{0}^{-1}\left( y\right) }\right) =\left( \varphi \circ u\circ
\varphi _{0}^{-1}\right) \left( y\right) ,%
\end{array}%
\end{equation*}%
for any $y\in N'$.
\begin{definition}\label{AP}
A generalized Lie algebroid is a vector bundle $(F,\nu ,N)$ given by the diagrams:
\begin{equation}
\begin{array}{c}
\begin{array}[b]{ccccc}
\left( F,\left[ ,\right] _{F,h}\right) & ^{\underrightarrow{~\ \ \ \rho \ \
\ \ }} & \left( TM,\left[ ,\right] _{TM}\right) & ^{\underrightarrow{~\ \ \
Th\ \ \ \ }} & \left( TN,\left[ ,\right] _{TN}\right)\\
~\downarrow \nu &  & ~\ \ \downarrow \tau _{M} &  & ~\ \ \downarrow \tau _{N}
\\
N & ^{\underrightarrow{~\ \ \ \eta ~\ \ }} & M & ^{\underrightarrow{~\ \ \
h~\ \ }} & N
\end{array}
\end{array}
\end{equation}
where $h$ and $\eta $ are arbitrary isomorphisms, $(\rho, \eta)$ is a vector bundles morphism from $(F,\nu, N)$ to $(TM,\tau _{M},M)$ and
\begin{equation*}
\begin{array}{ccc}
\Gamma \left( F,\nu ,N\right) \times \Gamma \left( F,\nu ,N\right) & ^{%
\underrightarrow{~\ \ \left[ ,\right] _{F,h}~\ \ }} & \Gamma \left( F,\nu
,N\right),\\
\left( u,v\right) & \longmapsto & \ \left[ u,v\right] _{F,h},
\end{array}
\end{equation*}
is an operation satisfies in
\begin{equation*}
\begin{array}{c}
\left[ u,f\cdot v\right] _{F,h}=f\left[ u,v\right] _{F,h}+\Gamma \left(
Th\circ \rho ,h\circ \eta \right) \left( u\right) f\cdot v,\ \ \ \forall f\in \mathcal{F}(N),
\end{array}
\end{equation*}
such that

1) The 4-tuple $(\Gamma( F, \nu, N) ,+,\cdot, [ , ] _{F,h})$ is a Lie $\mathcal{F}(N)$-algebra,

2) The modules morphism $\Gamma(Th\circ\rho, h\circ \eta)$ is a Lie algebras morphism from
$
(\Gamma(F,\nu ,N), +, \cdot, [ , ] _{F,h})
$
to
$
(\Gamma(TN, \tau_{N}, N), +, \cdot, [ , ]_{TN}).
$
\end{definition}
We denote by $\Big((F, \nu, N), [ , ] _{F,h}, (\rho, \eta) \Big)$ the generalized Lie algebroid defined in the above. Moreover, the couple $\Big([ , ]
_{F,h}, (\rho, \eta)\Big)$ is called the \emph{generalized
Lie algebroid structure.}

A morphism from
$
\Big(( F,\nu ,N), [ , ] _{F,h}, (\rho, \eta)\Big)
$
to
$
\Big(( F^{\prime }, \nu ^{\prime }, N^{\prime }), [ , ]
_{F^{\prime }, h^{\prime }}, (\rho ^{\prime }, \eta ^{\prime })\Big)
$
is a morphism $(\varphi ,\varphi _{0})$ from $(F,\nu, N)$ to $(F',\nu', N')$ such that $\varphi _{0}$ is an isomorphism from $N$ to $N'$, and the modules morphism $\Gamma(\varphi, \varphi _{0})$
is a Lie algebras morphism from
$
\left( \Gamma \left( F,\nu ,N\right) ,+,\cdot ,\left[ ,\right] _{F,h}\right)
$
to
$
\left( \Gamma \left( F^{\prime },\nu ^{\prime },N^{\prime }\right) ,+,\cdot
, \left[ ,\right] _{F^{\prime },h^{\prime }}\right).
$
Thus, we can discuss about the category of
generalized Lie algebroids.
\begin{remark}
In the particular case, $\left( \eta ,h\right) =\left(
Id_{M},Id_{M}\right)$, we obtain the definition of Lie algebroid.
\end{remark}
If we take local coordinates $(x^i)$ and $(\chi^{\tilde{\imath}})$ on open sets $U\subset M$ and $V\subset N$, respectively, then we have the corresponding local coordinates $(x^i, y^i)$ and  $(\chi^{\tilde{\imath}}, z^{\tilde{\imath}})$ on $TM$ and $TN$, respectively, where $i,\tilde{\imath}\in 1, \ldots, m$. Moreover, a local basis $\{t_\alpha\}$ of the sections of $\nu^{-1}(V)\rightarrow V$ generates local coordinates $(\chi^{\tilde{\imath}}, z^{\alpha})$ on $F$, where $\alpha \in 1, \ldots, p$. If we consider the another local coordinates $(x^{i'}(x^i), y^{i'}(x^i, y^i))$, $(\chi^{{\tilde{\imath}}'}(\chi^{\tilde{\imath}}), z^{{\tilde{\imath}}'}(\chi^{\tilde{\imath}}, z^{\tilde{\imath}}))$ and $(\chi^{{\tilde{\imath}}'}(\chi^{\tilde{\imath}}), z^{{\alpha}'}(\chi^{\tilde{\imath}}, z^{{\alpha}}))$ on $TM$, $TN$ and $F$, respectively, then we have the following corresponding changes of coordinates
\begin{equation}\label{AP1}
z^{{\alpha}'}=\Lambda _{\alpha }^{{\alpha}'}z^\alpha,\ \ y^{i'}=\frac{\partial x^{i'}}{\partial x^i}y^i,\ \ z^{\tilde{\imath}\prime }=\frac{\partial \chi ^{\tilde{\imath}\prime }}{%
\partial \chi ^{\tilde{\imath}}}z^{\tilde{\imath}}.
\end{equation}
We assume that $\left( \theta ,\mu \right)=\left( Th\circ
\rho ,h\circ \eta \right) $. If $z^{\alpha }t_{\alpha }$ is a section of $\left(
F,\nu ,N\right)$, then for any $f\in \mathcal{F}\left( N\right) $ and $\varkappa \in N$ we have
\begin{equation*}
\begin{array}[t]{l}
\displaystyle%
\begin{array}{c}
\Gamma \left( Th\circ \rho ,h\circ \eta \right) \left( z^{\alpha }t_{\alpha
}\right) f\left( h\circ \eta \left( \varkappa \right) \right)
=\left( \theta _{\alpha }^{\tilde{\imath}}z^{\alpha }\frac{\partial f}{%
\partial \varkappa ^{\tilde{\imath}}}\right) \left( h\circ \eta \left(
\varkappa \right) \right)\\
=\left( \left( \rho _{\alpha }^{i}\circ h\right)
\left( z^{\alpha }\circ h\right) \frac{\partial f\circ h}{\partial x^{i}}%
\right) \left( \eta \left( \varkappa \right) \right) ,%
\end{array}%
\end{array}%
\end{equation*}%
where $\rho^i_\alpha$ are local functions on $N$ \cite{A0}. Now, we put $[t_\alpha, t_\beta]_{F, h}=L^\gamma_{\alpha\beta}t_\gamma$, where $L^\gamma_{\alpha\beta}$ are local functions on $N$ and $\alpha, \beta, \gamma\in 1, \ldots, p$. It is easy to see that $L^\gamma_{\alpha\beta}=-L^\gamma_{\beta\alpha}$. Moreover, the condition (2) of {\it Definition \ref{AP}} implies that
\begin{equation*}
\begin{array}{c}
\displaystyle\left( L_{\alpha \beta }^{\gamma }\circ h\right) \left( \rho
_{\gamma }^{k}\circ h\right) =\left( \rho _{\alpha }^{i}\circ h\right) \frac{%
\partial \left( \rho _{\beta }^{k}\circ h\right) }{\partial x^{i}}-\left(
\rho _{\beta }^{j}\circ h\right) \frac{\partial \left( \rho _{\alpha
}^{k}\circ h\right) }{\partial x^{j}}.%
\end{array}%
\end{equation*}
The local functions $\rho^i_\alpha$, $L^\gamma_{\alpha\beta}$ introduced in the above are called the {\textit{structure functions}} of the generalized Lie algebroid $\Big((F, \nu, N), [, ]_{F, h}, (\rho, \eta)\Big)$. Under the change of coordinates (\ref{AP1}), $\rho^i_\alpha$ and $\theta _{\alpha }^{\tilde{\imath}}$ satisfy in the following transformation rules
\[
\rho^{i'}_{\alpha'}=\Lambda _{\alpha'}^\alpha\rho^i_\alpha\frac{\partial x^{i'}}{\partial x^i},\ \ \theta^{\tilde\imath'}_{\alpha'}=\Lambda _{\alpha'}^\alpha\theta^{\tilde\imath}_{\alpha}\frac{\partial\varkappa^{\tilde\imath'}}{\partial\varkappa^{\tilde\imath}},
\]
where
$
\left\Vert \Lambda _{\alpha
{\acute{}}%
}^{\alpha }\right\Vert =\left\Vert \Lambda _{\alpha }^{\alpha
{\acute{}}%
}\right\Vert ^{-1}.
$
Also, it is known that the following relation is hold between $\rho _{\alpha }^{i}$ and $\theta _{\alpha }^{\tilde{\imath}}$
\begin{equation*}
\begin{array}{c}
\displaystyle\rho _{\alpha }^{i}\circ h\frac{\partial f\circ h}{\partial
x^{i}}=\left( \theta _{\alpha }^{\tilde{\imath}}\frac{\partial f}{\partial
\varkappa ^{\tilde{\imath}}}\right) \circ h,\ \ \ \forall f\in \mathcal{F}\left(
N\right) .%
\end{array}%
\end{equation*}
\section{The Lie algebroid generalized tangent bundle}
Let $M$ be a 4-dimensional manifold and $K$ be a 1-dimensional manifold. Considering $E=M\times K$, we introduce the Kaluza-Klein bundle $(E, \pi, M)$, where $\pi:E\longrightarrow M$ is the projection map on the first factor. Let $(x^i)$, $i\in\{1, 2, 3, 4\}$, be a coordinate system on $M$. Then we can consider a coordinate system $(x^i, y^\circ)$ on $E$, where $y^\circ$ is the fibre coordinate. Let
\[
(x^i, y^\circ)\longrightarrow (x^{i'}(x^i), {y^\circ}'(x^i, y^\circ)),
\]
be a change of coordinates on $\left( E, \pi ,M\right) $. Then the coordinate $y^\circ$ change to ${y^\circ}'$ according to the rule:
\[
{y^\circ}'=\frac{\partial {y^\circ}'}{\partial y^\circ}y^\circ.
\]
Now, let $\left( \left( F,\nu ,N\right) ,\left[ ,\right] _{F,h},\left( \rho ,\eta\right) \right)$ be a generalized Lie algebroid. Using the diagram
\begin{equation}
\begin{array}{rcl}
E &  & \left( F,\left[ ,\right] _{F,h},\left( \rho ,\eta \right) \right) \\
\pi \downarrow &  & ~\downarrow \nu \\
M & ^{\underrightarrow{~\ \ \ \ h~\ \ \ \ }} & ~\ N%
\end{array}
\end{equation}
we have the pull-back bundle $({\pi}^*(h^*F), {\pi}^*(h^*\nu), E)$. Here we consider the vector bundles morphism $\Big({\overset{\pi^{\ast }\left( h^{\ast}F\right)}{\rho }}, Id_{E}\Big)$ from $\left(\pi ^{\ast }\left( h^{\ast }F\right), \pi^{\ast }\left( h^{\ast }\nu \right) ,E\right)$ to $\left(TE,\tau _{E}, E\right)$, where
\begin{equation*}
\begin{array}{rcl}
\ \pi^{\ast }\left( h^{\ast }F\right) & ^{\underrightarrow{\overset{\pi
^{\ast }\left( h^{\ast }F\right) }{\rho }}} & TE,\\
\displaystyle Z^{\alpha }T_{\alpha }\left( u_{x}\right) & \longmapsto & %
\displaystyle\left( Z^{\alpha }\cdot \rho _{\alpha }^{i}\circ h\circ \pi
\right) \frac{\partial }{\partial x^{i}}\left( u_{x}\right),
\end{array}
\end{equation*}
and $\{T_\alpha\}_{\alpha=1}^p$ be a basis of the sections of the pull-back bundle
\[
({\pi}^*(h^*F), {\pi}^*(h^*\nu), E).
\]
Using the operation
\begin{equation*}
\begin{array}{ccc}
\Gamma \left(\pi^{\ast }\left( h^{\ast }F\right), \pi^{\ast }\left(
h^{\ast }\nu \right) ,E\right) ^{2}\!\!\!\!& ^{\underrightarrow{~\ \ \left[ ,\right]
_{\pi^{\ast }\left( h^{\ast }F\right) }~\ \ }} \!\!\!\!&\Gamma \left(\pi^{\ast
}\left( h^{\ast }F\right), \pi^{\ast }\left( h^{\ast }\nu \right)
,E\right),
\end{array}%
\end{equation*}%
defined by
\begin{equation*}
\begin{array}{ll}
\left[ T_{\alpha },T_{\beta }\right] _{\pi^{\ast }\left( h^{\ast
}F\right) }\!\!\!\!\!& =L_{\alpha \beta }^{\gamma }\circ h\circ \pi \cdot T_{\gamma },%
\vspace*{1mm} \\
\left[ T_{\alpha },fT_{\beta }\right] _{\pi^{\ast }\left( h^{\ast
}F\right) }\!\!\!\!& \displaystyle=fL_{\alpha \beta }^{\gamma }\circ h\circ \pi
T_{\gamma }+\rho _{\alpha }^{i}\circ h\circ \pi\frac{\partial f}{\partial
x^{i}}T_{\beta },\vspace*{1mm} \\
\left[ fT_{\alpha },T_{\beta }\right] _{\pi^{\ast }\left( h^{\ast
}F\right) }\!\!\!\!&=-\left[ T_{\beta },fT_{\alpha }\right] _{\pi^{\ast }\left(
h^{\ast }F\right) },%
\end{array}
\end{equation*}%
for any$f\in \mathcal{F}\left( E\right)$, it results that
\begin{equation*}
\begin{array}{c}
\left( \left(\pi^{\ast }\left( h^{\ast }F\right), \pi^{\ast }\left(
h^{\ast }\nu \right) ,E\right) ,\left[ ,\right] _{\pi^{\ast }\left(
h^{\ast }F\right) },\left( \overset{\pi^{\ast }\left( h^{\ast }F\right) }{%
\rho },Id_{E}\right) \right),
\end{array}
\end{equation*}
is a Lie algebroid which is called the {\it pull-back Lie algebroid} of the
generalized Lie algebroid $\left( \left( F,\nu ,N\right) ,\left[ ,\right]
_{F,h},\left( \rho ,\eta \right) \right)$.

If $z=z^{\alpha }t_{\alpha }\in \Gamma \left( F,\nu ,N\right)$, then we
obtain the section%
\begin{equation*}
Z=\left( z^{\alpha }\circ h\circ \pi \right) T_{\alpha }\in \Gamma \left(
\pi^{\ast }\left( h^{\ast }F\right), \pi^{\ast }\left( h^{\ast }\nu
\right) ,E\right),
\end{equation*}%
such that $Z\left( u_{x}\right) =z\left( h\left( x\right) \right) ,$ for any $%
u_{x}\in \pi ^{-1}\left( U{\cap h}^{-1}V\right) .$

Now we consider the vector bundle $\left(
\pi ^{\ast }\left( h^{\ast }F\right) \oplus TE,\overset{\oplus }{\pi}%
, E\right)$. Let $\left( \partial _{i},\dot{\partial}_\circ\right)$ be the base sections for the Lie $\mathcal{F}\left( E\right) $-algebra
\begin{equation*}
\left( \Gamma \left( TE,\tau _{E},E\right) ,+,\cdot ,\left[ ,\right]
_{TE}\right),
\end{equation*}
where $\partial_i:=\frac{\partial}{\partial x^i}$ and $\dot{\partial}_\circ:=\frac{\partial}{\partial y^\circ}$. Setting
\[
\tilde{\partial}_\alpha:=T_\alpha\oplus(\rho^i_\alpha\circ h\circ \pi)\partial_i,\ \ \ \dot{\tilde{\partial}}_\circ:=0_{\pi^*(h^* F)}\oplus\dot{\partial}_\circ,
\]
one can deduce that
\begin{equation*}
Z^{\alpha }\tilde{\partial}_{\alpha }+Y^\circ\dot{\tilde{\partial}}_\circ=Z^{\alpha }T_{\alpha }\oplus \left( Z^{\alpha }\left( \rho _{\alpha
}^{i}\circ h\circ \pi \right) \partial _{i}+Y\dot{\partial}_\circ\right),
\end{equation*}
is a section of $\left(\pi^{\ast }\left(h^{\ast }F\right) \oplus TE,\overset{%
\oplus }{\pi},E\right)$, where $Z^{\alpha }T_{\alpha }$ and $Y^\circ\dot{\partial}_\circ$ are sections of $\left(\pi^{\ast }\left( h^{\ast }F\right), \pi^{\ast }\left( h^{\ast }F\right) ,E\right)$ and $\left( VTE,\tau _{E},E\right)$, respectively. Moreover, it is easy to see that the sections $\tilde{\partial}_{1}, \ldots, \tilde{\partial}_{p}, \dot{\tilde{\partial}}_\circ$ are linearly independent. Therefore we can consider the vector subbundle $\left( \left( \rho ,\eta \right) TE,\left(
\rho ,\eta \right) \tau _{E},E\right) $ of the vector bundle\break $\left(
\pi^{\ast }\left( h^{\ast }F\right) \oplus TE,\overset{\oplus }{\pi}%
,E\right)$, for which the $\mathcal{F}\left( E\right) $-module of sections
is the $\mathcal{F}\left( E\right) $-submodule of $\left( \Gamma \left(\pi
^{\ast }\left( h^{\ast }F\right) \oplus TE,\overset{\oplus }{\pi},E\right)
,+,\cdot \right)$ generated by the set of sections $\left( \tilde{\partial}%
_{\alpha }, \dot{\tilde{\partial}}_\circ\right)$. The vector bundle $\left( \left( \rho ,\eta \right) TE,\left( \rho ,\eta
\right) \tau _{E},E\right) $ is called the \emph{generalized tangent bundle} and the base sections $\left( \tilde{\partial}_{\alpha }, \dot{\tilde{%
\partial}}_\circ\right) $ is called the \emph{natural }$\left( \rho ,\eta
\right)$-base.

The matrix of coordinate transformation on $\left( \left( \rho ,\eta \right)
TE,\left( \rho ,\eta \right) \tau _{E},E\right) $ at a change of fibred
charts is
\begin{equation*}
\left(
\begin{array}{cc}
\Lambda _{\alpha }^{\alpha
{\acute{}}%
}\circ h\circ \pi & 0\vspace*{1mm} \\
\left( \rho _{\alpha }^{i}\circ h\circ \pi \right) \displaystyle\frac{%
\partial {y^\circ}'}{\partial x^{i}} & \displaystyle\frac{\partial
{y^\circ}'}{\partial y^\circ}%
\end{array}%
\right).
\end{equation*}
\begin{theorem}
The vector bundle $\left( \left( \rho
,\eta \right) TE,\left( \rho ,\eta \right) \tau _{E},E\right)$ has a Lie algebroid structure.
\end{theorem}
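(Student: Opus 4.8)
The plan is to equip $\left( \left( \rho ,\eta \right) TE,\left( \rho ,\eta \right) \tau _{E},E\right) $ with an anchor and a bracket and then to verify the three conditions in the definition of a Lie algebroid. The order matters: I first establish that the anchor is a morphism of Lie algebras, since that is exactly what is needed in order to reduce the Jacobi identity of the bracket to the natural $\left( \rho ,\eta \right) $-base $\left\{ \tilde{\partial}_{\alpha },\dot{\tilde{\partial}}_{\circ }\right\} $.

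For the anchor I take the restriction to $\left( \rho ,\eta \right) TE$ of the canonical projection $\pi ^{\ast }\left( h^{\ast }F\right) \oplus TE\rightarrow TE$, which is a globally defined bundle morphism over $Id_{E}$; denote it $\widetilde{\rho}$. On the natural base it reads $\widetilde{\rho}\left( \tilde{\partial}_{\alpha }\right) =\left( \rho _{\alpha }^{i}\circ h\circ \pi \right) \partial _{i}$ and $\widetilde{\rho}\left( \dot{\tilde{\partial}}_{\circ }\right) =\dot{\partial}_{\circ }$. For the bracket, I prescribe on the base
\begin{equation*}
\left[ \tilde{\partial}_{\alpha },\tilde{\partial}_{\beta }\right] _{\left( \rho ,\eta \right) TE}=\left( L_{\alpha \beta }^{\gamma }\circ h\circ \pi \right) \tilde{\partial}_{\gamma },\qquad \left[ \tilde{\partial}_{\alpha },\dot{\tilde{\partial}}_{\circ }\right] _{\left( \rho ,\eta \right) TE}=\left[ \dot{\tilde{\partial}}_{\circ },\dot{\tilde{\partial}}_{\circ }\right] _{\left( \rho ,\eta \right) TE}=0,
\end{equation*}
and extend by $\mathbb{R}$-bilinearity, skew-symmetry and the Leibniz rule $\left[ X,fX^{\prime }\right] _{\left( \rho ,\eta \right) TE}=f\left[ X,X^{\prime }\right] _{\left( \rho ,\eta \right) TE}+\left( \widetilde{\rho}\left( X\right) f\right) X^{\prime }$, $f\in \mathcal{F}\left( E\right) $. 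Because $\left\{ \tilde{\partial}_{\alpha },\dot{\tilde{\partial}}_{\circ }\right\} $ is a basis (its linear independence was noted above), this well-defines an operation which, for $X=Z_{1}^{\alpha }\tilde{\partial}_{\alpha }+Y_{1}\dot{\tilde{\partial}}_{\circ }$ and $X^{\prime }=Z_{2}^{\beta }\tilde{\partial}_{\beta }+Y_{2}\dot{\tilde{\partial}}_{\circ }$, takes the explicit form
\begin{equation*}
\left[ X,X^{\prime }\right] _{\left( \rho ,\eta \right) TE}=\left( Z_{1}^{\alpha }Z_{2}^{\beta }L_{\alpha \beta }^{\gamma }\circ h\circ \pi +\widetilde{\rho}\left( X\right) Z_{2}^{\gamma }-\widetilde{\rho}\left( X^{\prime }\right) Z_{1}^{\gamma }\right) \tilde{\partial}_{\gamma }+\left( \widetilde{\rho}\left( X\right) Y_{2}-\widetilde{\rho}\left( X^{\prime }\right) Y_{1}\right) \dot{\tilde{\partial}}_{\circ };
\end{equation*}
it visibly takes values in $\Gamma \left( \left( \rho ,\eta \right) TE,\left( \rho ,\eta \right) \tau _{E},E\right) $, and skew-symmetry (using $L_{\alpha \beta }^{\gamma }=-L_{\beta \alpha }^{\gamma }$) together with the Leibniz rule are read off from it. I note that $\left[ ,\right] _{\left( \rho ,\eta \right) TE}$ is \emph{not} the restriction of the componentwise bracket of $\pi ^{\ast }\left( h^{\ast }F\right) \oplus TE$, which does not even preserve $\Gamma \left( \left( \rho ,\eta \right) TE\right) $, so the bracket must be built intrinsically on the subbundle.

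Next I would check that $\widetilde{\rho}$ is a morphism of Lie algebras, i.e. $\widetilde{\rho}\left( \left[ X,X^{\prime }\right] _{\left( \rho ,\eta \right) TE}\right) =\left[ \widetilde{\rho}\left( X\right) ,\widetilde{\rho}\left( X^{\prime }\right) \right] _{TE}$. Since $\widetilde{\rho}$ is $\mathcal{F}\left( E\right) $-linear and both brackets obey a Leibniz rule, it suffices to verify this on the base sections; it is trivial when a $\dot{\tilde{\partial}}_{\circ }$ appears, and for $\left( \tilde{\partial}_{\alpha },\tilde{\partial}_{\beta }\right) $ it becomes $\left[ \left( \rho _{\alpha }^{i}\circ h\circ \pi \right) \partial _{i},\left( \rho _{\beta }^{j}\circ h\circ \pi \right) \partial _{j}\right] _{TE}=\left( L_{\alpha \beta }^{\gamma }\circ h\circ \pi \right) \left( \rho _{\gamma }^{k}\circ h\circ \pi \right) \partial _{k}$, which is exactly the structure identity $\left( L_{\alpha \beta }^{\gamma }\circ h\right) \left( \rho _{\gamma }^{k}\circ h\right) =\left( \rho _{\alpha }^{i}\circ h\right) \partial \left( \rho _{\beta }^{k}\circ h\right) /\partial x^{i}-\left( \rho _{\beta }^{j}\circ h\right) \partial \left( \rho _{\alpha }^{k}\circ h\right) /\partial x^{j}$ of Section 2, pulled back to $E$ along $\pi $.

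It remains to prove the Jacobi identity for $\left[ ,\right] _{\left( \rho ,\eta \right) TE}$, which is the technical core. Once $\widetilde{\rho}$ is known to be a morphism, the standard argument (the anomalous terms in $J\left( X,X^{\prime },fX^{\prime \prime }\right) -fJ\left( X,X^{\prime },X^{\prime \prime }\right) $ cancel because $\widetilde{\rho}$ preserves brackets) shows that the Jacobiator $J$ is $\mathcal{F}\left( E\right) $-trilinear, so it is enough to evaluate it on $\left\{ \tilde{\partial}_{\alpha },\dot{\tilde{\partial}}_{\circ }\right\} $. Every triple containing at least one $\dot{\tilde{\partial}}_{\circ }$ vanishes term by term, since $\left[ \tilde{\partial}_{\alpha },\dot{\tilde{\partial}}_{\circ }\right] _{\left( \rho ,\eta \right) TE}=0=\left[ \dot{\tilde{\partial}}_{\circ },\dot{\tilde{\partial}}_{\circ }\right] _{\left( \rho ,\eta \right) TE}$ and $\widetilde{\rho}\left( \dot{\tilde{\partial}}_{\circ }\right) =\dot{\partial}_{\circ }$ annihilates the functions $L_{\alpha \beta }^{\gamma }\circ h\circ \pi $, which factor through $\pi $. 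The only substantial case is $\left( \tilde{\partial}_{\alpha },\tilde{\partial}_{\beta },\tilde{\partial}_{\gamma }\right) $: expanding with the Leibniz rule,
\begin{equation*}
\left[ \left[ \tilde{\partial}_{\alpha },\tilde{\partial}_{\beta }\right] _{\left( \rho ,\eta \right) TE},\tilde{\partial}_{\gamma }\right] _{\left( \rho ,\eta \right) TE}=\left( \left( L_{\alpha \beta }^{\delta }L_{\delta \gamma }^{\varepsilon }\right) \circ h\circ \pi -\left( \rho _{\gamma }^{i}\circ h\circ \pi \right) \partial _{i}\left( L_{\alpha \beta }^{\varepsilon }\circ h\circ \pi \right) \right) \tilde{\partial}_{\varepsilon },
\end{equation*}
and the cyclic sum over $\left( \alpha ,\beta ,\gamma \right) $ has as the coefficient of $\tilde{\partial}_{\varepsilon }$ precisely the pull-back along $h\circ \pi $ of the Jacobiator of the structure functions $L_{\alpha \beta }^{\gamma }$ of the generalized Lie algebroid $\left( \left( F,\nu ,N\right) ,\left[ ,\right] _{F,h},\left( \rho ,\eta \right) \right) $ --- here one uses the identification $\Gamma \left( Th\circ \rho ,h\circ \eta \right) \left( t_{\alpha }\right) \leftrightarrow \left( \rho _{\alpha }^{i}\circ h\circ \pi \right) \partial _{i}$ recorded in Section 2 --- and this vanishes by condition (1) of \emph{Definition \ref{AP}}. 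Hence $J\equiv 0$, so $\left( \Gamma \left( \left( \rho ,\eta \right) TE,\left( \rho ,\eta \right) \tau _{E},E\right) ,+,\cdot ,\left[ ,\right] _{\left( \rho ,\eta \right) TE}\right) $ is a Lie $\mathcal{F}\left( E\right) $-algebra, and together with $\widetilde{\rho}$ this gives the Lie algebroid structure on $\left( \left( \rho ,\eta \right) TE,\left( \rho ,\eta \right) \tau _{E},E\right) $. I expect the main obstacle to be precisely this cyclic expansion --- in particular, identifying the resulting coefficient with the pull-back of the $F$-Jacobiator, for which one must use the $y^{\circ }$-independence of the $L_{\alpha \beta }^{\gamma }$ and the relation between $\rho _{\alpha }^{i}$ and $\theta _{\alpha }^{\tilde{\imath}}$.
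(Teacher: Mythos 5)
Your proof is correct, and it takes a genuinely different (and more careful) route than the paper. The paper introduces the anchor $\left( \tilde{\rho},Id_{E}\right) $ exactly as you do, but then \emph{defines} the bracket componentwise, as $\left[ Z_{1}^{\alpha }T_{\alpha },Z_{2}^{\beta }T_{\beta }\right] _{\pi ^{\ast }\left( h^{\ast }F\right) }\oplus \left[ \tilde{\rho}\left( X\right) ,\tilde{\rho}\left( X^{\prime }\right) \right] _{TE}$, and closes with ``it is easy to check''; you instead generate the bracket on the frame $\left\{ \tilde{\partial}_{\alpha },\dot{\tilde{\partial}}_{\circ }\right\} $ and extend by the Leibniz rule relative to $\tilde{\rho}$. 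The two prescriptions agree on the frame and whenever the coefficients are independent of $y^{\circ }$, but on general sections they differ in the $\pi ^{\ast }\left( h^{\ast }F\right) $-component by $\left( Y_{1}^{\circ }\dot{\partial}_{\circ }Z_{2}^{\gamma }-Y_{2}^{\circ }\dot{\partial}_{\circ }Z_{1}^{\gamma }\right) T_{\gamma }$, and your remark that the componentwise bracket fails to preserve $\Gamma \left( \left( \rho ,\eta \right) TE,\left( \rho ,\eta \right) \tau _{E},E\right) $ is accurate: for $X=\dot{\tilde{\partial}}_{\circ }$ and $X^{\prime }=f\tilde{\partial}_{\beta }$ with $\dot{\partial}_{\circ }f\neq 0$ the componentwise formula produces a zero $F$-part together with a nonzero horizontal $TE$-part, which is not of the form $W^{\gamma }\tilde{\partial}_{\gamma }+V^{\circ }\dot{\tilde{\partial}}_{\circ }$. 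So your version of the bracket is the one for which the theorem is literally true, and the verification you supply --- the anchor-morphism property on the frame via the structure identity for $L_{\alpha \beta }^{\gamma }$ and $\rho _{\alpha }^{i}$, tensoriality of the Jacobiator once the anchor is a morphism, vanishing of the mixed triples because $L_{\alpha \beta }^{\gamma }\circ h\circ \pi $ is $y^{\circ }$-independent, and reduction of the pure $\tilde{\partial}$-triple to the Jacobi identity of $\left[ ,\right] _{F,h}$ through $\rho _{\alpha }^{i}\circ h\cdot \partial \left( f\circ h\right) /\partial x^{i}=\left( \theta _{\alpha }^{\tilde{\imath}}\,\partial f/\partial \varkappa ^{\tilde{\imath}}\right) \circ h$ --- is precisely the content the paper leaves unwritten, and it is sound. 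The paper's formulation buys brevity by reducing everything to two already-known brackets; yours buys a bracket that demonstrably closes on the submodule. The only small addition worth making is a one-line check that the frame-plus-Leibniz definition is independent of the chosen local frame, which follows from the transformation rules for $L_{\alpha \beta }^{\gamma }$ and $\rho _{\alpha }^{i}$ recorded in Section 2.
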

\begin{proof}
We consider the vector bundles morphism $\left( \tilde{\rho},Id_{E}\right)$ from \\$\left( \left( \rho
,\eta \right) TE,\left( \rho ,\eta \right) \tau _{E},E\right)$ to $\left( TE,\tau _{E},E\right)$, where
\begin{equation*}
\begin{array}{rcl}
\left( \rho ,\eta \right) TE\!\!\! & \!\!^{\underrightarrow{\tilde{\ \ \rho
\ \ }}}\!\!\! & \!\!TE\vspace*{2mm} \\
\left( Z^{\alpha }\tilde{\partial}_{\alpha}+Y^\circ\dot{\tilde{%
\partial}}_\circ\right) \!(u_{x})\!\!\!\! & \!\!\longmapsto \!\!\! &
\!\!\left( \!Z^{\alpha }\!\left( \rho _{\alpha }^{i}{\circ }h{\circ }\pi
\!\right) \!\partial _{i}{+}Y^\circ\dot{\partial_\circ}\right) \!(u_{x}).
\end{array}
\end{equation*}
Also, we define the bracket
\begin{equation*}
\begin{array}{ccc}
\Gamma \left( \left( \rho ,\eta \right) TE,\left( \rho ,\eta \right) \tau
_{E},E\right) ^{2} & ^{\underrightarrow{~\ \ \left[ ,\right] _{\left( \rho
,\eta \right) TE}~\ \ }} & \Gamma \left( \left( \rho ,\eta \right) TE,\left(
\rho ,\eta \right) \tau _{E},E\right),
\end{array}%
\end{equation*}%
by
\begin{equation*}
\begin{array}{l}
\left[ Z_{1}^{\alpha }\tilde{\partial}_{\alpha }+Y^\circ_{1}\dot{%
\tilde{\partial}}_\circ,Z_{2}^{\beta }\tilde{\partial}_{\beta }+Y^\circ_{2}%
\dot{\tilde{\partial}}_\circ\right] _{\left( \rho ,\eta \right) TE}
\displaystyle=\left[ Z_{1}^{\alpha }T_{\alpha },Z_{2}^{\beta }T_{\beta }%
\right] _{\pi ^{\ast }\left( h^{\ast }F\right) }\\\oplus \left[ Z_{1}^{\alpha
}\left( \rho _{\alpha }^{i}\circ h\circ \pi \right) \partial _{i}+Y^\circ_{1}%
\dot{\partial}_\circ,\right.
\hfill \displaystyle\left. Z_{2}^{\beta }\left( \rho _{\beta }^{j}\circ
h\circ \pi \right) \partial _{j}+Y^\circ_{2}\dot{\partial}_\circ\right] _{TE},%
\end{array}%
\end{equation*}%
for any sections $Z_{1}^{\alpha }\tilde{\partial}_{\alpha }+Y^\circ_{1}\dot{\tilde{\partial}}_\circ$ and $Z_{2}^{\beta }\tilde{\partial}%
_{\beta }+Y^\circ_{2}\dot{\tilde{\partial}}_\circ$ of $\left( \left( \rho
,\eta \right) TE,\left( \rho ,\eta \right) \tau _{E},E\right)$. It is easy to check that $\left( \left[ ,\right] _{\left( \rho ,\eta \right)
TE},\left( \tilde{\rho},Id_{E}\right) \right) $ is a Lie algebroid
structure for the vector bundle $\left( \left( \rho ,\eta \right) TE,\left(
\rho ,\eta \right) \tau _{E},E\right)$.
\end{proof}
\begin{remark}
In particular, if $h=Id_{M},$ then the Lie
algebroid
\begin{equation*}
\begin{array}{c}
\left( \left( \left( Id_{TM},Id_{M}\right) TE,\left( Id_{TM},Id_{M}\right)
\tau _{E},E\right) ,\left[ ,\right] _{\left( Id_{TM},Id_{M}\right)
TE},\left( \widetilde{Id_{TM}},Id_{E}\right) \right),
\end{array}%
\end{equation*}%
is isomorphic with the usual Lie algebroid
\begin{equation*}
\begin{array}{c}
\left( \left( TE,\tau _{E},E\right) ,\left[ ,\right] _{TE},\left(
Id_{TE},Id_{E}\right) \right) .%
\end{array}%
\end{equation*}
This is a reason for which the Lie algebroid
\[
\begin{array}{c}
\left( \left( \left( \rho ,\eta \right) TE,\left( \rho ,\eta \right) \tau
_{E},E\right) ,\left[ ,\right] _{\left( \rho ,\eta \right) TE},\left( \tilde{%
\rho},Id_{E}\right) \right),
\end{array}
\]
 is called the Lie algebroid generalized tangent bundle.
\end{remark}
\subsection{$\left( \protect\rho ,\protect\eta \right) $-connections}
We consider the vector bundles morphism $\left( \left( \rho ,\eta
\right) \pi!, Id_{E}\right) $ given by the commutative diagram%
\begin{equation*}
\begin{array}{rcl}
\left( \rho ,\eta \right) TE & ^{\underrightarrow{~\ \left( \rho ,\eta
\right) \pi !~\ }} & \pi ^{\ast }\left( h^{\ast }F\right) \\
\left( \rho ,\eta \right) \tau _{E}\downarrow ~ &  & ~\downarrow \pi^{\ast
}\left( h^{\ast }\nu \right) \\
E~\  & ^{\underrightarrow{~Id_{E}~}} & ~\ E%
\end{array}%
\end{equation*}
defined by
\begin{equation*}
\begin{array}{c}
\left( \rho ,\eta \right) \pi!\left( \left( Z^{\alpha }\tilde{\partial}%
_{\alpha }+Y^\circ\dot{\tilde{\partial}}_\circ\right) \left(
u_{x}\right) \right) =\left( Z^{\alpha }T_{\alpha }\right) \left(
u_{x}\right) ,%
\end{array}%
\end{equation*}%
for any $Z^{\alpha }\tilde{\partial}_{\alpha }+Y^\circ\dot{\tilde{%
\partial}}_\circ\in \Gamma \left( \left( \rho ,\eta \right) TE,\left( \rho
,\eta \right) \tau _{E},E\right)$. Using this morphism, we obtain the tangent $\left( \rho
,\eta \right)$-application $\left( \left( \rho ,\eta \right) T\pi
,h\circ \pi \right) $ from \\$\left( \left( \rho ,\eta \right) TE,\left( \rho
,\eta \right) \tau _{E},E\right) $ to $\left( F,\nu ,N\right) $.
\begin{definition}
The kernel of the tangent $\left( \rho ,\eta \right)
$-application is denoted by
\begin{equation*}
\left( V\left( \rho ,\eta \right) TE,\left( \rho ,\eta \right) \tau
_{E},E\right),
\end{equation*}%
and it is called the vertical subbundle.
\end{definition}
It is remarkable that $\Big\{\dot{\tilde{\partial}}_\circ\Big\} $ is a base of the $\mathcal{F}\left( E\right) $%
-module
\[
\left( \Gamma \left( V\left( \rho ,\eta \right) TE,\left( \rho ,\eta \right)
\tau _{E},E\right) ,+,\cdot \right).
\]
The fiber bundles morphism $\left( \Pi, \pi \right) $
defined by the commutative diagram%
\begin{equation*}
\begin{array}{rcl}
V\left( \rho ,\eta \right) TE & ^{\underrightarrow{~\Pi ~\ }} & ~\ E \\
\left( \rho ,\eta \right) \tau _{E}\downarrow ~ &  & ~\downarrow \pi \\
E~\  & ^{\underrightarrow{~~~\pi ~~}} & ~\ M%
\end{array}
\end{equation*}%
such that the components of the image of the vector $Y\dot{%
\tilde{\partial}}\left( u_{x}\right) $ are the real number $Y(u_x)$ is called the \emph{%
canonical projection fiber bundle morphism.}\medskip
\begin{proposition}
The short sequence of vector bundles
\begin{equation*}
\begin{array}{ccccccccc}
0 & \hookrightarrow & V\left( \rho ,\eta \right) TE & \hookrightarrow &
\left( \rho ,\eta \right) TE & ^{\underrightarrow{~\ \left( \rho ,\eta
\right) \pi !~\ }} & \pi^{\ast }\left( h^{\ast }F\right) & ^{%
\underrightarrow{}} & 0 \\
\downarrow &  & \downarrow &  & \downarrow &  & \downarrow &  & \downarrow
\\
E & ^{\underrightarrow{~Id_{E}~}} & E & ^{\underrightarrow{~Id_{E}~}} & E &
^{\underrightarrow{~Id_{E}~}} & E & ^{\underrightarrow{~Id_{E}~}} & E%
\end{array}
\end{equation*}
is exact.
\end{proposition}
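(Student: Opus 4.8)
The plan is to check exactness fiber by fiber. Since all three arrows are morphisms of vector bundles over $\mathrm{Id}_{E}$, exactness of the sequence is equivalent to exactness of the induced linear sequence over each point $u_{x}\in E$; over such a point the fiber of $\left(\rho,\eta\right)TE$ is spanned by the linearly independent vectors $\tilde{\partial}_{1}(u_{x}),\ldots,\tilde{\partial}_{p}(u_{x}),\dot{\tilde{\partial}}_\circ(u_{x})$, in terms of which every map in the sequence is given by an explicit formula. So the argument is essentially bookkeeping in the natural $\left(\rho,\eta\right)$-base, once one has noted that $\left(\rho,\eta\right)\pi!$ is well defined and $\mathcal{F}(E)$-linear, which is immediate from the block-triangular form of the coordinate-transformation matrix on $\left(\rho,\eta\right)TE$ recorded above.

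First I would dispose of the two easy nodes: the image of $0\hookrightarrow V\left(\rho,\eta\right)TE$ is trivial, and the arrow $V\left(\rho,\eta\right)TE\hookrightarrow\left(\rho,\eta\right)TE$ is by construction a vector subbundle inclusion, hence injective. The one node needing a word is exactness at $\left(\rho,\eta\right)TE$, i.e. $\mathrm{Im}\bigl(V\left(\rho,\eta\right)TE\hookrightarrow\left(\rho,\eta\right)TE\bigr)=\ker\bigl(\left(\rho,\eta\right)\pi!\bigr)$. Since $\dot{\tilde{\partial}}_\circ$ is a local base of $\Gamma\bigl(V\left(\rho,\eta\right)TE,\left(\rho,\eta\right)\tau_{E},E\bigr)$, the left-hand side at $u_{x}$ is the line through $\dot{\tilde{\partial}}_\circ(u_{x})$; and from $\left(\rho,\eta\right)\pi!\bigl((Z^{\alpha}\tilde{\partial}_{\alpha}+Y^\circ\dot{\tilde{\partial}}_\circ)(u_{x})\bigr)=(Z^{\alpha}T_{\alpha})(u_{x})$ together with the linear independence of $T_{1},\ldots,T_{p}$, a vector lies in $\ker\bigl(\left(\rho,\eta\right)\pi!\bigr)$ at $u_{x}$ iff all $Z^{\alpha}$ vanish, i.e. iff it is a multiple of $\dot{\tilde{\partial}}_\circ(u_{x})$; so the two subspaces coincide. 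If one prefers to read $V\left(\rho,\eta\right)TE$ as $\ker\bigl(\left(\rho,\eta\right)T\pi\bigr)$ rather than $\ker\bigl(\left(\rho,\eta\right)\pi!\bigr)$, one observes that $\left(\rho,\eta\right)T\pi$ factors as $\left(\rho,\eta\right)\pi!$ followed by the pull-back projection $\pi^{\ast}(h^{\ast}F)\to F$ covering $h\circ\pi$, which is a fiberwise isomorphism, so the two kernels agree.

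Finally, exactness at $\pi^{\ast}(h^{\ast}F)$ is surjectivity of $\left(\rho,\eta\right)\pi!$, forced by the trailing arrow to $0$; this is immediate since $\left(\rho,\eta\right)\pi!(\tilde{\partial}_{\alpha})=T_{\alpha}$ for each $\alpha$ and $\{T_{\alpha}\}_{\alpha=1}^{p}$ is a basis of the fibers of $\pi^{\ast}(h^{\ast}F)$, so an arbitrary $(Z^{\alpha}T_{\alpha})(u_{x})$ is the image of $(Z^{\alpha}\tilde{\partial}_{\alpha})(u_{x})$. I do not anticipate any genuine obstacle here; the only point that demands a little care is the identification of the two descriptions of the vertical subbundle used above, and keeping the base-point fixed so that all vertical maps of the diagram are $\mathrm{Id}_{E}$.
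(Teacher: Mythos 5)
Your argument is correct, and it is exactly the verification the paper leaves implicit: the Proposition is stated without proof, and the fiberwise bookkeeping in the frame $\left(\tilde{\partial}_{\alpha },\dot{\tilde{\partial}}_\circ\right)$ — injectivity of the inclusion, $\ker\left(\left(\rho,\eta\right)\pi!\right)=\operatorname{span}\{\dot{\tilde{\partial}}_\circ\}=V\left(\rho,\eta\right)TE$, and surjectivity from $\left(\rho,\eta\right)\pi!\left(\tilde{\partial}_{\alpha}\right)=T_{\alpha}$ — is all that is needed. Your extra remark reconciling $\ker\left(\left(\rho,\eta\right)T\pi\right)$ with $\ker\left(\left(\rho,\eta\right)\pi!\right)$ via the fiberwise isomorphism $\pi^{\ast}\left(h^{\ast}F\right)\to F$ over $h\circ\pi$ correctly handles the only point the paper's definitions leave slightly ambiguous.
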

\begin{definition}
A manifolds morphism $\left( \rho
,\eta \right) \Gamma $ from $\left( \rho ,\eta \right) TE$ to $V\left(
\rho ,\eta \right) TE$ defined by
\begin{equation*}
\begin{array}{c}
\left( \rho ,\eta \right) \Gamma \left( Z^{\gamma }\tilde{\partial}_{\gamma
}+Y^\circ\dot{\tilde{\partial}}_\circ\right) \left( u_{x}\right)
=\left( Y^\circ+\left( \rho ,\eta \right) \Gamma^\circ_{\gamma }Z^{\gamma
}\right) \dot{\tilde{\partial}}_\circ\left( u_{x}\right) ,%
\end{array}
\end{equation*}%
such that the vector bundles morphism $\left( \left( \rho ,\eta
\right) \Gamma ,Id_{E}\right) $ is a split to the left in the above exact
sequence, is called $\left( \rho ,\eta \right) $-connection for
the Kaluza-Klein bundle $\left( E, \pi ,M\right) $.
\end{definition}
In particular case, the $\left( \rho ,Id_{M}\right) $-connection is called $\rho $-connection and it is denoted by $\rho \Gamma $ and in the classical case, the $\left(
Id_{TM},Id_{M}\right) $-connection is called connection and it is denoted by $\Gamma $.
\begin{theorem}
If $\left( \rho ,\eta \right) \Gamma $ is a $\left( \rho ,\eta \right) $-connection for the Kaluza-Klein bundle \\$%
\left( E, \pi ,M\right) ,$\emph{\ then its components satisfy the law of
transformation }%
\begin{equation}\label{AP2}
\begin{array}{c}
\left( \rho ,\eta \right) \Gamma _{\gamma'}^{\circ'}=\frac{\partial {y^\circ}'}{\partial y^\circ}\left[ \rho _{\gamma }^{k}\circ h\circ \pi \frac{\partial
y^\circ}{\partial x^{k}}+\left( \rho ,\eta \right) \Gamma^\circ_{\gamma }\right]
\Lambda _{\gamma'}^{\gamma }\circ h\circ \pi.
\end{array}
\end{equation}
\end{theorem}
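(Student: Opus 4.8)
The plan is to use that a $(\rho,\eta)$-connection is, by the very hypothesis, a globally well-defined morphism $((\rho,\eta)\Gamma, Id_{E})$ from $(\rho,\eta)TE$ onto the vertical subbundle $V(\rho,\eta)TE$, so that evaluating it on a fixed section and expressing the result in two overlapping fibred charts must yield the same vertical vector. First I would take an arbitrary section $S=Z^{\gamma}\tilde{\partial}_{\gamma}+Y^{\circ}\dot{\tilde{\partial}}_{\circ}$ of $((\rho,\eta)TE,(\rho,\eta)\tau_{E},E)$ and rewrite it in the primed chart as $S=Z^{\gamma'}\tilde{\partial}_{\gamma'}+Y^{\circ'}\dot{\tilde{\partial}}_{\circ'}$. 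From the matrix of coordinate transformation on $(\rho,\eta)TE$ displayed just before the statement, the fibre coordinates obey $Z^{\gamma'}=(\Lambda_{\gamma}^{\gamma'}\circ h\circ\pi)Z^{\gamma}$ and $Y^{\circ'}=(\rho_{\gamma}^{i}\circ h\circ\pi)\frac{\partial {y^\circ}'}{\partial x^{i}}Z^{\gamma}+\frac{\partial {y^\circ}'}{\partial y^\circ}Y^{\circ}$, whereas the vertical base section transforms by $\dot{\tilde{\partial}}_{\circ}=\frac{\partial {y^\circ}'}{\partial y^\circ}\dot{\tilde{\partial}}_{\circ'}$ (indeed $\dot{\tilde{\partial}}_{\circ}=0_{\pi^{\ast}(h^{\ast}F)}\oplus\dot{\partial}_{\circ}$ and $\dot{\partial}_{\circ}=\frac{\partial {y^\circ}'}{\partial y^\circ}\dot{\partial}_{\circ'}$).

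Next I would apply the defining formula of the connection in each chart: in the primed one $(\rho,\eta)\Gamma(S)=(Y^{\circ'}+(\rho,\eta)\Gamma_{\gamma'}^{\circ'}Z^{\gamma'})\dot{\tilde{\partial}}_{\circ'}$, and in the unprimed one $(\rho,\eta)\Gamma(S)=(Y^{\circ}+(\rho,\eta)\Gamma_{\gamma}^{\circ}Z^{\gamma})\dot{\tilde{\partial}}_{\circ}=(Y^{\circ}+(\rho,\eta)\Gamma_{\gamma}^{\circ}Z^{\gamma})\frac{\partial {y^\circ}'}{\partial y^\circ}\dot{\tilde{\partial}}_{\circ'}$. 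Since $\{\dot{\tilde{\partial}}_{\circ'}\}$ is a basis of the vertical $\mathcal{F}(E)$-module, the two scalar coefficients coincide; inserting the transformation laws for $Z^{\gamma'}$ and $Y^{\circ'}$, the $Y^{\circ}$-terms cancel, and as $Z^{\gamma}$ is arbitrary I may equate the coefficients of $Z^{\gamma}$ to obtain
\[
(\rho_{\gamma}^{i}\circ h\circ\pi)\frac{\partial {y^\circ}'}{\partial x^{i}}+(\rho,\eta)\Gamma_{\gamma'}^{\circ'}(\Lambda_{\gamma}^{\gamma'}\circ h\circ\pi)=\frac{\partial {y^\circ}'}{\partial y^\circ}(\rho,\eta)\Gamma_{\gamma}^{\circ}.
\]

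Finally I would clean this up. Differentiating the fibre rule ${y^\circ}'=\frac{\partial {y^\circ}'}{\partial y^\circ}y^\circ$ with respect to $x^{k}$, and recalling that the transition coefficient $\frac{\partial {y^\circ}'}{\partial y^\circ}$ depends only on the base point, yields the identity $\frac{\partial {y^\circ}'}{\partial y^\circ}\frac{\partial y^\circ}{\partial x^{k}}=-\frac{\partial {y^\circ}'}{\partial x^{k}}$; using it to replace $\frac{\partial {y^\circ}'}{\partial x^{i}}$ and factoring out $\frac{\partial {y^\circ}'}{\partial y^\circ}$ rewrites the previous relation as $(\rho,\eta)\Gamma_{\gamma'}^{\circ'}(\Lambda_{\gamma}^{\gamma'}\circ h\circ\pi)=\frac{\partial {y^\circ}'}{\partial y^\circ}[\rho_{\gamma}^{k}\circ h\circ\pi\frac{\partial y^\circ}{\partial x^{k}}+(\rho,\eta)\Gamma_{\gamma}^{\circ}]$. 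Contracting both sides with $\Lambda_{\gamma'}^{\gamma}\circ h\circ\pi$ and using $\left\Vert\Lambda_{\alpha'}^{\alpha}\right\Vert=\left\Vert\Lambda_{\alpha}^{\alpha'}\right\Vert^{-1}$ then isolates $(\rho,\eta)\Gamma_{\gamma'}^{\circ'}$ and produces exactly the law of transformation (\ref{AP2}). I expect the only genuinely delicate point to be the sign bookkeeping in that differentiation identity relating $\frac{\partial {y^\circ}'}{\partial x^{k}}$ to $\frac{\partial y^\circ}{\partial x^{k}}$; everything else is linear algebra in the fibres together with the already-recorded transformation rules for $\rho_{\alpha}^{i}$ and $\Lambda$ and the linear independence of the natural $(\rho,\eta)$-base.
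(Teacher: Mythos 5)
Your proposal is correct: the paper states this theorem without proof, and your argument --- evaluating the morphism $\left(\left(\rho,\eta\right)\Gamma,Id_{E}\right)$ on an arbitrary section in two overlapping fibred charts via the displayed transition matrix, cancelling the $Y^{\circ}$-terms, and equating coefficients of $Z^{\gamma}$ --- is exactly the standard derivation one expects here. The delicate sign identity $\frac{\partial {y^\circ}'}{\partial y^\circ}\frac{\partial y^\circ}{\partial x^{k}}=-\frac{\partial {y^\circ}'}{\partial x^{k}}$ does hold (apply $\frac{\partial}{\partial x^{k}}$ at fixed ${y^\circ}'$ to ${y^\circ}'$ expressed through the unprimed chart), so the final rearrangement and contraction with $\Lambda_{\gamma'}^{\gamma}\circ h\circ\pi$ give precisely (\ref{AP2}).
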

In the particular case of Lie algebroids, $\left( \eta ,h\right)
=\left( Id_{M},Id_{M}\right)$, the above relation reduce to
\begin{equation*}
\begin{array}{c}
\rho \Gamma^{\circ'} _{\gamma'}=\frac{\partial {y^\circ}'}{\partial y^\circ}\left[ \rho _{\gamma }^{k}\circ \pi \frac{\partial y^\circ}{\partial x^{k}}+\rho\Gamma^\circ _{\gamma }\right] \Lambda _{\gamma'}^{\gamma }\circ \pi,
\end{array}
\end{equation*}
and in the classical case, $\left( \rho ,\eta ,h\right) =\left(
Id_{TM},Id_{M},Id_{M}\right)$, the relations (\ref{AP2}) reduce to
\begin{equation*}
\begin{array}{c}
\Gamma_{k'}^{\circ'}=\frac{\partial {y^\circ}'}{\partial y^\circ}\left[ \frac{\partial y^\circ}{\partial x^{k}}+\Gamma^\circ
_{k}\right] \frac{\partial x^{k}}{\partial x^{k'}}\circ \tau _{M}.%
\end{array}
\end{equation*}
\begin{definition}
Let $\left( \rho ,\eta \right) \Gamma $ be a $\left(
\rho ,\eta \right)$-connection for the Kaluza-Klein bundle $\left( E, \pi ,M\right)$. Then the kernel of the vector bundles morphism $\left( \left(
\rho ,\eta \right) \Gamma ,Id_{E}\right)$ is a vector subbundle of vector bundle $\left( \left( \rho
,\eta \right) TE,\left( \rho ,\eta \right) \tau _{E},E\right) $ which is called the horizontal subbundle and it is denoted by $\left( H\left( \rho ,\eta \right) TE,\left( \rho ,\eta \right) \tau _{E},E\right) $.
\end{definition}
We put
\begin{equation*}
\begin{array}[t]{l}
\tilde{\delta}
_{\alpha }=\tilde{\partial}_{\alpha }-\left( \rho ,\eta \right) \Gamma
_{\alpha }\dot{\tilde{\partial}}_\circ=T_{\alpha }\oplus
\left( \left( \rho _{\alpha }^{i}\circ h\circ \pi \right) \partial
_{i}-\left( \rho ,\eta \right) \Gamma^\circ _{\alpha }\dot{\partial}%
_\circ\right).
\end{array}
\end{equation*}%
 It is easy to see that $\{\tilde{\delta}_\alpha\}$ is a basis for the $\mathcal{F}\left( E\right) $%
-module
\begin{equation*}
\left( \Gamma \left( H\left( \rho ,\eta \right) TE,\left( \rho ,\eta \right)
\tau _{E},E\right) ,+,\cdot \right),
\end{equation*}
and $\left( \tilde{\delta}_{\alpha }, \dot{\tilde{\partial}}_\circ\right)$ is a basis for the $\mathcal{F}\left( E\right)$
-module
\begin{equation*}
\left( \Gamma \left(\left( \rho ,\eta \right) TE,\left( \rho ,\eta \right)
\tau _{E}, E\right) ,+,\cdot \right),
\end{equation*}
which is called the \emph{adapted }$\left( \rho ,\eta \right) $%
\emph{-base}. Moreover, the following equality holds
\begin{equation*}
\begin{array}{l}
\Gamma \left( \tilde{\rho},Id_{E}\right) \left( \tilde{\delta}_{\alpha
}\right) =\left( \rho _{\alpha }^{i}\circ h\circ \pi \right) \partial
_{i}-\left( \rho ,\eta \right) \Gamma^\circ_{\alpha }\dot{\partial}_\circ,
\end{array}
\end{equation*}
where $\left( \delta _{i},\dot{\partial}_\circ\right) $ is the adapted base
for the $\mathcal{F}\left( E\right) $-module $\left( \Gamma \left( TE,\tau
_{E},E\right) ,+,\cdot \right)$.
\begin{theorem}
The following equalities hold
\begin{align*}
\left[ \tilde{\delta}_{\alpha },\tilde{\delta}_{\beta }\right] _{\left( \rho
,\eta \right) TE}&=L_{\alpha \beta }^{\gamma }\circ \left( h\circ \pi \right)
\tilde{\delta}_{\gamma }+\left( \rho ,\eta ,h\right) \mathbb{R}^\circ_{\ \alpha
\beta }\dot{\tilde{\partial}}_\circ,\\
\left[ \tilde{\delta}_{\alpha }, \dot{\tilde{\partial}}_\circ\right]
_{\left(\rho, \eta \right) TE}&=\Gamma\left( \tilde{\rho},Id_{E}\right)
\left(\dot{\tilde{\partial}}_\circ\right) \left( \left( \rho
,\eta \right) \Gamma^\circ_{\alpha }\right)\dot{\tilde{\partial}}_\circ,%
\end{align*}
where
\begin{align*}
\left( \rho, \eta, h\right) \mathbb{R}^\circ_{\ \alpha \beta }&=\Gamma \left(
\tilde{\rho},Id_{E}\right) \left( \tilde{\delta}_{\beta }\right) \left(
\left( \rho ,\eta \right) \Gamma^\circ_{\alpha }\right)
-\Gamma \left( \tilde{\rho},Id_{E}\right) \left( \tilde{\delta}%
_{\alpha }\right) \left( \left( \rho ,\eta \right) \Gamma^\circ_{\beta
}\right)\\
&\ \ \ +\left( L_{\alpha \beta }^{\gamma }\circ h\circ \pi \right)
\left( \rho ,\eta \right) \Gamma^\circ_{\gamma }.
\end{align*}
\end{theorem}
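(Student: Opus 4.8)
The plan is to unwind the definition of $\left[ ,\right] _{\left( \rho ,\eta \right) TE}$ on the two pairs drawn from the adapted $\left( \rho ,\eta \right) $-base. Recall that on $\pi ^{\ast }\left( h^{\ast }F\right) \oplus TE$ this bracket is built componentwise, the first summand carrying the pull-back Lie algebroid bracket $\left[ ,\right] _{\pi ^{\ast }\left( h^{\ast }F\right) }$ and the second the ordinary bracket $\left[ ,\right] _{TE}$ of vector fields on $E$. Write $\tilde{\delta}_{\alpha }=T_{\alpha }\oplus X_{\alpha }$, where $X_{\alpha }:=\left( \rho _{\alpha }^{i}\circ h\circ \pi \right) \partial _{i}-\left( \rho ,\eta \right) \Gamma ^{\circ }_{\alpha }\dot{\partial}_{\circ }$ is exactly $\Gamma \left( \tilde{\rho},Id_{E}\right) \left( \tilde{\delta}_{\alpha }\right) $; likewise $\dot{\tilde{\partial}}_{\circ }=0\oplus \dot{\partial}_{\circ }$ and $\dot{\partial}_{\circ }=\Gamma \left( \tilde{\rho},Id_{E}\right) \left( \dot{\tilde{\partial}}_{\circ }\right) $ by the definition of $\tilde{\rho}$. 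With this, $\left[ \tilde{\delta}_{\alpha },\tilde{\delta}_{\beta }\right] _{\left( \rho ,\eta \right) TE}=\left[ T_{\alpha },T_{\beta }\right] _{\pi ^{\ast }\left( h^{\ast }F\right) }\oplus \left[ X_{\alpha },X_{\beta }\right] _{TE}$ and $\left[ \tilde{\delta}_{\alpha },\dot{\tilde{\partial}}_{\circ }\right] _{\left( \rho ,\eta \right) TE}=\left[ T_{\alpha },0\right] _{\pi ^{\ast }\left( h^{\ast }F\right) }\oplus \left[ X_{\alpha },\dot{\partial}_{\circ }\right] _{TE}$, so the whole problem splits into a first-summand computation and a $TE$-bracket computation.

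For the first summand I would simply invoke the defining relations of the pull-back Lie algebroid: $\left[ T_{\alpha },T_{\beta }\right] _{\pi ^{\ast }\left( h^{\ast }F\right) }=L_{\alpha \beta }^{\gamma }\circ h\circ \pi \cdot T_{\gamma }$ and, by bilinearity, $\left[ T_{\alpha },0\right] _{\pi ^{\ast }\left( h^{\ast }F\right) }=0$. For the $TE$-bracket I would expand $\left[ X_{\alpha },X_{\beta }\right] _{TE}$ in the frame $\left( \partial _{i},\dot{\partial}_{\circ }\right) $. Because $\rho _{\alpha }^{i}\circ h\circ \pi $ is pulled back from $M$, its $\dot{\partial}_{\circ }$-derivatives vanish, so the $\partial _{k}$-coefficient is $\left( \rho _{\alpha }^{i}\circ h\circ \pi \right) \partial _{i}\left( \rho _{\beta }^{k}\circ h\circ \pi \right) -\left( \rho _{\beta }^{j}\circ h\circ \pi \right) \partial _{j}\left( \rho _{\alpha }^{k}\circ h\circ \pi \right) $, which is $\left( L_{\alpha \beta }^{\gamma }\circ h\circ \pi \right) \left( \rho _{\gamma }^{k}\circ h\circ \pi \right) $ by the structure identity $\left( L_{\alpha \beta }^{\gamma }\circ h\right) \left( \rho _{\gamma }^{k}\circ h\right) =\left( \rho _{\alpha }^{i}\circ h\right) \frac{\partial \left( \rho _{\beta }^{k}\circ h\right) }{\partial x^{i}}-\left( \rho _{\beta }^{j}\circ h\right) \frac{\partial \left( \rho _{\alpha }^{k}\circ h\right) }{\partial x^{j}}$ recalled in Section 2, composed with $\pi $. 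The $\dot{\partial}_{\circ }$-coefficient collapses to $X_{\alpha }\left( -\left( \rho ,\eta \right) \Gamma ^{\circ }_{\beta }\right) -X_{\beta }\left( -\left( \rho ,\eta \right) \Gamma ^{\circ }_{\alpha }\right) $, i.e. $\Gamma \left( \tilde{\rho},Id_{E}\right) \left( \tilde{\delta}_{\beta }\right) \left( \left( \rho ,\eta \right) \Gamma ^{\circ }_{\alpha }\right) -\Gamma \left( \tilde{\rho},Id_{E}\right) \left( \tilde{\delta}_{\alpha }\right) \left( \left( \rho ,\eta \right) \Gamma ^{\circ }_{\beta }\right) $; adding and subtracting $\left( L_{\alpha \beta }^{\gamma }\circ h\circ \pi \right) \left( \rho ,\eta \right) \Gamma ^{\circ }_{\gamma }$ turns it into $\left( \rho ,\eta ,h\right) \mathbb{R}^{\circ }_{\ \alpha \beta }-\left( L_{\alpha \beta }^{\gamma }\circ h\circ \pi \right) \left( \rho ,\eta \right) \Gamma ^{\circ }_{\gamma }$.

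Reassembling, $L_{\alpha \beta }^{\gamma }\circ h\circ \pi \cdot \tilde{\delta}_{\gamma }+\left( \rho ,\eta ,h\right) \mathbb{R}^{\circ }_{\ \alpha \beta }\dot{\tilde{\partial}}_{\circ }$ has first component $L_{\alpha \beta }^{\gamma }\circ h\circ \pi \cdot T_{\gamma }$ and second component exactly the vector field just computed, which yields the first identity. For the second identity the first summand is $0$ and $\left[ X_{\alpha },\dot{\partial}_{\circ }\right] _{TE}=\dot{\partial}_{\circ }\left( \left( \rho ,\eta \right) \Gamma ^{\circ }_{\alpha }\right) \dot{\partial}_{\circ }=\Gamma \left( \tilde{\rho},Id_{E}\right) \left( \dot{\tilde{\partial}}_{\circ }\right) \left( \left( \rho ,\eta \right) \Gamma ^{\circ }_{\alpha }\right) \dot{\partial}_{\circ }$, once more because $\rho _{\alpha }^{i}\circ h\circ \pi $ is independent of the fibre coordinate $y^{\circ }$. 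I expect the one delicate point to be the bookkeeping of the $\dot{\partial}_{\circ }$-coefficient of $\left[ X_{\alpha },X_{\beta }\right] _{TE}$: one has to notice that $X_{\alpha }$, including its $\dot{\partial}_{\circ }$-part, is the anchor image $\Gamma \left( \tilde{\rho},Id_{E}\right) \left( \tilde{\delta}_{\alpha }\right) $, so that the $\dot{\partial}_{\circ }$-derivatives of the connection coefficients recombine into the derivations $\Gamma \left( \tilde{\rho},Id_{E}\right) \left( \tilde{\delta}_{\bullet }\right) \left( \left( \rho ,\eta \right) \Gamma ^{\circ }_{\bullet }\right) $ and the expression closes up in terms of $\left( \rho ,\eta ,h\right) \mathbb{R}^{\circ }_{\ \alpha \beta }$ without stray terms.
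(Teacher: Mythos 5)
Your computation is correct and is exactly the direct verification the paper leaves implicit (the theorem is stated there without proof): decompose $\tilde{\delta}_{\alpha }=T_{\alpha }\oplus \Gamma \left( \tilde{\rho},Id_{E}\right) \left( \tilde{\delta}_{\alpha }\right) $, apply the componentwise definition of $\left[ ,\right] _{\left( \rho ,\eta \right) TE}$, and use the structure identity $\left( L_{\alpha \beta }^{\gamma }\circ h\right) \left( \rho _{\gamma }^{k}\circ h\right) =\left( \rho _{\alpha }^{i}\circ h\right) \frac{\partial \left( \rho _{\beta }^{k}\circ h\right) }{\partial x^{i}}-\left( \rho _{\beta }^{j}\circ h\right) \frac{\partial \left( \rho _{\alpha }^{k}\circ h\right) }{\partial x^{j}}$ together with the fact that $\rho _{\alpha }^{i}\circ h\circ \pi $ is independent of $y^{\circ }$. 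The add-and-subtract of $\left( L_{\alpha \beta }^{\gamma }\circ h\circ \pi \right) \left( \rho ,\eta \right) \Gamma ^{\circ }_{\gamma }$ to recombine the $\dot{\partial}_{\circ }$-coefficient with the $\tilde{\delta}_{\gamma }$-term is precisely what makes $\left( \rho ,\eta ,h\right) \mathbb{R}^{\circ }_{\ \alpha \beta }$ appear, so the argument is complete.
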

If $\left( d\tilde{z}^{\alpha }, d\tilde{y}^\circ\right)$ is the natural dual $\left(
\rho ,\eta \right) $-base of the natural $\left( \rho ,\eta \right)$-base $
\left(\tilde{\partial}_{\alpha }, \dot{\tilde{\partial}}_\circ\right)$, then it is easy to check that $( d\tilde{z}^{\alpha }, \delta \tilde{y}^\circ)$ is the dual $\left(
\rho ,\eta \right) $-base of the adapted $\left( \rho ,\eta \right)$-base $\left(\tilde{\delta}_{\alpha }, \dot{\tilde{\partial}}_\circ\right)$, where
\begin{equation*}
\begin{array}{c}
\delta{\tilde{y}}^\circ=\left( \rho ,\eta \right)\Gamma^\circ_{\alpha }d\tilde{z%
}^{\alpha }+d{\tilde{y}}^\circ.
\end{array}
\end{equation*}
The base $\left( d\tilde{z}^{\alpha },\delta{\tilde{y}}^\circ\right) $
is called the \emph{adapted dual }$\left( \rho ,\eta \right) $\emph{-base.}
\subsection{Distinguished linear $\left( \protect\rho ,\protect\eta \right) $%
-connections}
Let $\left( \rho ,\eta
\right) \Gamma $ be a $\left( \rho ,\eta \right) $-connection for the Kaluza-Klein
bundle $\left( E, \pi ,M\right)$ and let
\begin{equation*}
\left( \mathcal{T}~_{q, 1}^{p, 1}\left( \left( \rho ,\eta \right) TE,\left(
\rho ,\eta \right) \tau _{E},E\right) ,+,\cdot \right),
\end{equation*}%
be the $\mathcal{F}\left( E\right) $-module of $\left(
_{q, 1}^{p, 1}\right) $-tensor fields of the generalized tangent bundle
\begin{equation*}
\left( H\left( \rho ,\eta \right) TE,\left( \rho ,\eta \right) \tau
_{E},E\right) \oplus \left( V\left( \rho ,\eta \right) TE,\left( \rho ,\eta
\right) \tau _{E},E\right) .
\end{equation*}

An arbitrarily tensor field $T$ with respect to the adapted $\left( \rho ,\eta \right)$-base $\left(\tilde{\delta}_{\alpha }, \dot{\tilde{\partial}}_\circ\right)$ is written as
\begin{equation*}
\begin{array}{c}
T=T_{\beta _{1}...\beta _{q}\circ}^{\alpha _{1}...\alpha
_{p}\circ}\tilde{\delta}_{\alpha _{1}}\otimes ...\otimes \tilde{%
\delta}_{\alpha _{p}}\otimes d\tilde{z}^{\beta _{1}}\otimes ...\otimes d%
\tilde{z}^{\beta _{q}}\otimes\dot{\tilde{\partial}}_\circ\otimes\delta\tilde{y}^{\circ}.%
\end{array}%
\end{equation*}
Let
$
\left( \mathcal{T}~\left( \left( \rho ,\eta \right) TE,\left( \rho ,\eta
\right) \tau _{E},E\right) ,+,\cdot ,\otimes \right)
$
be the tensor fields algebra of generalized tangent bundle $\left( \left(
\rho ,\eta \right) TE,\left( \rho ,\eta \right) \tau _{E},E\right) $ and let
\begin{equation*}
\begin{array}{l}
\left( X,T\right) ^{\ \underrightarrow{\left( \rho ,\eta \right) D}\,}%
\vspace*{1mm}\left( \rho ,\eta \right) D_{X}T,
\end{array}%
\end{equation*}%
be a covariant $\left( \rho ,\eta \right) $-derivative for the tensor
algebra
\begin{equation*}
\left( \mathcal{T}~\left( \left( \rho ,\eta \right) TE,\left( \rho ,\eta
\right) \tau _{E},E\right) ,+,\cdot ,\otimes \right),
\end{equation*}%
of the generalized tangent bundle
$
\left( \left( \rho ,\eta \right) TE,\left( \rho ,\eta \right) \tau
_{E},E\right)
$
which preserves the horizontal and vertical interior differential systems by
parallelism. The real local functions
\begin{equation*}
\left( \left( \rho ,\eta \right) H_{\beta \gamma }^{\alpha }, \left( \rho
,\eta \right) H^\circ_{\circ\gamma},\left( \rho ,\eta \right) V_{\beta\circ}^{\alpha
},\left( \rho ,\eta \right) V^\circ_{\circ\circ}\right),
\end{equation*}%
defined by the following equalities:
\begin{equation*}
\begin{array}{ll}
\left( \rho ,\eta \right) D_{\tilde{\delta}_{\gamma }}\tilde{\delta}_{\beta
}=\left( \rho ,\eta \right) H_{\beta \gamma }^{\alpha }\tilde{\delta}%
_{\alpha }, & \left( \rho ,\eta \right) D_{\tilde{\delta}_{\gamma }}\dot{\tilde{\partial}}_\circ=\left( \rho ,\eta \right) H^\circ_{\circ\gamma }%
\dot{\tilde{\partial}}_\circ,\\
\left( \rho ,\eta \right) D_{\dot{\tilde{\partial}}_\circ}\tilde{%
\delta}_{\beta }=\left( \rho ,\eta \right) V_{\beta\circ}^{\alpha }\tilde{\delta%
}_{\alpha }, & \left( \rho ,\eta \right) D_{\dot{\tilde{\partial}%
}_\circ}\dot{\tilde{\partial}}_\circ=\left( \rho ,\eta \right)
V^\circ_{\circ\circ}\dot{\tilde{\partial}}_\circ,
\end{array}
\end{equation*}
are the components of a linear $\left( \rho ,\eta \right) $-connection $%
\left( \left( \rho ,\eta \right) H,\left( \rho ,\eta \right) V\right) $ for
the generalized tangent bundle $\left( \left( \rho ,\eta \right) TE,\left(
\rho ,\eta \right) \tau _{E},E\right) $ which is called the \emph{%
distinguished linear }$\left( \rho ,\eta \right) $\emph{-connection.}

In the particular case of Lie algebroids, $h=Id_{M}=\eta ,$ we obtain the
\emph{distinguished linear }$\rho $\emph{-connection.} The components of a
distinguished linear $\rho $-connection $\left( \rho H,\rho V\right) $ are denoted by
\begin{equation*}
\left( \rho H_{\beta \gamma }^{\alpha },\rho H^\circ_{\circ\gamma },\rho V_{\beta\circ}^{\alpha },\rho V^\circ_{\circ\circ}\right) .
\end{equation*}
In addition, if $\rho =Id_{TM},$ then we obtain the classical \emph{%
distinguished linear connection. }The components of a distinguished linear
connection $\left( H,V\right) $ are denoted by
\begin{equation*}
\left( H_{jk}^{i},H^\circ_{\circ k},V_{j\circ}^{i},V^\circ_{\circ\circ}\right) .
\end{equation*}
\begin{theorem}
If  $((\rho ,\eta )H,(\rho ,\eta )V)$ \emph{is a
distinguished linear} $(\rho ,\eta )$-\emph{connection for the generalized
tangent bundle }$\left( \left( \rho ,\eta \right) TE,\left( \rho ,\eta
\right) \tau _{E},E\right) $\emph{, then its components satisfy the change
relations: }

\begin{equation*}
\begin{array}{ll}
\left( \rho ,\eta \right) H_{\beta'\gamma'}^{\alpha'}\!\! & =\Lambda _{\alpha }^{\alpha'}
\circ h\circ \pi \cdot \left[ \Gamma \left( \tilde{\rho},Id_{E}\right)
\left( \tilde{\delta}_{\gamma }\right) \left( \Lambda _{\beta}^{\alpha }\circ h\circ \pi \right) +\right. \vspace*{1mm} \\
& +\left. \left( \rho ,\eta \right) H_{\beta \gamma }^{\alpha }\cdot \Lambda_{\beta'}^{\beta }\circ h\circ \pi \right] \cdot \Lambda _{\gamma'}^{\gamma }\circ h\circ \pi ,\vspace*{2mm} \\
\left( \rho ,\eta \right) H^{\circ'}_{\circ'\gamma'}\!\! & =\frac{\partial {y^\circ}'}{\partial y^\circ}\cdot \left[ \Gamma \left(
\tilde{\rho},Id_{E}\right) \left(\tilde{\delta}_{\gamma }\right) \left(
\frac{\partial y^\circ}{\partial {y^\circ}'}\right) +\right. \vspace*{1mm} \\
& \left. +\left( \rho ,\eta \right) H^\circ_{\circ\gamma }\cdot \frac{\partial y^\circ}{%
\partial {y^\circ}'}\right] \cdot \Lambda _{\gamma'}^{\gamma }\circ h\circ \pi ,\vspace*{2mm} \\
\left( \rho ,\eta \right) V_{\beta'\circ'}^{\alpha'}\!\! & =\Lambda _{\alpha }^{\alpha ^{\prime }}\circ h\circ \pi \cdot \left(
\rho ,\eta \right) V_{\beta\circ}^{\alpha }\cdot \Lambda _{\beta'}^{\beta }\circ h\circ \pi \cdot \frac{\partial y^\circ}{\partial {y^\circ}'},%
\vspace*{2mm} \\
\left( \rho ,\eta \right) V^{\circ'}_{\circ'\circ'}\!\! & =\frac{\partial {y^\circ}'}{\partial y^\circ}\cdot \left( \rho ,\eta
\right) V^\circ_{\circ\circ}\cdot \frac{\partial y^\circ}{\partial {y^\circ}'}\cdot \frac{%
\partial y^\circ}{\partial {y^\circ}'}.%
\end{array}%
\end{equation*}
\end{theorem}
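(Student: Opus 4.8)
The plan is to first pin down how the adapted $\left( \rho ,\eta \right)$-base $\left( \tilde{\delta}_{\alpha },\dot{\tilde{\partial}}_\circ\right)$ transforms under a change of fibred charts, and then to extract the four change relations by substituting the transformed base into the defining equalities of $\left( \left( \rho ,\eta \right) H,\left( \rho ,\eta \right) V\right)$ and comparing coefficients in the old adapted base.

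The crucial preliminary step is to show that the adapted base transforms \emph{diagonally}: $\tilde{\delta}_{\gamma'}=\left( \Lambda _{\gamma'}^{\gamma }\circ h\circ \pi \right) \tilde{\delta}_{\gamma }$ and $\dot{\tilde{\partial}}_{\circ'}=\frac{\partial y^\circ}{\partial {y^\circ}'}\dot{\tilde{\partial}}_\circ$. To see this I would invert the (block lower-triangular) matrix of coordinate transformation on $\left( \rho ,\eta \right) TE$ displayed just above the statement, so as to re-express $\tilde{\partial}_{\gamma'}$ and $\dot{\tilde{\partial}}_{\circ'}$ in the old natural base; the inverse of a block lower-triangular matrix is again block lower-triangular, with off-diagonal block built from the diagonal ones, and one uses the chain-rule identity $\frac{\partial y^\circ}{\partial x^{k}}=-\frac{\partial y^\circ}{\partial {y^\circ}'}\frac{\partial {y^\circ}'}{\partial x^{k}}$ to put that off-diagonal block into the form $\rho _{\gamma }^{k}\circ h\circ \pi \,\frac{\partial y^\circ}{\partial x^{k}}\,\Lambda _{\gamma'}^{\gamma }\circ h\circ \pi$. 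Substituting into $\tilde{\delta}_{\gamma'}=\tilde{\partial}_{\gamma'}-\left( \rho ,\eta \right) \Gamma _{\gamma'}^{\circ'}\dot{\tilde{\partial}}_{\circ'}$ and invoking the transformation law (\ref{AP2}) for $\left( \rho ,\eta \right) \Gamma$, the inhomogeneous term coming from the matrix inversion is exactly cancelled by the $\rho _{\gamma }^{k}\circ h\circ \pi \,\frac{\partial y^\circ}{\partial x^{k}}$ term of (\ref{AP2}), leaving the asserted diagonal law. I would also record here that, by the Kaluza--Klein fiber-linear change rule ${y^\circ}'=\frac{\partial {y^\circ}'}{\partial y^\circ}\,y^\circ$, the function $\frac{\partial {y^\circ}'}{\partial y^\circ}$ (hence $\frac{\partial y^\circ}{\partial {y^\circ}'}$), as well as each $\Lambda _{\gamma'}^{\gamma }\circ h\circ \pi$, depends only on the base coordinates $x^{i}$, so that $\Gamma \left( \tilde{\rho},Id_{E}\right) \left( \dot{\tilde{\partial}}_\circ\right)$, which acts as $\partial /\partial y^\circ$, annihilates all of them.

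With this in hand each change relation follows by a direct computation. For the first, write $\left( \rho ,\eta \right) D_{\tilde{\delta}_{\gamma'}}\tilde{\delta}_{\beta'}=\left( \rho ,\eta \right) D_{\left( \Lambda _{\gamma'}^{\gamma }\circ h\circ \pi \right) \tilde{\delta}_{\gamma }}\big( \left( \Lambda _{\beta'}^{\beta }\circ h\circ \pi \right) \tilde{\delta}_{\beta }\big)$, expand using $\mathcal{F}\left( E\right)$-linearity in the first argument and the Leibniz rule $\left( \rho ,\eta \right) D_{X}\left( fs\right) =\Gamma \left( \tilde{\rho},Id_{E}\right) \left( X\right) \left( f\right) \,s+f\left( \rho ,\eta \right) D_{X}s$ in the second, replace $\left( \rho ,\eta \right) D_{\tilde{\delta}_{\gamma }}\tilde{\delta}_{\beta }=\left( \rho ,\eta \right) H_{\beta \gamma }^{\alpha }\tilde{\delta}_{\alpha }$, and compare the resulting coefficient of $\tilde{\delta}_{\alpha }$ with that in $\left( \rho ,\eta \right) H_{\beta'\gamma'}^{\alpha'}\tilde{\delta}_{\alpha'}=\left( \rho ,\eta \right) H_{\beta'\gamma'}^{\alpha'}\left( \Lambda _{\alpha'}^{\alpha }\circ h\circ \pi \right) \tilde{\delta}_{\alpha }$; multiplying through by the inverse $\Lambda _{\alpha }^{\alpha'}\circ h\circ \pi$ gives exactly the stated formula. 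The remaining three relations come identically from $\left( \rho ,\eta \right) D_{\tilde{\delta}_{\gamma'}}\dot{\tilde{\partial}}_{\circ'}$, $\left( \rho ,\eta \right) D_{\dot{\tilde{\partial}}_{\circ'}}\tilde{\delta}_{\beta'}$ and $\left( \rho ,\eta \right) D_{\dot{\tilde{\partial}}_{\circ'}}\dot{\tilde{\partial}}_{\circ'}$; in the latter two, the Leibniz term $\Gamma \left( \tilde{\rho},Id_{E}\right) \left( \dot{\tilde{\partial}}_\circ\right)$ applied to $\Lambda _{\beta'}^{\beta }\circ h\circ \pi$ and to $\frac{\partial y^\circ}{\partial {y^\circ}'}$ vanishes by the remark above, which is why those relations carry no inhomogeneous term and why the four displayed factors in the last one collapse (via $\frac{\partial {y^\circ}'}{\partial y^\circ}\cdot \frac{\partial y^\circ}{\partial {y^\circ}'}=1$) to a purely tensorial law.

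The main obstacle is this preliminary step: establishing that $\left( \tilde{\delta}_{\alpha },\dot{\tilde{\partial}}_\circ\right)$ transforms diagonally. That is where one must carefully invert the block-triangular transition matrix, rewrite its off-diagonal block via the chain rule, and recognize the cancellation against the inhomogeneous part of (\ref{AP2}); once that cancellation is secured, the rest is bookkeeping with bilinearity and the Leibniz rule.
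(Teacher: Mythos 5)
Your proposal is correct and follows the standard (and essentially unique) route: the paper states this theorem without proof, and your two-step argument --- first showing that the adapted base transforms diagonally, $\tilde{\delta}_{\gamma'}=\left( \Lambda _{\gamma'}^{\gamma }\circ h\circ \pi \right) \tilde{\delta}_{\gamma }$ and $\dot{\tilde{\partial}}_{\circ'}=\frac{\partial y^{\circ}}{\partial {y^{\circ}}'}\dot{\tilde{\partial}}_{\circ}$, via the cancellation between the off-diagonal block of the transition matrix and the inhomogeneous term of (\ref{AP2}), and then applying $\mathcal{F}(E)$-linearity and the Leibniz rule to the defining equalities and comparing coefficients --- is exactly how these change relations are obtained. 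The only caveat is notational: your computation produces $\Gamma ( \tilde{\rho},Id_{E}) ( \tilde{\delta}_{\gamma }) ( \Lambda _{\beta'}^{\alpha }\circ h\circ \pi )$ in the first relation, consistent with the paper's own Lie-algebroid specialization, so the unprimed $\beta$ in the theorem's display should be read as a typo for $\beta'$.
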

In the particular case of Lie algebroids, $\left( \eta ,h\right) =\left( Id_{M},Id_{M}\right) ,$ we obtain
\begin{equation*}
\begin{array}{ll}
\rho H_{\beta'\gamma'}^{\alpha'}\!\!&=\Lambda _{\alpha }^{\alpha'}\circ \pi \cdot \left[ \Gamma \left( \tilde{\rho},Id_{E}\right) \left(
\tilde{\delta}_{\gamma }\right) \left( \Lambda _{\beta'}^{\alpha }\circ \pi \right)+\rho H_{\beta \gamma }^{\alpha }\cdot \Lambda
_{\beta'}^{\beta }\circ \pi \right] \cdot \Lambda _{\gamma'}^{\gamma }\circ \pi,\\
\rho H^{\circ'}_{\circ'\gamma'}\!\! & =\frac{\partial {y^\circ}'}{\partial y^\circ}\cdot \left[ \Gamma
\left( \tilde{\rho}, Id_{E}\right) \left( \tilde{\delta}_{\gamma }\right)
\left( \frac{\partial y^\circ}{\partial {y^\circ}'}\right) +\rho
H^\circ_{\circ\gamma }\cdot \frac{\partial y^\circ}{\partial {y^\circ}'}\right]
\cdot \Lambda _{\gamma'}^{\gamma }\circ \pi,\\
\rho V_{\beta'\circ'}^{\alpha'}\!\! &=\Lambda _{\alpha'}^{\alpha }\circ \pi \cdot \rho V_{\beta\circ}^{\alpha }\cdot \Lambda _{\beta'}^{\beta }\circ \pi \cdot \frac{\partial y^\circ}{\partial {y^\circ}'},%
\vspace*{2mm} \\
\rho V^{\circ'}_{\circ'\circ'}\!\! & =\frac{\partial{y^\circ}'}{\partial y^\circ}\cdot \rho
V^\circ_{\circ\circ}\cdot \frac{\partial y^\circ}{\partial{y^\circ}'}\cdot \frac{%
\partial y^\circ}{\partial{y^\circ}'}.%
\end{array}
\end{equation*}
Also, in the classical case, $\left( \rho ,\eta ,h\right) =\left(
Id_{TM},Id_{M},Id_{M}\right) ,$ we deduce that the components of a
distinguished linear connection $\left( H,V\right) $ satisfy the
change relations:
\begin{equation*}
\begin{array}{cl}
H_{j'k'}^{i'}& =\frac{\partial x^{i'}}{\partial x^{i}}\circ \pi \cdot \left[ \frac{\delta }{\delta x^{k}}\left(
\frac{\partial x^{i}}{\partial x^{j'}}\circ \pi \right)+H_{jk}^{i}\cdot \frac{\partial x^{j}}{\partial x^{j'}}\circ \pi \right] \cdot \frac{\partial x^{k}}{\partial x^{k'}}\circ \pi ,\vspace*{2mm}\\
H^{\circ'}_{\circ'k'}&=\frac{\partial {y^\circ}'}{\partial y^\circ}\cdot \left[ \frac{%
\delta }{\delta x^{k}}\left( \frac{\partial y^\circ}{\partial {y^\circ}'}%
\right)+H^\circ_{\circ k}\cdot \frac{\partial y^\circ}{\partial{y^\circ}'}%
\right] \cdot \frac{\partial x^{k}}{\partial x^{k'}}\circ \pi ,\vspace*{2mm}\\
V_{j'\circ'}^{i'}& =\frac{\partial x^{i'}}{\partial x^{i}}\circ \pi \cdot V_{j\circ}^{i}\cdot \frac{\partial x^{j}}{
\partial x^{j'}}\circ \pi \cdot \frac{\partial y^\circ}{\partial {y^\circ}'},\vspace*{%
3mm}\\
V^{\circ'}_{\circ'\circ'}\!\!\!&=\frac{\partial {y^\circ}'}{\partial y^\circ}\cdot V^\circ_{\circ\circ}\cdot
\frac{\partial y^\circ}{\partial{y^\circ}'}\cdot \frac{\partial y^\circ}{%
\partial{y^\circ}'}.%
\end{array}%
\end{equation*}
\begin{example}
The local real functions
\begin{equation*}
\begin{array}[b]{c}
\left( \frac{\partial \left( \rho ,\eta \right) \Gamma^\circ_{\gamma }}{%
\partial y^\circ},\frac{\partial \left( \rho ,\eta \right) \Gamma^\circ_{\gamma
}}{\partial y^\circ},0,0\right),
\end{array}%
\end{equation*}
are the components of a distinguished linear $\left( \rho ,\eta \right) $-connection for the generalized tangent bundle $\left( \left( \rho
,\eta \right) TE,\left( \rho ,\eta \right) \tau _{E},E\right) ,$ which is called the Berwald linear $\left( \rho ,\eta \right) $-connection.
\end{example}
Note that the Berwald linear $(Id_{TM},Id_{M})$-connection is the usual \emph{%
Berwald linear connection.}
\begin{theorem}
If the generalized tangent bundle $\!(\!(\rho
,\!\eta )T\!E,\!(\rho ,\!\eta )\tau _{E},\!E\!)$ is endowed with a
distinguished linear $\!(\rho ,\!\eta )$-connection $((\rho ,\eta
)H,(\rho ,\eta )V)$, then for any
\begin{equation*}
\begin{array}[b]{c}
X=Z^{\alpha }\tilde{\delta}_{\alpha }+Y^\circ\dot{\tilde{\partial}}_\circ\in \Gamma (\!(\rho ,\eta )TE,\!(\rho ,\!\eta )\tau _{E},\!E),
\end{array}%
\end{equation*}%
and
\begin{equation*}
T\in \mathcal{T}_{q1}^{p1}\!(\!(\rho ,\eta )TE,\!(\rho ,\eta )\tau _{E},\!E),
\end{equation*}%
we obtain the formula:
\begin{equation*}
\begin{array}{l}
\left( \rho ,\eta \right) D_{X}\left( T_{\beta _{1}\cdots\beta
_{q}\circ}^{\alpha _{1}\cdots\alpha _{p}\circ}\tilde{\delta}%
_{\alpha _{1}}\otimes \cdots\otimes \tilde{\delta}_{\alpha _{p}}\otimes d\tilde{%
z}^{\beta _{1}}\otimes \cdots\otimes d\tilde{z}^{\beta _{q}}\otimes\dot{%
\tilde{\partial}}_\circ\otimes \delta \tilde{y}^{\circ}\right)\\
\hspace*{9mm}=Z^{\gamma }T_{\beta _{1}\ldots\beta _{q}\circ\mid \gamma
}^{\alpha _{1}\ldots\alpha _{p}\circ}\tilde{\delta}_{\alpha
_{1}}\otimes \cdots\otimes \tilde{\delta}_{\alpha _{p}}\otimes d\tilde{z}%
^{\beta _{1}}\otimes \cdots\otimes d\tilde{z}^{\beta _{q}}\otimes\dot{\tilde{\partial}}_\circ\otimes\delta \tilde{y}^\circ\\
\hspace*{11mm}+Y^{\circ}T_{\beta _{1}\ldots\beta _{q}\circ}^{\alpha _{1}\ldots\alpha
_{p}\circ}\mid _{\circ}\tilde{\delta}_{\alpha _{1}}\otimes \cdots\otimes\tilde{\delta}_{\alpha _{p}}\otimes d\tilde{z}^{\beta
_{1}}\otimes \cdots\otimes d\tilde{z}^{\beta _{q}}\otimes\dot{\tilde{\partial}}_{\circ}\otimes \delta \tilde{y}^{\circ},
\end{array}
\end{equation*}%
where
\begin{equation*}
\begin{array}{l}
T_{\beta _{1}\cdots\beta _{q}\circ\mid \gamma }^{\alpha _{1}\cdots\alpha
_{p}\circ}=\vspace*{2mm}\Gamma \left( \tilde{\rho},Id_{E}\right)
\left( \tilde{\delta}_{\gamma }\right) T_{\beta _{1}\ldots\beta
_{q}\circ}^{\alpha _{1}\cdots\alpha _{p}\circ} \\
\hspace*{8mm}+\left( \rho ,\eta \right) H_{\alpha \gamma }^{\alpha
_{1}}T_{\beta _{1}\cdots\beta _{q}\circ}^{\alpha \alpha _{2}\cdots\alpha
_{p}\circ}+\cdots+\vspace*{2mm}\left( \rho ,\eta \right) H_{\alpha
\gamma }^{\alpha _{p}}T_{\beta _{1}\cdots\beta _{q}\circ}^{\alpha
_{1}\cdots\alpha _{p-1}\circ} \\
\hspace*{8mm}-\left( \rho ,\eta \right) H_{\beta _{1}\gamma }^{\beta
}T_{\beta \beta _{2}\cdots\beta _{q}\circ}^{\alpha _{1}\cdots\alpha
_{p}\circ}-\cdots-\vspace*{2mm}\left( \rho ,\eta \right) H_{\beta
_{q}\gamma }^{\beta }T_{\beta _{1}\cdots\beta _{q-1}\beta
\circ}^{\alpha _{1}\cdots\alpha _{p}\circ},
\end{array}
\end{equation*}%
and
\begin{equation*}
\begin{array}{l}
T_{\beta _{1}\cdots\beta _{q}\circ}^{\alpha _{1}\cdots\alpha
_{p}\circ}\mid _{\circ}=\Gamma \left( \tilde{\rho},Id_{E}\right) \left(
\dot{\tilde{\partial}}_{\circ}\right) T_{\beta _{1}\cdots\beta
_{q}\circ}^{\alpha _{1}\cdots\alpha _{p}\circ} \\
\hspace*{8mm}+\left( \rho ,\eta \right) V_{\alpha \circ}^{\alpha _{1}}T_{\beta
_{1}\cdots\beta _{q}\circ}^{\alpha \alpha _{2}\cdots\alpha
_{p}\circ}+\cdots+\left( \rho ,\eta \right) V_{\alpha \circ}^{\alpha
_{p}}T_{\beta _{1}\cdots\beta _{q}\circ}^{\alpha _{1}\cdots\alpha
_{p-1}\alpha \circ}\vspace*{2mm} \\
\hspace*{8mm}-\left( \rho ,\eta \right) V_{\beta _{1}\circ}^{\beta }T_{\beta
\beta _{2}\cdots\beta _{q}\circ}^{\alpha _{1}\cdots\alpha
_{p}\circ}-\cdots-\left( \rho ,\eta \right) V_{\beta _{q}\circ}^{\beta
}T_{\beta _{1}\cdots\beta _{q-1}\beta \circ}^{\alpha _{1}\cdots\alpha
_{p}\circ}\vspace*{2mm}.
\end{array}
\end{equation*}
\end{theorem}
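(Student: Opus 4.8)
The plan is to reduce everything to the action of $(\rho,\eta)D$ on the adapted $(\rho,\eta)$-base and its dual, and then to expand by the Leibniz rule. First I would use the $\mathcal{F}(E)$-linearity of $(\rho,\eta)D$ in its direction argument: since $X=Z^{\gamma}\tilde{\delta}_{\gamma}+Y^{\circ}\dot{\tilde{\partial}}_{\circ}$ and $(\rho,\eta)D_{fX_{1}+X_{2}}=f(\rho,\eta)D_{X_{1}}+(\rho,\eta)D_{X_{2}}$, it suffices to compute $(\rho,\eta)D_{\tilde{\delta}_{\gamma}}T$ and $(\rho,\eta)D_{\dot{\tilde{\partial}}_{\circ}}T$ separately; the coefficient fields that come out will be, by definition, the claimed $T^{\cdots}_{\cdots\mid\gamma}$ and $T^{\cdots}_{\cdots\mid\circ}$, and reassembling them with $Z^{\gamma}$ and $Y^{\circ}$ gives the displayed formula.

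Next I would record the action of the distinguished linear $(\rho,\eta)$-connection on the adapted dual base $(d\tilde{z}^{\alpha},\delta\tilde{y}^{\circ})$. Starting from the defining equalities $(\rho,\eta)D_{\tilde{\delta}_{\gamma}}\tilde{\delta}_{\beta}=(\rho,\eta)H^{\alpha}_{\beta\gamma}\tilde{\delta}_{\alpha}$ and $(\rho,\eta)D_{\tilde{\delta}_{\gamma}}\dot{\tilde{\partial}}_{\circ}=(\rho,\eta)H^{\circ}_{\circ\gamma}\dot{\tilde{\partial}}_{\circ}$, together with compatibility of $(\rho,\eta)D$ with the duality pairings $d\tilde{z}^{\alpha}(\tilde{\delta}_{\beta})=\delta^{\alpha}_{\beta}$ and $\delta\tilde{y}^{\circ}(\dot{\tilde{\partial}}_{\circ})=1$, one is forced to
\[
(\rho,\eta)D_{\tilde{\delta}_{\gamma}}d\tilde{z}^{\alpha}=-(\rho,\eta)H^{\alpha}_{\beta\gamma}\,d\tilde{z}^{\beta},\qquad (\rho,\eta)D_{\tilde{\delta}_{\gamma}}\delta\tilde{y}^{\circ}=-(\rho,\eta)H^{\circ}_{\circ\gamma}\,\delta\tilde{y}^{\circ},
\]
and analogously in the $\dot{\tilde{\partial}}_{\circ}$-direction with $(\rho,\eta)V^{\alpha}_{\beta\circ}$ and $(\rho,\eta)V^{\circ}_{\circ\circ}$ in place of the $H$'s. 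I would also invoke that on functions $(\rho,\eta)D$ acts as the anchor-induced derivation, i.e. $(\rho,\eta)D_{\tilde{\delta}_{\gamma}}f=\Gamma(\tilde{\rho},Id_{E})(\tilde{\delta}_{\gamma})f$ and $(\rho,\eta)D_{\dot{\tilde{\partial}}_{\circ}}f=\Gamma(\tilde{\rho},Id_{E})(\dot{\tilde{\partial}}_{\circ})f$.

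Then I would write $T$ in the adapted $(\rho,\eta)$-base exactly as in the statement and apply the Leibniz rule across the $(p+q+2)$-fold tensor product. The derivative in direction $\tilde{\delta}_{\gamma}$ hits the scalar coefficient $T^{\alpha_{1}\cdots\alpha_{p}\circ}_{\beta_{1}\cdots\beta_{q}\circ}$ (producing the $\Gamma(\tilde{\rho},Id_{E})(\tilde{\delta}_{\gamma})$ term), each $\tilde{\delta}_{\alpha_{i}}$ (producing, after relabeling the summed index, the $p$ terms $+(\rho,\eta)H^{\alpha_{i}}_{\alpha\gamma}$), each $d\tilde{z}^{\beta_{j}}$ (producing the $q$ terms $-(\rho,\eta)H^{\beta}_{\beta_{j}\gamma}$), and finally the two fixed vertical factors $\dot{\tilde{\partial}}_{\circ}$ and $\delta\tilde{y}^{\circ}$; the same bookkeeping in direction $\dot{\tilde{\partial}}_{\circ}$ yields the $V$-analogue.

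The one genuinely non-formal point — the step I would handle most carefully — is the fate of the two fixed $\circ$-slots. Differentiating $\dot{\tilde{\partial}}_{\circ}$ contributes $+(\rho,\eta)H^{\circ}_{\circ\gamma}T^{\cdots}_{\cdots}$ while differentiating $\delta\tilde{y}^{\circ}$ contributes $-(\rho,\eta)H^{\circ}_{\circ\gamma}T^{\cdots}_{\cdots}$; since $K$ is one-dimensional there is exactly one upper $\circ$ and one lower $\circ$, so these two contributions cancel identically, which is precisely why no $(\rho,\eta)H^{\circ}_{\circ\gamma}$ (respectively no $(\rho,\eta)V^{\circ}_{\circ\circ}$) survives in the stated expression for $T^{\cdots}_{\cdots\mid\gamma}$ (respectively $T^{\cdots}_{\cdots\mid\circ}$). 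After this cancellation the collected terms coincide with the asserted formula, and restoring $X=Z^{\gamma}\tilde{\delta}_{\gamma}+Y^{\circ}\dot{\tilde{\partial}}_{\circ}$ by $\mathcal{F}(E)$-linearity finishes the proof. Beyond this cancellation and the sign/duality computation for the dual base, nothing but index bookkeeping remains, so I do not anticipate a serious obstacle.
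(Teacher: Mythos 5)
Your proposal is correct and is exactly the standard Leibniz-rule computation that the paper (which states this theorem without printing a proof) intends: reduce to the directions $\tilde{\delta}_{\gamma}$ and $\dot{\tilde{\partial}}_{\circ}$ by $\mathcal{F}(E)$-linearity, deduce the action on the dual base $\left( d\tilde{z}^{\alpha},\delta \tilde{y}^{\circ}\right)$ from compatibility with the duality pairing, and expand across the tensor product. You also correctly identify the only delicate point, namely that the contributions $+(\rho ,\eta )H_{\circ\gamma }^{\circ}$ from $\dot{\tilde{\partial}}_{\circ}$ and $-(\rho ,\eta )H_{\circ\gamma }^{\circ}$ from $\delta \tilde{y}^{\circ}$ (and likewise the $V$-terms in the vertical direction) cancel because there is exactly one upper and one lower $\circ$-slot, which is why no such term appears in the stated expressions for $T_{\cdots \mid \gamma }^{\cdots}$ and $T_{\cdots}^{\cdots}\mid _{\circ}$.
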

In the particular case of Lie algebroids, $\left( \eta ,h\right) =\left( Id_{M},Id_{M}\right) ,$ we obtain
\begin{equation*}
\begin{array}{l}
T_{\beta _{1}\cdots\beta _{q}\circ\mid \gamma }^{\alpha _{1}\cdots\alpha
_{p}\circ}=\vspace*{2mm}\Gamma \left( \tilde{\rho},Id_{E}\right)
\left( \tilde{\delta}_{\gamma }\right) T_{\beta _{1}\ldots\beta
_{q}\circ}^{\alpha _{1}\cdots\alpha _{p}\circ} \\
\hspace*{8mm}+\rho H_{\alpha \gamma }^{\alpha
_{1}}T_{\beta _{1}\cdots\beta _{q}\circ}^{\alpha \alpha _{2}\cdots\alpha
_{p}\circ}+\cdots+\vspace*{2mm}\rho H_{\alpha
\gamma }^{\alpha _{p}}T_{\beta _{1}\cdots\beta _{q}\circ}^{\alpha
_{1}\cdots\alpha _{p-1}\circ} \\
\hspace*{8mm}-\rho H_{\beta _{1}\gamma }^{\beta
}T_{\beta \beta _{2}\cdots\beta _{q}\circ}^{\alpha _{1}\cdots\alpha
_{p}\circ}-\cdots-\vspace*{2mm}\rho H_{\beta
_{q}\gamma }^{\beta }T_{\beta _{1}\cdots\beta _{q-1}\beta
\circ}^{\alpha _{1}\cdots\alpha _{p}\circ},
\end{array}
\end{equation*}%
and
\begin{equation*}
\begin{array}{l}
T_{\beta _{1}\cdots\beta _{q}\circ}^{\alpha _{1}\cdots\alpha
_{p}\circ}\mid _{\circ}=\Gamma \left( \tilde{\rho},Id_{E}\right) \left(
\dot{\tilde{\partial}}_{\circ}\right) T_{\beta _{1}\cdots\beta
_{q}\circ}^{\alpha _{1}\cdots\alpha _{p}\circ} \\
\hspace*{8mm}+\rho V_{\alpha \circ}^{\alpha _{1}}T_{\beta
_{1}\cdots\beta _{q}\circ}^{\alpha \alpha _{2}\cdots\alpha
_{p}\circ}+\cdots+\rho V_{\alpha \circ}^{\alpha
_{p}}T_{\beta _{1}\cdots\beta _{q}\circ}^{\alpha _{1}\cdots\alpha
_{p-1}\alpha \circ}\vspace*{2mm} \\
\hspace*{8mm}-\rho V_{\beta _{1}\circ}^{\beta }T_{\beta
\beta _{2}\cdots\beta _{q}\circ}^{\alpha _{1}\cdots\alpha
_{p}\circ}-\cdots-\rho V_{\beta _{q}\circ}^{\beta
}T_{\beta _{1}\cdots\beta _{q-1}\beta \circ}^{\alpha _{1}\cdots\alpha
_{p}\circ}\vspace*{2mm}.
\end{array}
\end{equation*}
Moreover, in the classical case, $\left( \rho ,\eta ,h\right) =\left(
Id_{TM},Id_{M},Id_{M}\right)$, we obtain
\begin{equation*}
\begin{array}{l}
T_{j_{1}\cdots j_{q}\circ\mid k}^{i_{1}\cdots i_{p}\circ}=\vspace*{%
2mm}\delta _{k}\left(
T_{j_{1}\cdots j_{q}\circ}^{i_{1}\cdots i_{p}\circ}\right) \\
\hspace*{8mm}%
+H_{ik}^{i_{1}}T_{j_{1}\cdots j_{q}\circ}^{ii_{2}\cdots i_{p}\circ}+\cdots+%
\vspace*{2mm}H_{ik}^{i_{p}}T_{\beta _{1}\cdots\beta
_{q}\circ}^{i_{1}\cdots i_{p-1}i\circ} \\
\hspace*{8mm}%
-H_{j_{1}k}^{j}T_{jj_{2}\cdots j_{q}\circ}^{i_{1}\cdots i_{p}\circ}-\cdots-%
\vspace*{2mm}H_{j_{q}k}^{j}T_{j_{1}\cdots j_{q-1}j\circ}^{\alpha
_{1}\cdots\alpha _{p}\circ},
\end{array}%
\end{equation*}%
and
\begin{equation*}
\begin{array}{l}
T_{j_{1}\cdots j_{q}\circ}^{i_{1}\cdots i_{p}\circ}\mid _{\circ}=\dot{%
\partial}_{\circ}\left( T_{\beta _{1}\cdots\beta _{q}\circ}^{\alpha
_{1}\cdots\alpha _{p}\circ}\right) \vspace*{2mm} \\
\hspace*{8mm}%
+V_{i\circ}^{i_{1}}T_{j_{1}\cdots j_{q}\circ}^{ii_{2}\cdots i_{p}\circ}+\cdots+V_{i\circ}^{i_{p}}T_{\beta _{1}\cdots\beta _{q}\circ}^{i_{1}\cdots i_{p-1}i\circ}%
\vspace*{2mm} \\
\hspace*{8mm}%
-V_{j_{1}\circ}^{j}T_{jj_{2}\cdots j_{q}\circ}^{i_{1}\cdots i_{p}\circ}-\cdots-V_{j_{q}\circ}^{j}T_{j_{1}\cdots j_{q-1}j\circ}^{i_{1}\cdots i_{p}\circ}%
\vspace*{2mm}.
\end{array}
\end{equation*}
\section{The $\left( g,h\right) $-lift of a differentiable curve}
In this section we use the Kaluza-Klein bundle such that the fibre $K$ is an 1-dimensional real vector space. Thus we obtain the Kaluza-Klein vector bundle $(E, \pi, M)$.

Let $c:I\rightarrow M$ be a differentiable curve. Then
\begin{equation*}
\begin{array}{c}
\left( E_{|\func{Im}\left( \eta \circ h\circ c\right) }, {\pi}_{|\func{Im}%
\left( \eta \circ h\circ c\right) },\func{Im}\left( \eta \circ h\circ
c\right) \right),
\end{array}%
\end{equation*}%
is a vector subbundle of $\left( E, \pi ,M\right)$.
\begin{definition}
Let
\begin{equation*}
\begin{array}{ccc}
I & ^{\underrightarrow{\ \ \dot{c}\ \ }} & E_{|\func{Im}\left( \eta \circ
h\circ c\right) },\\
t & \longmapsto & y^\circ\left( t\right) s_\circ\left( \eta \circ h\circ c\left(
t\right) \right),
\end{array}
\end{equation*}%
be a differentiable curve. If there exists a manifolds morphism $g: E\rightarrow F$ such that the following conditions are hold:

1) $(g, h)$ is a vector bundle morphism from $(E, \pi, M)$ to  $( F, \nu , N)$,

2) $\rho \circ g\circ \dot{c}\left( t\right) =\displaystyle\frac{%
d\left( \eta \circ h\circ c\right) ^{i}\left( t\right) }{dt}\frac{\partial }{%
\partial x^{i}}\left( \left( \eta \circ h\circ c\right) \left( t\right)
\right) ,$ for any $t\in I$,\\
then $\dot{c}$ is called the $\left( g,h\right) $-lift of the differentiable curve $c$.
\end{definition}
\begin{remark}
The second condition is equivalent with the
following:
\begin{equation*}
\begin{array}[b]{c}
\rho _{\alpha }^{i}\left( \eta \circ h\circ c\left( t\right) \right) \cdot
g^{\alpha }_\circ\left( h\circ c\left( t\right) \right) \cdot y^\circ\left(
t\right) =\frac{d\left( \eta \circ h\circ c\right) ^{i}\left( t\right) }{dt}%
,~i\in 1, \ldots, 4.
\end{array}%
\end{equation*}
\end{remark}
\begin{definition}
If $%
\begin{array}{ccc}
\dot{c}: I \longrightarrow E_{|\func{Im}\left( \eta \circ h\circ
c\right) },
\end{array}%
$ is a differentiable $\left( g,h\right) $-lift of the differentiable curve $%
c,$ then the section%
\begin{equation*}
\begin{array}{ccc}
\func{Im}\left( \eta \circ h\circ c\right) & ^{\underrightarrow{u\left( c,%
\dot{c}\right) }} & E_{|\func{Im}\left( \eta \circ h\circ c\right) }\vspace*{%
1mm},\\
\eta \circ h\circ c\left( t\right) & \longmapsto & \dot{c}\left( t\right),
\end{array}
\end{equation*}%
is called the canonical section associated to $\left( c,\dot{c}\right)$.
\end{definition}
\begin{definition}
If $\left( g,h\right)$ has the
components
\begin{equation*}
\begin{array}{c}
g^{\alpha}_\circ,\ \ \alpha \in 1, \ldots, p,
\end{array}%
\end{equation*}%
such that for any local vector chart $\left(
V,t_{V}\right) $ of $\left( F,\nu ,N\right)$, there exists the real
functions
\begin{equation*}
\begin{array}{ccc}
V & ^{\underrightarrow{~\ \ \ \tilde{g}^\circ_{\alpha }~\ \ }}\!\!\!\!& \mathbb{R},\ \ \alpha \in 1, \ldots, p,
\end{array}%
\end{equation*}%
such that%
\begin{equation*}
\begin{array}{c}
\tilde{g}^\circ_{\beta}\left( \varkappa \right) \cdot g_\circ^{\alpha }\left(
\varkappa \right)=\delta^\alpha_\beta,%
\end{array}%
\end{equation*}%
for any $\varkappa \in V,$ then we say that the $\left( g,h\right) $ is locally
invertible.
\end{definition}
\begin{definition}
If $%
\begin{array}{ccl}
\dot{c}: I \longrightarrow E_{|\func{Im}\left( \eta \circ
h\circ c\right) }%
\end{array}%
$ is a differentiable $\left( g,h\right) $-lift of differentiable curve $c$,
such that its component function $\left(y^\circ\right)
$ is solution for the differentiable system of equations:%
\begin{equation*}
\begin{array}[b]{c}
\frac{du^\circ}{dt}+\left( \rho ,\eta \right) \Gamma^\circ_{\alpha }\circ
u\left( c,\dot{c}\right) \circ \left( \eta \circ h\circ c\right) \cdot
g_\circ^{\alpha }\circ h\circ c\cdot u^\circ=0,%
\end{array}%
\end{equation*}%
then we say that the $\left( g,h\right) $-lift $\dot{c}$ is parallel with respect to the $\left( \rho ,\eta \right) $-connection $\left( \rho ,\eta \right) \Gamma$.
\end{definition}
\subsection{The lift of accelerations for a differentiable curve}
Let
\begin{equation*}
\begin{array}{rcl}
I & ^{\underrightarrow{\ \ \dot{c}\ \ }} & E_{|\func{Im}\left( \eta \circ
h\circ c\right) }\vspace*{1mm},\\
t & \longmapsto & y^{\circ}\left( t\right) s_{\circ}\left( \eta \circ h\circ c\left(
t\right) \right),
\end{array}
\end{equation*}%
be the $\left( g,h\right) $-lift of differentiable curve $%
\begin{array}{ccc}
I & ^{\underrightarrow{\ \ c\ \ }} & M.%
\end{array}%
$
\begin{definition}
The differentiable curve $ \ddot{c}:I \longrightarrow \left( \rho ,\eta \right) TE_{|%
\func{Im}\dot{c}}$ defined by
\begin{equation*}
\begin{array}{rcl}
\ddot{c}(t)=\left( g_\circ^{\alpha }\circ h\circ c\left(
t\right) \cdot y^\circ\left( t\right) \right) \tilde{\partial}_\alpha\left( \dot{c}\left( t\right) \right) +\frac{dy^\circ\left(
t\right) }{dt}\dot{\tilde{\partial}}_\circ\left( \dot{c}\left(
t\right) \right),
\end{array}
\end{equation*}%
is called the differentiable $\left( g,h\right) $-lift of
accelerations of the differentiable curve $c$. Moreover, the section
\begin{equation*}
\begin{array}{rcl}
\func{Im}\left(\dot{c}\right) & ^{\underrightarrow{u\left( c,\dot{c},\ddot{c%
}\right) }} & \left( \rho ,\eta \right) TE_{|\func{Im}\left( \dot{c}\right) },
\vspace*{1mm} \\
\dot{c}\left( t\right) & \longmapsto & \displaystyle\left( g_\circ^{\alpha
}\circ h\circ c\left( t\right) \cdot y^\circ\left( t\right) \right) \tilde{\partial}_\alpha\left( \dot{c}\left( t\right)
\right)+\displaystyle\frac{dy^\circ\left( t\right) }{dt}\dot{\tilde{\partial}}_\circ\left( \dot{c}\left( t\right) \right),
\end{array}%
\end{equation*}%
is called the canonical section associated to the triple $\left(
c,\dot{c},\ddot{c}\right)$.
\end{definition}
In the adapted $\left( \rho ,\eta \right)$-base $\left(\tilde{\delta}_{\alpha }, \dot{\tilde{\partial}}_\circ\right)$, we can rewrite the above equation as follows:
\begin{equation*}
\begin{array}{c}
u\left(c,\dot{c},\ddot{c}\right) \left( \dot{c}\left( t\right) \right) =%
\displaystyle\left( g_{\circ}^{\alpha }\circ h\circ c\left( t\right) y^{\circ}\left(
t\right) \right) \displaystyle\tilde{\delta}_{\alpha }%
\left(\dot{c}\left( t\right) \right) +\displaystyle\frac{dy^{\circ}\left(
t\right) }{dt}\displaystyle\dot{\tilde{\partial}}_\circ\left(
\dot{c}\left( t\right) \right) \vspace*{1mm} \\
\qquad \displaystyle+\left( \rho ,\eta \right) \Gamma _{\alpha }^{\circ}\circ
u\left(c,\dot{c}\right) \circ \eta \circ h\circ c\left( t\right) \cdot
\left(g_{\circ}^{\alpha }\circ h\circ c\left( t\right) y^{\circ}\left( t\right)
\right) \displaystyle\dot{\tilde{\partial}}_\circ\left( \dot{c}%
\left( t\right) \right),
\end{array}
\end{equation*}
where $t\in I$. It is easy to check that $u\left( c,\dot{c},\ddot{c}\right) \left( \dot{c}%
\left( t\right) \right) \in H\left( \rho ,\eta \right) TE_{|\func{Im}\left(
\dot{c}\right) }$ if and only if the component function $y^\circ$ is solution for the differentiable
equations
\begin{equation*}
\begin{array}{c}
\frac{du^{\circ}}{dt}+\left( \rho ,\eta \right) \Gamma _{\alpha }^{\circ}\circ
u\left( c,\dot{c}\right) \circ \eta \circ h\circ c\cdot \left( g_{\circ}^{\alpha
}\circ h\circ c\right) \cdot u^{\circ}=0.
\end{array}
\end{equation*}
\begin{definition}
If the component functions $\left( \left( g_{\circ}^{\alpha }\circ h\circ c\right) y^{\circ}\right)$
are solutions for the differentiable system of equations%
\begin{equation*}
\begin{array}{c}
\frac{dz^{\alpha }}{dt}+\left( \rho ,\eta \right) H_{\beta \gamma }^{\alpha
}\circ u\left( c,\dot{c}\right) \circ \eta \circ h\circ c\cdot z^{\beta
}\cdot z^{\gamma }=0,~\alpha \in 1,\cdots,p,
\end{array}%
\end{equation*}%
then the differentiable curve $\dot{c}$ is called horizontal
parallel with respect to the distinguished linear $\left( \rho ,\eta
\right) $-connection $\left( \left( \rho ,\eta \right) H,\left( \rho
,\eta \right) V\right)$. Also, if the component function $y^\circ$ is
a solution for the differentiable system of equations%
\begin{equation*}
\begin{array}{c}
\frac{du^{\circ}}{dt}+\left( \rho ,\eta \right) V_{\circ\circ}^{\circ}\circ u\left( c,\dot{c}%
\right) \circ \eta \circ h\circ c\cdot u^{\circ}\cdot u^{\circ}=0,
\end{array}
\end{equation*}%
then the differentiable curve $\dot{c}$ is called vertical
parallel with respect to the distinguished linear $\left( \rho ,\eta
\right) $-connection $\left( \left( \rho ,\eta \right) H,\left( \rho
,\eta \right) V\right)$.
\end{definition}
\section{The $\left( \protect\rho ,\protect\eta ,h\right) $-torsion and the $%
\left( \protect\rho ,\protect\eta ,h\right) $-curvature of a distinguished
linear $\left( \protect\rho ,\protect\eta \right) $-connection}
Let $(\rho, \eta)\Gamma$ be a $(\rho, \eta)$-connection for the Kaluza-Klein bundle $(E, \pi, M)$ and let $((\rho, \eta)H, (\rho, \eta)V)$ be a distinguished linear $(\rho, \eta)$-connection for the Lie algebroid generalized tangent bundle
\[
\begin{array}{c}
\left( \left( \left( \rho ,\eta \right) TE,\left( \rho ,\eta \right) \tau
_{E},E\right) ,\left[ ,\right] _{\left( \rho ,\eta \right) TE},\left( \tilde{%
\rho},Id_{E}\right) \right).
\end{array}
\]
\begin{definition}
The application
\begin{equation*}
\begin{array}{rcl}
\Gamma \left( \left( \rho ,\eta \right) TE,\left( \rho ,\eta \right) \tau
_{E},E\right) ^{2} & ^{\underrightarrow{\left( \rho ,\eta ,h\right) \mathbb{T%
}}} & \Gamma \left( \left( \rho ,\eta \right) TE,\left( \rho ,\eta \right)
\tau _{E},E\right) \vspace*{2mm},\\
\left( X,Y\right) & \longmapsto & \left( \rho ,\eta \right) \mathbb{T}\left(
X,Y\right),
\end{array}%
\end{equation*}%
defined by
\begin{equation*}
\begin{array}{c}
\left( \rho ,\eta ,h\right) \mathbb{T}\left( X,Y\right) =\left( \rho ,\eta
\right) D_{X}Y-\left( \rho ,\eta \right) D_{Y}X-\left[ X,Y\right] _{\left(
\rho ,\eta \right) TE},%
\end{array}
\end{equation*}%
for any $X,Y\in \Gamma \left( \left( \rho ,\eta \right) TE,\left( \rho ,\eta
\right) \tau _{E},E\right)$, is called the $\left( \rho ,\eta
,h\right) $-torsion associated to distinguished linear $\left( \rho
,\eta \right) $-connection $\left( \left( \rho ,\eta \right)
H,\left( \rho ,\eta \right) V\right) .$
Moreover, the applications
\begin{equation*}
\mathcal{H}\left( \rho ,\eta ,h\right) \mathbb{T}\left( \mathcal{H}\left(
\cdot \right) ,\mathcal{H}\left( \cdot \right) \right) ,\,\mathcal{V}\left(
\rho ,\eta ,h\right) \mathbb{T}\left( \mathcal{H}\left( \cdot \right) ,%
\mathcal{H}\left( \cdot \right) \right), \ldots, \mathcal{V}\left( \rho ,\eta
,h\right) \mathbb{T}\left( \mathcal{V}\left( \cdot \right) ,\mathcal{V}
\left( \cdot \right) \right)
\end{equation*}%
are called $\mathcal{H}\left( \mathcal{HH}\right) ,\,\mathcal{V}\left(
\mathcal{HH}\right) , \ldots, \mathcal{V}\left( \mathcal{VV}\right) $ $\left(
\rho ,\eta ,h\right) $-torsions associated to distinguished linear $%
\left( \rho ,\eta \right) $-connection $\left( \left( \rho ,\eta
\right) H,\left( \rho ,\eta \right) V\right) .$
\end{definition}
\begin{proposition}
The $\left( \rho ,\eta
,h\right) $-torsion $\left( \rho ,\eta ,h\right) \mathbb{T}$ associated to distinguished linear $\left( \rho ,\eta \right) $-connection $\left( \left( \rho ,\eta \right) H,\left( \rho ,\eta \right)
V\right) $, is $\mathbb{R}$-bilinear and antisymmetric in the
lower indices.
\end{proposition}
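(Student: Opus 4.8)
The plan is to read both properties off directly from the defining formula $(\rho,\eta,h)\mathbb{T}(X,Y)=(\rho,\eta)D_{X}Y-(\rho,\eta)D_{Y}X-[X,Y]_{(\rho,\eta)TE}$, using only the additivity and Leibniz rules for the covariant $(\rho,\eta)$-derivative $(\rho,\eta)D$ together with the skew-symmetry and Leibniz-type identity of the bracket $[\,,\,]_{(\rho,\eta)TE}$, both of which are available since $\big((\rho,\eta)TE,[\,,\,]_{(\rho,\eta)TE},(\tilde{\rho},Id_{E})\big)$ was shown to be a Lie algebroid.

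First I would settle the antisymmetry in the lower pair of arguments. Interchanging $X$ and $Y$ in the definition and invoking $[Y,X]_{(\rho,\eta)TE}=-[X,Y]_{(\rho,\eta)TE}$ gives immediately $(\rho,\eta,h)\mathbb{T}(Y,X)=-(\rho,\eta,h)\mathbb{T}(X,Y)$; specializing $X,Y$ to the adapted base fields $\tilde{\delta}_{\alpha},\tilde{\delta}_{\beta}$ (and the vertical generator $\dot{\tilde{\partial}}_{\circ}$) this is precisely the claimed skew-symmetry in the lower indices, and in particular $(\rho,\eta,h)\mathbb{T}(X,X)=0$.

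Next I would establish $\mathbb{R}$-bilinearity. Thanks to the antisymmetry just obtained it suffices to check linearity in the first argument. Additivity $(\rho,\eta,h)\mathbb{T}(X_{1}+X_{2},Y)=(\rho,\eta,h)\mathbb{T}(X_{1},Y)+(\rho,\eta,h)\mathbb{T}(X_{2},Y)$ is immediate from additivity of $(\rho,\eta)D$ in each of its two slots and additivity of the bracket. For homogeneity I would fix $a\in\mathbb{R}$, view it as a constant function on $E$, and use that $\Gamma(\tilde{\rho},Id_{E})(Z)(a)=0$ for every section $Z$: the Leibniz rules then force $(\rho,\eta)D_{aX}Y=a\,(\rho,\eta)D_{X}Y$, $(\rho,\eta)D_{Y}(aX)=a\,(\rho,\eta)D_{Y}X$ and $[aX,Y]_{(\rho,\eta)TE}=a\,[X,Y]_{(\rho,\eta)TE}$, whence $(\rho,\eta,h)\mathbb{T}(aX,Y)=a\,(\rho,\eta,h)\mathbb{T}(X,Y)$. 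An equivalent route is to insert $X=Z^{\alpha}\tilde{\delta}_{\alpha}+Y^{\circ}\dot{\tilde{\partial}}_{\circ}$ into the explicit expansion of $(\rho,\eta)D_{X}T$ from the preceding theorem, which is visibly $\mathbb{R}$-linear in the components $(Z^{\alpha},Y^{\circ})$ of $X$, both when $X$ is the direction and when it is the differentiated object.

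I do not expect a genuine obstacle: the assertion is a formal consequence of the definition. The one point deserving a careful line is the vanishing of the anchor on a real constant, which is exactly what upgrades the two Leibniz rules from $\mathcal{F}(E)$-behaviour with correction terms to honest $\mathbb{R}$-homogeneity. I would also note in passing that tracking instead the correction terms produced by $(\rho,\eta)D_{X}(fY)$, $(\rho,\eta)D_{fY}X$ and $[X,fY]_{(\rho,\eta)TE}$ and observing their cancellation in the combination defining $(\rho,\eta,h)\mathbb{T}$ yields the stronger fact that $(\rho,\eta,h)\mathbb{T}$ is $\mathcal{F}(E)$-bilinear, hence a genuine tensor field on the generalized tangent bundle.
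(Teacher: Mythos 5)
Your argument is correct: antisymmetry follows at once from the skew-symmetry of $\left[ ,\right]_{\left( \rho ,\eta \right) TE}$, and $\mathbb{R}$-bilinearity (indeed, as you note, full $\mathcal{F}\left( E\right)$-bilinearity, by cancellation of the Leibniz correction terms) follows formally from the defining formula. The paper states this proposition without proof, and the verification it implicitly relies on is exactly the one you give.
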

Using the notations
\begin{equation*}
\begin{array}{l}
\mathcal{H}\left( \rho ,\eta ,h\right) \mathbb{T}\left( \tilde{\delta}%
_{\gamma },\tilde{\delta}_{\beta }\right) =\left( \rho ,\eta ,h\right)
\mathbb{T}_{~\beta \gamma }^{\alpha }\tilde{\delta}_{\alpha },\vspace*{1mm}
\\
\mathcal{V}\left( \rho ,\eta ,h\right) \mathbb{T}\left( \tilde{\delta}%
_{\gamma },\tilde{\delta}_{\beta }\right) =\left( \rho ,\eta ,h\right)
\mathbb{T}_{~\beta \gamma }^{\circ}\dot{\tilde{\partial}}_{\circ},%
\vspace*{1mm} \\
\mathcal{H}\left( \rho ,\eta ,h\right) \mathbb{T}\left( \dot{%
\tilde{\partial}}_{\circ},\tilde{\delta}_{\beta }\right) =\left( \rho ,\eta
,h\right) \mathbb{P}_{~\beta \circ}^{\alpha }\tilde{\delta}_{\alpha },\vspace*{%
1mm} \\
\mathcal{V}\left( \rho ,\eta ,h\right) \mathbb{T}\left( \dot{%
\tilde{\partial}}_{\circ},\tilde{\delta}_{\beta }\right) =\left( \rho ,\eta
,h\right) \mathbb{P}_{~\beta \circ}^{\circ}\dot{\tilde{\partial}}_{\circ},%
\vspace*{1mm} \\
\ \mathcal{V}\left( \rho ,\eta ,h\right) \mathbb{T}\left( \dot{%
\tilde{\partial}}_{\circ},\dot{\tilde{\partial}}_{\circ}\right) =\left(
\rho ,\eta ,h\right) \mathbb{S}_{~\circ\circ}^{\circ}\dot{\tilde{\partial}}%
_{\circ},
\end{array}%
\end{equation*}
we can prove the following
\begin{theorem}
The $\left( \rho ,\eta
,h\right) $-torsion $\left( \rho ,\eta ,h\right) \mathbb{T}$ associated to the distinguished linear $\left( \rho ,\eta \right) $-connection $\left( \left( \rho ,\eta \right) H,\left( \rho ,\eta \right)
V\right) $, is characterized by the tensor fields with local
components:
\begin{equation*}
\begin{array}{cl}
\left( \rho ,\eta ,h\right) \mathbb{T}_{~\beta \gamma }^{\alpha } & =\left(
\rho ,\eta \right) H_{\beta \gamma }^{\alpha }-\left( \rho ,\eta \right)
H_{\gamma \beta }^{\alpha }-L_{\beta \gamma }^{\alpha }\circ h\circ \pi ,%
\vspace*{2mm} \\
\left( \rho ,\eta ,h\right) \mathbb{T}_{\ \beta \gamma }^{\circ} & =\left( \rho
,\eta ,h\right) \mathbb{R}_{\,\ \beta \gamma }^{\circ},\vspace*{2mm} \\
\left( \rho ,\eta ,h\right) \mathbb{P}_{~\beta \circ}^{\alpha } & =\left( \rho
,\eta \right) V_{\beta \circ}^{\alpha },\vspace*{2mm} \\
\left( \rho ,\eta ,h\right) \mathbb{P}_{~\beta \circ}^{\circ} & =\displaystyle\frac{%
\partial }{\partial y^{\circ}}\left( \left( \rho ,\eta \right) \Gamma _{\beta
}^{\circ}\right) -\left( \rho ,\eta \right) H_{\circ\beta }^{\circ},\vspace*{2mm} \\
\left( \rho ,\eta ,h\right) \mathbb{S}_{~\circ\circ}^{\circ} & =\left( \rho ,\eta
\right) V_{\circ\circ}^{\circ}-\left( \rho ,\eta \right) V_{\circ\circ}^{\circ}=0.%
\end{array}%
\end{equation*}
\end{theorem}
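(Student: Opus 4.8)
The plan is to evaluate $(\rho,\eta,h)\mathbb{T}$ directly on the pairs of vectors of the adapted $(\rho,\eta)$-base $\bigl(\tilde{\delta}_{\alpha},\dot{\tilde{\partial}}_{\circ}\bigr)$. First I would note that $(\rho,\eta,h)\mathbb{T}$ is $\mathcal{F}(E)$-bilinear: this follows from the Leibniz rules defining the covariant $(\rho,\eta)$-derivative $(\rho,\eta)D$ together with the derivation property of $[\,,\,]_{(\rho,\eta)TE}$ in each argument, exactly as in the classical computation of the torsion; combined with the antisymmetry from the preceding proposition this shows $(\rho,\eta,h)\mathbb{T}$ is a skew tensor field, hence completely determined by its values on the pairs $\bigl(\tilde{\delta}_{\gamma},\tilde{\delta}_{\beta}\bigr)$, $\bigl(\dot{\tilde{\partial}}_{\circ},\tilde{\delta}_{\beta}\bigr)$ and $\bigl(\dot{\tilde{\partial}}_{\circ},\dot{\tilde{\partial}}_{\circ}\bigr)$. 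For each such pair $(X,Y)$ I would substitute $(\rho,\eta,h)\mathbb{T}(X,Y)=(\rho,\eta)D_{X}Y-(\rho,\eta)D_{Y}X-[X,Y]_{(\rho,\eta)TE}$, evaluate the two covariant derivatives by means of the four defining relations $(\rho,\eta)D_{\tilde{\delta}_{\gamma}}\tilde{\delta}_{\beta}=(\rho,\eta)H_{\beta\gamma}^{\alpha}\tilde{\delta}_{\alpha}$, $(\rho,\eta)D_{\tilde{\delta}_{\gamma}}\dot{\tilde{\partial}}_{\circ}=(\rho,\eta)H_{\circ\gamma}^{\circ}\dot{\tilde{\partial}}_{\circ}$, $(\rho,\eta)D_{\dot{\tilde{\partial}}_{\circ}}\tilde{\delta}_{\beta}=(\rho,\eta)V_{\beta\circ}^{\alpha}\tilde{\delta}_{\alpha}$, $(\rho,\eta)D_{\dot{\tilde{\partial}}_{\circ}}\dot{\tilde{\partial}}_{\circ}=(\rho,\eta)V_{\circ\circ}^{\circ}\dot{\tilde{\partial}}_{\circ}$, and evaluate the bracket by the two formulas established in the theorem that computes $[\,,\,]_{(\rho,\eta)TE}$ on the adapted base, using also that $\Gamma(\tilde{\rho},Id_{E})(\dot{\tilde{\partial}}_{\circ})$ acts on functions of $E$ as $\partial/\partial y^{\circ}$, since $\tilde{\rho}(\dot{\tilde{\partial}}_{\circ})=\dot{\partial}_{\circ}$.

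Carrying this out: on $\bigl(\tilde{\delta}_{\gamma},\tilde{\delta}_{\beta}\bigr)$ I obtain $(\rho,\eta)H_{\beta\gamma}^{\alpha}\tilde{\delta}_{\alpha}-(\rho,\eta)H_{\gamma\beta}^{\alpha}\tilde{\delta}_{\alpha}-(L_{\gamma\beta}^{\alpha}\circ h\circ\pi)\tilde{\delta}_{\alpha}-(\rho,\eta,h)\mathbb{R}_{\ \gamma\beta}^{\circ}\dot{\tilde{\partial}}_{\circ}$; projecting onto the horizontal and vertical summands and using $L_{\alpha\beta}^{\gamma}=-L_{\beta\alpha}^{\gamma}$, respectively the antisymmetry of $(\rho,\eta,h)\mathbb{R}_{\ \alpha\beta}^{\circ}$ (which I would check from its defining formula and that of $L$), reads off $(\rho,\eta,h)\mathbb{T}_{\ \beta\gamma}^{\alpha}=(\rho,\eta)H_{\beta\gamma}^{\alpha}-(\rho,\eta)H_{\gamma\beta}^{\alpha}-L_{\beta\gamma}^{\alpha}\circ h\circ\pi$ and $(\rho,\eta,h)\mathbb{T}_{\ \beta\gamma}^{\circ}=(\rho,\eta,h)\mathbb{R}_{\ \beta\gamma}^{\circ}$. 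On $\bigl(\dot{\tilde{\partial}}_{\circ},\tilde{\delta}_{\beta}\bigr)$ the covariant derivatives contribute $(\rho,\eta)V_{\beta\circ}^{\alpha}\tilde{\delta}_{\alpha}-(\rho,\eta)H_{\circ\beta}^{\circ}\dot{\tilde{\partial}}_{\circ}$, while $-[\dot{\tilde{\partial}}_{\circ},\tilde{\delta}_{\beta}]_{(\rho,\eta)TE}=[\tilde{\delta}_{\beta},\dot{\tilde{\partial}}_{\circ}]_{(\rho,\eta)TE}=\frac{\partial}{\partial y^{\circ}}\bigl((\rho,\eta)\Gamma_{\beta}^{\circ}\bigr)\dot{\tilde{\partial}}_{\circ}$, whence $(\rho,\eta,h)\mathbb{P}_{\ \beta\circ}^{\alpha}=(\rho,\eta)V_{\beta\circ}^{\alpha}$ and $(\rho,\eta,h)\mathbb{P}_{\ \beta\circ}^{\circ}=\frac{\partial}{\partial y^{\circ}}\bigl((\rho,\eta)\Gamma_{\beta}^{\circ}\bigr)-(\rho,\eta)H_{\circ\beta}^{\circ}$. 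Finally, on $\bigl(\dot{\tilde{\partial}}_{\circ},\dot{\tilde{\partial}}_{\circ}\bigr)$ the two covariant-derivative terms cancel and $[\dot{\tilde{\partial}}_{\circ},\dot{\tilde{\partial}}_{\circ}]_{(\rho,\eta)TE}=0$ by antisymmetry, so $(\rho,\eta,h)\mathbb{S}_{\ \circ\circ}^{\circ}=(\rho,\eta)V_{\circ\circ}^{\circ}-(\rho,\eta)V_{\circ\circ}^{\circ}=0$.

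I do not anticipate a genuine obstacle: the argument is bookkeeping with the adapted base. The two points needing a little care are (i) the verification of the $\mathcal{F}(E)$-bilinearity of $(\rho,\eta,h)\mathbb{T}$, which is what legitimizes reducing everything to evaluation on the base; and (ii) keeping the signs straight when invoking the antisymmetry of the structure functions $L_{\alpha\beta}^{\gamma}$ and of $(\rho,\eta,h)\mathbb{R}_{\ \alpha\beta}^{\circ}$, both in the horizontal part of $(\rho,\eta,h)\mathbb{T}(\tilde{\delta}_{\gamma},\tilde{\delta}_{\beta})$ and in identifying its vertical part with $(\rho,\eta,h)\mathbb{R}_{\ \beta\gamma}^{\circ}$.
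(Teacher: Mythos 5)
Your proposal is correct and is exactly the paper's (implicit) argument: the theorem is stated without proof immediately after the defining notations, and the intended derivation is precisely your direct evaluation of $(\rho ,\eta ,h)\mathbb{T}$ on the adapted base pairs, using the four defining relations of $\left( \left( \rho ,\eta \right) H,\left( \rho ,\eta \right) V\right)$ together with the bracket formulas for $\left[ \tilde{\delta}_{\alpha },\tilde{\delta}_{\beta }\right] _{\left( \rho ,\eta \right) TE}$ and $\left[ \tilde{\delta}_{\alpha },\dot{\tilde{\partial}}_{\circ}\right] _{\left( \rho ,\eta \right) TE}$, plus the antisymmetry of $L_{\alpha \beta }^{\gamma }$ and of $\left( \rho ,\eta ,h\right) \mathbb{R}_{\ \alpha \beta }^{\circ}$. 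One small caveat: from your own intermediate expression, $-\left( L_{\gamma \beta }^{\alpha }\circ h\circ \pi \right) =+L_{\beta \gamma }^{\alpha }\circ h\circ \pi$, so antisymmetry actually produces $+L_{\beta \gamma }^{\alpha }\circ h\circ \pi$ rather than the $-L_{\beta \gamma }^{\alpha }\circ h\circ \pi$ appearing in the statement --- this sign discrepancy already exists between the paper's bracket theorem and its torsion theorem, and is a defect of the source rather than of your method.
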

In particular, when $\left( \rho ,\eta ,h\right) =\left(
Id_{TM},Id_{M},Id_{M}\right)$, we deduce the following local components of
torsion associated to distinguished linear connection $\left( H,V\right)$:
\begin{equation*}
\begin{array}{clcl}
\mathbb{T}_{~jk}^{i} & =H_{jk}^{i}-H_{kj}^{i}, & \mathbb{T}_{~jk}^{\circ} & =%
\mathbb{R}_{\,\ jk}^{\circ},\vspace*{1mm} \\
\mathbb{P}_{~j\circ}^{i} & =V_{j\circ}^{i}, & \mathbb{P}_{~j\circ}^{\circ} & =\displaystyle%
\frac{\partial \Gamma _{j}^{\circ}}{\partial y^{\circ}}-H_{\circ j}^{\circ},\vspace*{1mm} \\
\mathbb{S}_{~\circ\circ}^{\circ} & =V_{\circ\circ}^{\circ}-V_{\circ\circ}^{\circ}=0. &  &
\end{array}%
\end{equation*}
\begin{definition}
The application
\begin{equation*}
\begin{array}{rcc}
\left( \Gamma \left( \left( \rho ,\eta \right) TE,\left( \rho ,\eta \right)
\tau _{E},E\right) \right) ^{3} & ^{\underrightarrow{\ \left( \rho ,\eta
,h\right) \mathbb{R}\ }} & \Gamma \left( \left( \rho ,\eta \right) TE,\left(
\rho ,\eta \right) \tau _{E},E\right) \vspace*{3mm},\\
\left( \left( Y,Z\right) ,X\right) & \longmapsto & \left( \rho ,\eta
,h\right) \mathbb{R}\left( \left( Y,Z\right) ,X\right),
\end{array}
\end{equation*}%
defined by
\begin{align*}
\left( \rho ,\eta ,h\right) \mathbb{R}\left( Y,Z\right) X&=\left( \rho ,\eta
\right) D_{Y}\left( \left( \rho ,\eta \right) D_{Z}X\right)-\left( \rho ,\eta \right) D_{Z}\left( \left( \rho ,\eta \right)
D_{Y}X\right)\\
&\ \ \ -\left( \rho ,\eta \right) D_{\left[ Y,Z\right] _{\left( \rho
,\eta \right) TE}}X,
\end{align*}
for any $X,Y,Z\in \Gamma \left( \left( \rho ,\eta \right) TE,\left( \rho
,\eta \right) \tau _{E},E\right)$, is called the $\left( \rho ,\eta
,h\right) $-curvature associated to distinguished linear $\left(
\rho ,\eta \right) $-connection $\left( \left( \rho ,\eta \right)
H,\left( \rho ,\eta \right) V\right) .$
\end{definition}
\begin{proposition}
The $\left( \rho ,\eta
,h\right) $-curvature $\left( \rho ,\eta ,h\right) \mathbb{R}$ associated to distinguished linear $\left( \rho ,\eta \right) $-connection $\left( \left( \rho ,\eta \right) H,\left( \rho ,\eta \right)
V\right) $, is $\mathbb{R}$-linear in each argument and
antisymmetric in the first two arguments.
\end{proposition}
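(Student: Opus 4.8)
The plan is to read off all three properties directly from the defining identity
\[
\left( \rho ,\eta ,h\right) \mathbb{R}\left( Y,Z\right) X=\left( \rho ,\eta \right) D_{Y}\left( \left( \rho ,\eta \right) D_{Z}X\right)-\left( \rho ,\eta \right) D_{Z}\left( \left( \rho ,\eta \right) D_{Y}X\right)-\left( \rho ,\eta \right) D_{\left[ Y,Z\right] _{\left( \rho ,\eta \right) TE}}X ,
\]
using only that $\left( \rho ,\eta \right) D$ is a covariant $\left( \rho ,\eta \right) $-derivative and that $\left[ ,\right] _{\left( \rho ,\eta \right) TE}$ is the Lie algebroid bracket of the generalized tangent bundle constructed above. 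No genuine obstacle is expected, since the assertion is a purely formal consequence of the algebraic behaviour of $\left( \rho ,\eta \right) D$ and of that bracket; the only point that calls for a little care is the third argument, where $\left( \rho ,\eta \right) D_{W}$ is merely a Leibniz derivation and not $\mathcal{F}\left( E\right) $-linear, so one must restrict to constant scalars when invoking $\mathbb{R}$-linearity there. Concretely, for a fixed section $W$ the operator $\left( \rho ,\eta \right) D_{W}$ is additive, and for $\lambda \in \mathbb{R}$ the Leibniz rule gives $\left( \rho ,\eta \right) D_{W}\left( \lambda X\right) =\lambda \left( \rho ,\eta \right) D_{W}X$ because the derivative of a constant function vanishes; composing two such operators and applying the same remark to $\left( \rho ,\eta \right) D_{\left[ Y,Z\right] _{\left( \rho ,\eta \right) TE}}$ shows that each of the three terms, hence their alternating sum, is $\mathbb{R}$-linear in $X$.

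Next I would treat $\mathbb{R}$-linearity in the first slot $Y$; linearity in $Z$ is then symmetric. The dependence of $\left( \rho ,\eta \right) D$ on its direction argument is $\mathcal{F}\left( E\right) $-linear, $\left( \rho ,\eta \right) D_{fW}=f\left( \rho ,\eta \right) D_{W}$, so in particular $\left( \rho ,\eta \right) D_{\lambda Y_{1}+\mu Y_{2}}=\lambda \left( \rho ,\eta \right) D_{Y_{1}}+\mu \left( \rho ,\eta \right) D_{Y_{2}}$ for $\lambda ,\mu \in \mathbb{R}$. Moreover $\left[ ,\right] _{\left( \rho ,\eta \right) TE}$, being a Lie algebroid bracket, is $\mathbb{R}$-bilinear, so $\left[ \lambda Y_{1}+\mu Y_{2},Z\right] _{\left( \rho ,\eta \right) TE}=\lambda \left[ Y_{1},Z\right] _{\left( \rho ,\eta \right) TE}+\mu \left[ Y_{2},Z\right] _{\left( \rho ,\eta \right) TE}$, and combining the two gives $\left( \rho ,\eta \right) D_{\left[ \lambda Y_{1}+\mu Y_{2},Z\right] _{\left( \rho ,\eta \right) TE}}X=\lambda \left( \rho ,\eta \right) D_{\left[ Y_{1},Z\right] _{\left( \rho ,\eta \right) TE}}X+\mu \left( \rho ,\eta \right) D_{\left[ Y_{2},Z\right] _{\left( \rho ,\eta \right) TE}}X$. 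Substituting these into the defining identity and grouping the three contributions yields $\left( \rho ,\eta ,h\right) \mathbb{R}\left( \lambda Y_{1}+\mu Y_{2},Z\right) X=\lambda \left( \rho ,\eta ,h\right) \mathbb{R}\left( Y_{1},Z\right) X+\mu \left( \rho ,\eta ,h\right) \mathbb{R}\left( Y_{2},Z\right) X$.

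Finally, for antisymmetry in the first two arguments I would simply interchange $Y$ and $Z$ in the defining identity: the first two terms exchange their roles, while the antisymmetry of the Lie algebroid bracket, $\left[ Z,Y\right] _{\left( \rho ,\eta \right) TE}=-\left[ Y,Z\right] _{\left( \rho ,\eta \right) TE}$, together with $\mathbb{R}$-linearity of $\left( \rho ,\eta \right) D$ in its direction argument, gives $\left( \rho ,\eta \right) D_{\left[ Z,Y\right] _{\left( \rho ,\eta \right) TE}}X=-\left( \rho ,\eta \right) D_{\left[ Y,Z\right] _{\left( \rho ,\eta \right) TE}}X$. Collecting signs produces $\left( \rho ,\eta ,h\right) \mathbb{R}\left( Z,Y\right) X=-\left( \rho ,\eta ,h\right) \mathbb{R}\left( Y,Z\right) X$, which, together with the additivity and homogeneity established above, completes the verification of all the asserted properties.
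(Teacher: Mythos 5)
Your verification is correct: the paper states this proposition without proof, and your argument is exactly the routine check the authors leave implicit --- additivity plus homogeneity over constants for the slot $X$ (since $\left( \rho ,\eta \right) D_{W}$ annihilates constant scalars via the Leibniz rule), $\mathcal{F}\left( E\right)$-linearity of $\left( \rho ,\eta \right) D$ in its direction argument combined with $\mathbb{R}$-bilinearity of $\left[ ,\right] _{\left( \rho ,\eta \right) TE}$ for the slots $Y,Z$, and antisymmetry of that bracket for the sign change under interchange of $Y$ and $Z$. Nothing is missing.
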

Using the notations
\begin{equation*}
\begin{array}{rl}
\left( \rho ,\eta ,h\right) \mathbb{R}\left( \tilde{\delta}_{\varepsilon },%
\tilde{\delta}_{\gamma }\right) \tilde{\delta}_{\beta } & =\left( \rho ,\eta
,h\right) \mathbb{R}_{\ \beta\ \gamma \varepsilon }^{\alpha }\tilde{\delta}%
_{\alpha }, \\
\left( \rho ,\eta ,h\right) \mathbb{R}\left( \tilde{\delta}_{\varepsilon },%
\tilde{\delta}_{\gamma }\right) \dot{\tilde{\partial}}_{\circ} &
=\left( \rho ,\eta ,h\right) \mathbb{R}_{\ \circ\ \gamma \varepsilon }^{\circ}\dot{\tilde{\partial}}_{\circ}, \\
\left( \rho ,\eta ,h\right) \mathbb{R}\left( \dot{\tilde{\partial%
}}_{\circ},\tilde{\delta}_{\gamma }\right) \tilde{\delta}_{\varepsilon } &
=\left( \rho ,\eta ,h\right) \mathbb{P}_{\ \varepsilon\ \gamma \circ}^{\alpha }%
\tilde{\delta}_{\alpha }, \\
\left( \rho ,\eta ,h\right) \mathbb{R}\left( \dot{\tilde{\partial%
}}_{\circ},\tilde{\delta}_{\gamma }\right) \dot{\tilde{\partial}}_{\circ}
& =\left( \rho ,\eta ,h\right) \mathbb{P}_{\ \circ\ \gamma \circ}^{\circ}\dot{%
\tilde{\partial}}_{\circ}, \\
\left( \rho ,\eta ,h\right) \mathbb{R}\left( \dot{\tilde{\partial%
}}_{\circ},\dot{\tilde{\partial}}_{\circ}\right) \tilde{\delta}_{\beta }
& =\left( \rho ,\eta ,h\right) \mathbb{S}_{\ \beta\ \circ\circ}^{\alpha }\tilde{\delta%
}_{\alpha },\vspace*{1mm} \\
\left( \rho ,\eta ,h\right) \mathbb{R}\left( \dot{\tilde{\partial%
}}_{\circ},\dot{\tilde{\partial}}_{\circ}\right) \dot{\tilde{%
\partial}}_{\circ} & =\left( \rho ,\eta ,h\right) \mathbb{S}_{\ \circ\ \circ\circ}^{\circ}\dot{\tilde{\partial}}_{\circ},
\end{array}
\end{equation*}
we derive the following
\begin{theorem}
The $\left(\rho ,\eta
, h\right) $-curvature $\left( \rho ,\eta ,h\right) \mathbb{R}$ associated to distinguished linear $\left( \rho ,\eta \right) $-connection $\left( \left( \rho ,\eta \right) H,\left( \rho ,\eta \right)
V\right)$, is characterized by the\ tensor fields with the local
components:
\begin{equation*}
\left\{
\begin{array}{cl}
(\rho ,\eta ,h)\mathbb{R}_{\ \beta\ \gamma \varepsilon }^{\alpha }\!\!\!\!\!&=%
\Gamma (\tilde{\rho},Id_{E})\!\left( \!\tilde{\delta}_{\varepsilon
}\!\right) \!(\rho ,\eta )H_{\beta \gamma }^{\alpha }{-}\Gamma \!(\tilde{\rho%
},Id_{E})\!\left( \!\tilde{\delta}_{\gamma }\!\right) (\rho ,\eta )H_{\beta
\varepsilon }^{\alpha }\vspace*{2mm} \\
& +(\rho ,\eta )H_{\theta \varepsilon }^{\alpha }(\rho ,\eta )H_{\beta
\gamma }^{\theta }{-}(\rho ,\eta )H_{\theta \gamma }^{\alpha }(\rho ,\eta
)H_{\beta \varepsilon }^{\theta }\vspace*{2mm} \\
& -(\rho ,\eta ,h)\mathbb{R}_{\ \gamma \varepsilon }^{\circ}(\rho ,\eta
)H_{\beta \circ}^{\alpha }{-}L_{\gamma \varepsilon }^{\theta }\circ h\circ \pi
(\rho ,\eta )H_{\beta \theta }^{\alpha },\vspace*{3mm} \\
(\rho ,\eta ,h)\mathbb{R}_{\ \circ\ \gamma\varepsilon }^{\circ}\!\!\!\!\!&=\Gamma (\tilde{%
\rho}, Id_{E})\!\left( \!\tilde{\delta}_{\varepsilon }\!\right) \!(\rho ,\eta
)H_{\circ\gamma }^{\circ}{-}\Gamma (\tilde{\rho}, Id_{E})\!\left( \!\tilde{\delta}%
_{\gamma }\!\right) (\rho ,\eta )H_{\circ\varepsilon }^{\circ}\vspace*{2mm} \\
& +(\rho ,\eta )H_{\circ\varepsilon }^{\circ}(\rho ,\eta )H_{\circ\gamma }^{\circ}{-}(\rho
,\eta )H_{\circ\gamma }^{\circ}(\rho ,\eta )H_{\circ\varepsilon }^{\circ}\vspace*{2mm} \\
& -(\rho ,\eta ,h)\mathbb{R}_{\ \varepsilon \gamma }^{\circ}(\rho ,\eta
)V_{\circ\circ}^{\circ}{-}L_{\gamma \varepsilon }^{\theta }\circ h\circ \pi (\rho ,\eta
)V_{\circ\theta }^{\circ},%
\end{array}%
\right.
\end{equation*}%
\begin{equation*}
\newline
\left\{
\begin{array}{ll}
(\rho ,\eta ,h)\mathbb{P}_{\ \varepsilon\ \gamma\circ}^{\alpha}\!\!\!& =\Gamma
(\tilde{\rho}, Id_{E})(\dot{\tilde{\partial}}_{\circ})(\rho ,\eta
)H_{\varepsilon \gamma }^{\alpha }{-}\vspace*{1mm}\Gamma (\tilde{\rho}%
,Id_{E})\!\left( \!\tilde{\delta}_{\gamma }\!\right) \!(\rho ,\eta
)V_{\varepsilon \circ}^{\alpha }\vspace*{2mm} \\
& {+}(\rho ,\eta )V_{\theta \circ}^{\alpha }(\rho ,\eta )H_{\varepsilon \gamma
}^{\theta }-\vspace*{1mm}(\rho ,\eta )H_{\theta \gamma }^{\alpha }(\rho
,\eta )V_{\varepsilon \circ}^{\theta }\vspace*{2mm} \\
& +\displaystyle\dot{\partial}_\circ\left( \left( \rho ,\eta
\right) \Gamma _{\gamma }^{\circ}\right) \left( \rho ,\eta \right)
V_{\varepsilon \circ}^{\alpha },\vspace*{3mm} \\
(\rho ,\eta ,h)\mathbb{P}_{\ \circ\ \gamma\circ}^{\circ}\!\!\! & =\Gamma (\tilde{\rho}%
,Id_{E})\!\left( \!\dot{\tilde{\partial}}_{\circ}\!\right) \!(\rho
,\eta )H_{\circ\gamma }^{\circ}-\vspace*{2mm} \\
& -\Gamma (\tilde{\rho},Id_{E})\!\left( \!\tilde{\delta}_{\gamma }\!\right)
\!(\rho ,\eta )V_{\circ\circ}^{\circ}+(\rho ,\eta )V_{\circ\circ}^{\circ}(\rho ,\eta )H_{\circ\gamma
}^{\circ}-\vspace*{2mm} \\
& -(\rho ,\eta )H_{\circ\gamma }^{\circ}(\rho ,\eta )V_{\circ\circ}^{\circ}+\displaystyle\dot{\partial}_\circ(\left( \rho ,\eta )\Gamma _{\gamma }^{\circ}\right)
(\rho ,\eta )V_{\circ\circ}^{\circ},%
\end{array}%
\right. \newline
\end{equation*}%
$\newline
\newline
$%
\begin{equation*}
\left\{
\begin{array}{cl}
\left(\rho, \eta, h\right) \mathbb{S}_{\ \beta\ \circ\circ}^{\alpha }\!\!\!\!\!\!&=\Gamma
\left( \tilde{\rho},Id_{E}\right) \left(\dot{\tilde{\partial}}%
_{\circ}\right) \left(\rho, \eta \right) V_{\beta \circ}^{\alpha }\vspace*{2mm} \\
& -\Gamma \left( \tilde{\rho},Id_{E}\right) \left( \dot{\tilde{%
\partial}}_{\circ}\right) \left( \rho ,\eta \right) V_{\beta \circ}^{\alpha }+\left(
\rho ,\eta \right) V_{\theta \circ}^{\alpha }\left( \rho ,\eta \right) V_{\beta
\circ}^{\theta }\vspace*{2mm} \\
& -\left( \rho ,\eta \right) V_{\theta \circ}^{\alpha }\left( \rho ,\eta \right)
V_{\beta \circ}^{\theta }=0,\vspace*{3mm} \\
\left( \rho ,\eta ,h\right) \mathbb{S}_{\ \circ\ \circ\circ}^{\circ}\!\!\!\!\!\!&=\Gamma \left( \tilde{%
\rho},Id_{E}\right) \left( \dot{\tilde{\partial}}_{\circ}\right)
\left( \rho ,\eta \right) V_{\circ\circ}^{\circ}\vspace*{2mm} \\
& -\Gamma \left( \tilde{\rho},Id_{E}\right) \left( \dot{\tilde{%
\partial}}_{\circ}\right) \left( \rho ,\eta \right) V_{\circ\circ}^{\circ}+\left( \rho ,\eta
\right) V_{\circ\circ}^{\circ}\left( \rho ,\eta \right) V_{\circ\circ}^{\circ}\vspace*{2mm} \\
& -\left( \rho ,\eta \right) V_{\circ\circ}^{\circ}\left( \rho ,\eta \right) V_{\circ\circ}^{\circ}=0.%
\end{array}%
\right. \bigskip
\end{equation*}
\end{theorem}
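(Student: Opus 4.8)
The plan is to evaluate the $\left(\rho,\eta,h\right)$-curvature on the adapted $\left(\rho,\eta\right)$-base and to read off the local components by identifying the coefficients of $\tilde{\delta}_{\alpha}$ and $\dot{\tilde{\partial}}_{\circ}$. There are six cases, matching the six notations introduced just before the statement; in each one I would substitute the appropriate base sections into $\left(\rho,\eta,h\right)\mathbb{R}(Y,Z)X=\left(\rho,\eta\right)D_{Y}\left(\left(\rho,\eta\right)D_{Z}X\right)-\left(\rho,\eta\right)D_{Z}\left(\left(\rho,\eta\right)D_{Y}X\right)-\left(\rho,\eta\right)D_{\left[Y,Z\right]_{\left(\rho,\eta\right)TE}}X$ and expand using three ingredients: the defining relations of the distinguished linear connection, the Leibniz rule $\left(\rho,\eta\right)D_{W}(fS)=\Gamma\left(\tilde{\rho},Id_{E}\right)(W)(f)\cdot S+f\,\left(\rho,\eta\right)D_{W}S$, and the bracket relations on $\left(\rho,\eta\right)TE$ from the earlier theorem.

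For the two $\mathbb{R}$-components I would take $Y=\tilde{\delta}_{\varepsilon}$, $Z=\tilde{\delta}_{\gamma}$ and $X=\tilde{\delta}_{\beta}$ (resp. $X=\dot{\tilde{\partial}}_{\circ}$). Each iterated covariant derivative then splits, by the Leibniz rule, into an anchor-derivative term applied to the appropriate connection coefficient plus a product-of-coefficients term; the bracket $\left[\tilde{\delta}_{\varepsilon},\tilde{\delta}_{\gamma}\right]_{\left(\rho,\eta\right)TE}=L_{\varepsilon\gamma}^{\theta}\circ\left(h\circ\pi\right)\tilde{\delta}_{\theta}+\left(\rho,\eta,h\right)\mathbb{R}_{\ \varepsilon\gamma}^{\circ}\dot{\tilde{\partial}}_{\circ}$ supplies the remaining two terms, and it is precisely the $\dot{\tilde{\partial}}_{\circ}$-component of this bracket that produces the $\left(\rho,\eta,h\right)\mathbb{R}_{\ \gamma\varepsilon}^{\circ}\left(\rho,\eta\right)H_{\beta\circ}^{\alpha}$ and $\left(\rho,\eta,h\right)\mathbb{R}_{\ \varepsilon\gamma}^{\circ}\left(\rho,\eta\right)V_{\circ\circ}^{\circ}$ corrections. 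Collecting the coefficient of $\tilde{\delta}_{\alpha}$ (resp. $\dot{\tilde{\partial}}_{\circ}$) and using $L_{\varepsilon\gamma}^{\theta}=-L_{\gamma\varepsilon}^{\theta}$ together with the antisymmetry of $\left(\rho,\eta,h\right)\mathbb{R}_{\ \varepsilon\gamma}^{\circ}$ in its lower indices yields the first two displayed formulas.

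For the $\mathbb{P}$-components I would repeat the same expansion with $Y=\dot{\tilde{\partial}}_{\circ}$, $Z=\tilde{\delta}_{\gamma}$ and $X=\tilde{\delta}_{\varepsilon}$ (resp. $X=\dot{\tilde{\partial}}_{\circ}$). The only new feature is that the relevant bracket is now the mixed one, $\left[\tilde{\delta}_{\gamma},\dot{\tilde{\partial}}_{\circ}\right]_{\left(\rho,\eta\right)TE}=\Gamma\left(\tilde{\rho},Id_{E}\right)\left(\dot{\tilde{\partial}}_{\circ}\right)\left(\left(\rho,\eta\right)\Gamma_{\gamma}^{\circ}\right)\dot{\tilde{\partial}}_{\circ}=\dot{\partial}_{\circ}\left(\left(\rho,\eta\right)\Gamma_{\gamma}^{\circ}\right)\dot{\tilde{\partial}}_{\circ}$, whose single component is responsible for the terms $\dot{\partial}_{\circ}\left(\left(\rho,\eta\right)\Gamma_{\gamma}^{\circ}\right)\left(\rho,\eta\right)V_{\varepsilon\circ}^{\alpha}$ and $\dot{\partial}_{\circ}\left(\left(\rho,\eta\right)\Gamma_{\gamma}^{\circ}\right)\left(\rho,\eta\right)V_{\circ\circ}^{\circ}$ in the statement; collecting coefficients of the adapted base then gives the two $\mathbb{P}$-formulas. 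Finally, for the $\mathbb{S}$-components one has $Y=Z=\dot{\tilde{\partial}}_{\circ}$, so $\left[Y,Z\right]_{\left(\rho,\eta\right)TE}=0$ and the first two terms cancel identically, forcing $\left(\rho,\eta,h\right)\mathbb{S}_{\ \beta\ \circ\circ}^{\alpha}=0$ and $\left(\rho,\eta,h\right)\mathbb{S}_{\ \circ\ \circ\circ}^{\circ}=0$; this is also immediate from the antisymmetry of $\left(\rho,\eta,h\right)\mathbb{R}$ in its first two arguments established in the preceding proposition.

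I expect the main difficulty to be purely organizational bookkeeping rather than any conceptual point: one must consistently distinguish the derivation $\Gamma\left(\tilde{\rho},Id_{E}\right)\left(\tilde{\delta}_{\gamma}\right)$, which acts through the full anchor $\left(\rho_{\gamma}^{i}\circ h\circ\pi\right)\partial_{i}-\left(\rho,\eta\right)\Gamma_{\gamma}^{\circ}\dot{\partial}_{\circ}$, from $\Gamma\left(\tilde{\rho},Id_{E}\right)\left(\dot{\tilde{\partial}}_{\circ}\right)=\dot{\partial}_{\circ}$, and must take care not to drop the cross-terms coming from the non-vanishing mixed bracket $\left[\tilde{\delta}_{\gamma},\dot{\tilde{\partial}}_{\circ}\right]_{\left(\rho,\eta\right)TE}$. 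Once the bracket formulas of the earlier theorem are available, each of the six cases reduces to a short $\mathcal{F}(E)$-linear expansion and the claimed expressions drop out.
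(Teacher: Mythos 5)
Your proposal is correct and is exactly the computation the paper intends: substitute the adapted base sections $\tilde{\delta}_{\varepsilon},\tilde{\delta}_{\gamma},\dot{\tilde{\partial}}_{\circ}$ into the definition of $\left(\rho,\eta,h\right)\mathbb{R}$, expand with the Leibniz rule through the anchor $\Gamma\bigl(\tilde{\rho},Id_{E}\bigr)$, feed in the bracket formulas $\left[\tilde{\delta}_{\varepsilon},\tilde{\delta}_{\gamma}\right]_{\left(\rho,\eta\right)TE}$ and $\left[\tilde{\delta}_{\gamma},\dot{\tilde{\partial}}_{\circ}\right]_{\left(\rho,\eta\right)TE}$ from the earlier theorem, and read off coefficients of $\tilde{\delta}_{\alpha}$ and $\dot{\tilde{\partial}}_{\circ}$. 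The bookkeeping points you isolate (the horizontal versus vertical pieces of the bracket producing the $\mathbb{R}^{\circ}_{\ \gamma\varepsilon}$ and $L^{\theta}_{\gamma\varepsilon}$ corrections, the mixed bracket producing the $\dot{\partial}_{\circ}\bigl(\left(\rho,\eta\right)\Gamma^{\circ}_{\gamma}\bigr)$ terms, and the identically vanishing $\mathbb{S}$-components) are precisely the content of the stated formulas.
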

In particular, when $\left( \rho ,\eta ,h\right) =\left(
Id_{TM},Id_{M},Id_{M}\right)$, we obtain the following local components of the
curvature associated to distinguished linear connection $\left( H,V\right)$:
\begin{equation*}
\begin{array}{l}
\mathbb{R}_{\ j\ kl}^{i}=\delta _{l}\left( H_{jk}^{i}\right) -\delta
_{k}\left( H_{jl}^{i}\right) +H_{hl}^{i}H_{jk}^{h}-H_{hk}^{i}H_{jl}^{h}-%
\mathbb{R}_{\,\ kl}^{\circ}H_{j\circ}^{i},\vspace*{1mm} \\
\mathbb{R}_{\ \circ\ kl}^{\circ}=\delta _{l}\left( H_{\circ k}^{\circ}\right) -\delta
_{k}\left( H_{\circ l}^{\circ}\right) +H_{\circ l}^{\circ}H_{\circ k}^{\circ}-H_{\circ k}^{\circ}H_{\circ l}^{\circ}-%
\mathbb{R}_{\,\ lk}^{\circ}V_{\circ\circ}^{\circ},%
\end{array}
\end{equation*}
\begin{equation*}
\begin{array}{l}
\mathbb{P}_{\ l\ k\circ}^{i}=\displaystyle\dot{\partial}_\circ\left(
H_{lk}^{i}\right)-\delta _{k}\left( V_{l\circ}^{i}\right)
+V_{h\circ}^{i}H_{lk}^{h}-H_{hk}^{i}V_{l\circ}^{h}+\dot{\partial}_\circ%
\left( \Gamma _{k}^{\circ}\right) V_{l\circ}^{i},\vspace*{2mm} \\
\mathbb{P}_{\ \circ\ k\circ}^{\circ}=\displaystyle\dot{\partial}_\circ\left(
H_{\circ k}^{\circ}\right)-\delta _{k}\left( V_{\circ\circ}^{\circ}\right)
+V_{\circ\circ}^{\circ}H_{\circ k}^{\circ}-H_{\circ k}^{\circ}V_{\circ\circ}^{\circ}+\dot{\partial}_\circ%
\left(\Gamma _{k}^{\circ}\right) V_{\circ\circ}^{\circ},%
\end{array}
\end{equation*}
\begin{equation*}
\begin{array}{l}
\mathbb{S}_{\ j\ \circ\circ}^{i}=\displaystyle\dot{\partial}_\circ\left(
V_{j\circ}^{i}\right)-\dot{\partial}_\circ\left( V_{j\circ}^{i}\right)
+V_{h\circ}^{i}V_{j\circ}^{h}-V_{h\circ}^{i}V_{j\circ}^{h}=0,\vspace*{2mm} \\
\mathbb{S}_{\ \circ\ \circ\circ}^{\circ}=\displaystyle\dot{\partial}_\circ\left(
V_{\circ\circ}^{\circ}\right) -\dot{\partial}_\circ\left( V_{\circ\circ}^{\circ}\right)
+V_{\circ\circ}^{\circ}V_{\circ\circ}^{\circ}-V_{\circ\circ}^{\circ}V_{\circ\circ}^{\circ}=0.%
\end{array}
\end{equation*}
\begin{definition}
The tensor field
\begin{align*}
\mathbf{Ric}\left( \left( \rho ,\eta \right) H,\left( \rho ,\eta \right)
V\right)&=\left( \rho ,\eta ,h\right) \mathbb{R}_{\ \alpha \ \beta }d\tilde{z}%
^{\alpha }\otimes d\tilde{z}^{\beta }+\left( \rho ,\eta ,h\right) \mathbb{P}%
_{\ \alpha\ \circ}d\tilde{z}^{\alpha }\otimes \delta \tilde{y}^{\circ}\vspace*{2mm}
\\
&\ \ +\left( \rho ,\eta ,h\right) \mathbb{P}_{\ \circ\ \beta }\delta \tilde{y}%
^{\circ}\otimes d\tilde{z}^{\beta }+\left( \rho ,\eta ,h\right) \mathbb{S}%
_{\ \circ\ \circ}\delta \tilde{y}^{\circ}\otimes \delta \tilde{y}^{\circ},
\end{align*}
is called the Ricci tensor field associated to distinguished
linear $\left( \rho ,\eta \right) $-connection $\left( \left(\rho
,\eta \right) H,\left( \rho ,\eta \right) V\right)$, where
\begin{equation*}
\begin{array}{ll}
\left(\rho ,\eta ,h\right) \mathbb{R}_{\ \alpha \ \beta }=\left( \rho ,\eta
,h\right) \mathbb{R}_{\ \alpha \ \beta \gamma }^{\gamma}, & \left( \rho ,\eta
,h\right) \mathbb{P}_{\ \alpha \ \circ}=\left( \rho ,\eta ,h\right) \mathbb{P}%
_{\ \alpha \ \beta \circ}^{\beta }\vspace*{2mm},\\
\left( \rho ,\eta ,h\right) \mathbb{P}_{\ \circ\ \beta}=\left( \rho ,\eta
,h\right) \mathbb{P}_{\ \circ\ \beta \circ}^{\circ}, & \left( \rho ,\eta ,h\right) \mathbb{S%
}_{\ \circ\ \circ}=\left( \rho ,\eta ,h\right) \mathbb{S}_{\ \circ\ \circ\circ}^{\circ}.
\end{array}%
\end{equation*}
\end{definition}
\subsection{Identities of Cartan and Bianchi type}
Let $(\rho, \eta)\Gamma$ be a $(\rho, \eta)$-connection for the Kaluza-Klein bundle $(E, \pi, M)$ and let $((\rho, \eta)H, (\rho, \eta)V)$ be a distinguished linear $(\rho, \eta)$-connection for the Lie algebroid generalized tangent bundle
\[
\begin{array}{c}
\left( \left( \left( \rho ,\eta \right) TE,\left( \rho ,\eta \right) \tau
_{E},E\right) ,\left[ ,\right] _{\left( \rho ,\eta \right) TE},\left( \tilde{%
\rho},Id_{E}\right) \right).
\end{array}
\]
Using the definition of
$\left( \rho ,\eta ,h\right) $-curvature associated to the
distinguished linear $\left( \rho ,\eta \right) $-connection $\left(
\left( \rho ,\eta \right) H,\left( \rho ,\eta \right) V\right)$, we have the following
\begin{theorem}
The following equalities hold
\begin{equation}\label{R1}
\left\{
\begin{array}{l}
\begin{array}{l}
\left( \rho ,\eta \right) D_{\mathcal{H}X}\left( \rho ,\eta \right) D_{%
\mathcal{H}Y}\mathcal{H}Z-\left( \rho ,\eta \right) D_{\mathcal{H}Y}\left(
\rho ,\eta \right) D_{\mathcal{H}X}\mathcal{H}Z \\
=\left( \rho ,\eta ,h\right) \mathbb{R}\left( \mathcal{H}X,\mathcal{H}%
Y\right) \mathcal{H}Z+\left( \rho ,\eta \right) D_{\mathcal{H}\left[
\mathcal{H}X,\mathcal{H}Y\right] _{\left( \rho ,\eta \right) TE}}\mathcal{H}Z
\\
+\left( \rho ,\eta \right) D_{\mathcal{V}\left[ \mathcal{H}X,\mathcal{H}Y%
\right] _{\left( \rho ,\eta \right) TE}}\mathcal{H}Z,%
\end{array}
\\
\multicolumn{1}{c}{%
\begin{array}{l}
\left( \rho ,\eta \right) D_{\mathcal{V}X}\left( \rho ,\eta \right) D_{%
\mathcal{H}Y}\mathcal{H}Z-\left( \rho ,\eta \right) D_{\mathcal{H}Y}\left(
\rho ,\eta \right) D_{\mathcal{V}X}\mathcal{H}Z \\
=\left( \rho ,\eta ,h\right) \mathbb{R}\left( \mathcal{V}X,\mathcal{H}%
Y\right) \mathcal{H}Z+\left( \rho ,\eta \right) D_{\mathcal{H}\left[
\mathcal{V}X,\mathcal{H}Y\right] _{\left( \rho ,\eta \right) TE}}\mathcal{H}Z
\\
+\left( \rho ,\eta \right) D_{\mathcal{V}\left[ \mathcal{V}X,\mathcal{H}Y%
\right] _{\left( \rho ,\eta \right) TE}}\mathcal{H}Z,%
\end{array}%
} \\
\begin{array}{c}
\left( \rho ,\eta \right) D_{\mathcal{V}X}\left( \rho ,\eta \right) D_{%
\mathcal{V}Y}\mathcal{H}Z-\left( \rho ,\eta \right) D_{\mathcal{V}Y}\left(
\rho ,\eta \right) D_{\mathcal{V}X}\mathcal{H}Z \\
=\left( \rho ,\eta ,h\right) \mathbb{R}\left( \mathcal{V}X,\mathcal{V}%
Y\right) \mathcal{H}Z+\left( \rho ,\eta \right) D_{\mathcal{V}\left[
\mathcal{V}X,\mathcal{V}Y\right] _{\left( \rho ,\eta \right) TE}}\mathcal{H}%
Z,%
\end{array}%
\end{array}%
\right.
\end{equation}%
and
\begin{equation}\label{R2}
\left\{
\begin{array}{c}
\begin{array}{l}
\left( \rho ,\eta \right) D_{\mathcal{H}X}\left( \rho ,\eta \right) D_{%
\mathcal{H}Y}\mathcal{V}Z-\left( \rho ,\eta \right) D_{\mathcal{H}Y}\left(
\rho ,\eta \right) D_{\mathcal{H}X}\mathcal{V}Z \\
=\left( \rho ,\eta ,h\right) \mathbb{R}\left( \mathcal{H}X,\mathcal{H}%
Y\right) \mathcal{V}Z+\left( \rho ,\eta \right) D_{\mathcal{H}\left[
\mathcal{H}X,\mathcal{H}Y\right] _{\left( \rho ,\eta \right) TE}}\mathcal{V}Z
\\
+\left( \rho ,\eta \right) D_{\mathcal{V}\left[ \mathcal{H}X,\mathcal{H}Y%
\right] _{\left( \rho ,\eta \right) TE}}\mathcal{V}Z,%
\end{array}
\\
\begin{array}{l}
\left( \rho ,\eta \right) D_{\mathcal{V}X}\left( \rho ,\eta \right) D_{%
\mathcal{H}Y}\mathcal{V}Z-\left( \rho ,\eta \right) D_{\mathcal{H}Y}\left(
\rho ,\eta \right) D_{\mathcal{V}X}\mathcal{V}Z \\
=\left( \rho ,\eta ,h\right) \mathbb{R}\left( \mathcal{V}X,\mathcal{H}%
Y\right) \mathcal{V}Z+\left( \rho ,\eta \right) D_{h\left[ \mathcal{V}X,%
\mathcal{H}Y\right] _{\left( \rho ,\eta \right) TE}}\mathcal{V}Z \\
+\left( \rho ,\eta \right) D_{\mathcal{V}\left[ \mathcal{V}X,\mathcal{H}Y%
\right] _{\left( \rho ,\eta \right) TE}}\mathcal{V}Z,%
\end{array}
\\
\begin{array}{l}
\left( \rho ,\eta \right) D_{\mathcal{V}X}\left( \rho ,\eta \right) D_{%
\mathcal{V}Y}\mathcal{V}Z-\left( \rho ,\eta \right) D_{\mathcal{V}Y}\left(
\rho ,\eta \right) D_{\mathcal{V}X}\mathcal{V}Z \\
=\left( \rho ,\eta ,h\right) \mathbb{R}\left( \mathcal{V}X,\mathcal{V}%
Y\right) \mathcal{V}Z+\left( \rho ,\eta \right) D_{\mathcal{V}\left[
\mathcal{V}X,\mathcal{V}Y\right] _{\left( \rho ,\eta \right) TE}}\mathcal{V}%
Z.%
\end{array}%
\end{array}%
\right.
\end{equation}
\end{theorem}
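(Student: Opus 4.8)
The plan is to obtain all six identities by a direct substitution into the \emph{definition} of the $\left(\rho,\eta,h\right)$-curvature, using only two structural facts already available: that $\left(\rho,\eta\right)D$ is additive in its direction argument, $\left(\rho,\eta\right)D_{U_{1}+U_{2}}=\left(\rho,\eta\right)D_{U_{1}}+\left(\rho,\eta\right)D_{U_{2}}$, and that the generalized tangent bundle splits as $\left(\rho,\eta\right)TE=H\left(\rho,\eta\right)TE\oplus V\left(\rho,\eta\right)TE$, so that the horizontal and vertical projectors satisfy $\mathcal{H}+\mathcal{V}=Id$ on $\Gamma\left(\left(\rho,\eta\right)TE,\left(\rho,\eta\right)\tau_{E},E\right)$. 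First I would rewrite the defining relation of the curvature, evaluated on arbitrary sections $U,V,W$, by moving the bracket term to the left-hand side:
\begin{equation*}
\left(\rho,\eta\right)D_{U}\left(\rho,\eta\right)D_{V}W-\left(\rho,\eta\right)D_{V}\left(\rho,\eta\right)D_{U}W=\left(\rho,\eta,h\right)\mathbb{R}\left(U,V\right)W+\left(\rho,\eta\right)D_{\left[U,V\right]_{\left(\rho,\eta\right)TE}}W .
\end{equation*}

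Next, for each of the six identities I would specialize $\left(U,V,W\right)$ to the relevant triple of projected sections: $\left(\mathcal{H}X,\mathcal{H}Y,\mathcal{H}Z\right),\left(\mathcal{V}X,\mathcal{H}Y,\mathcal{H}Z\right),\left(\mathcal{V}X,\mathcal{V}Y,\mathcal{H}Z\right)$ for the three lines of (\ref{R1}) and the analogous triples ending in $\mathcal{V}Z$ for the three lines of (\ref{R2}). In each case I then decompose the bracket $\left[U,V\right]_{\left(\rho,\eta\right)TE}=\mathcal{H}\left[U,V\right]_{\left(\rho,\eta\right)TE}+\mathcal{V}\left[U,V\right]_{\left(\rho,\eta\right)TE}$ and use additivity of $\left(\rho,\eta\right)D$ in the direction argument to split $\left(\rho,\eta\right)D_{\left[U,V\right]_{\left(\rho,\eta\right)TE}}W$ into the two summands displayed on the right-hand side. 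This immediately yields the first two lines of each of (\ref{R1}) and (\ref{R2}).

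The only point that is not purely mechanical is explaining why the horizontal summand $\left(\rho,\eta\right)D_{\mathcal{H}\left[\mathcal{V}X,\mathcal{V}Y\right]_{\left(\rho,\eta\right)TE}}$ is absent from the third lines of (\ref{R1}) and (\ref{R2}). For this I would invoke the bracket relations of the adapted $\left(\rho,\eta\right)$-base established above: since $\left[\dot{\tilde{\partial}}_{\circ},\dot{\tilde{\partial}}_{\circ}\right]_{\left(\rho,\eta\right)TE}=0$ and $\Gamma\left(\tilde{\rho},Id_{E}\right)\left(\dot{\tilde{\partial}}_{\circ}\right)=\dot{\partial}_{\circ}$ contributes only a $\dot{\tilde{\partial}}_{\circ}$-component, the bracket of two vertical sections is again vertical, whence $\mathcal{H}\left[\mathcal{V}X,\mathcal{V}Y\right]_{\left(\rho,\eta\right)TE}=0$; that is, the vertical interior differential system is involutive. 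By contrast, although $\left[\tilde{\delta}_{\alpha},\dot{\tilde{\partial}}_{\circ}\right]_{\left(\rho,\eta\right)TE}$ is itself vertical, the Leibniz term of the form $-Y^{\circ}\dot{\partial}_{\circ}\!\left(Z^{\alpha}\right)\tilde{\delta}_{\alpha}$ produced in $\left[\mathcal{H}Y,\mathcal{V}X\right]_{\left(\rho,\eta\right)TE}$ is horizontal, so in the mixed cases both summands genuinely occur and are retained. I expect this bookkeeping of which components of $\left[U,V\right]_{\left(\rho,\eta\right)TE}$ survive to be the only delicate point; everything else is a direct substitution into the definition of $\left(\rho,\eta,h\right)\mathbb{R}$ together with the additivity of $\left(\rho,\eta\right)D$ in its first argument.
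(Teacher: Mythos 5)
Your proposal is correct and follows essentially the same route the paper takes: the paper derives these identities directly from the definition of the $\left(\rho,\eta,h\right)$-curvature by substituting the projected sections and splitting $\left(\rho,\eta\right)D_{\left[U,V\right]_{\left(\rho,\eta\right)TE}}$ via $\mathcal{H}+\mathcal{V}=Id$ and additivity in the direction argument. Your explicit justification that $\mathcal{H}\left[\mathcal{V}X,\mathcal{V}Y\right]_{\left(\rho,\eta\right)TE}=0$ (involutivity of the vertical system, since the only vertical base section is $\dot{\tilde{\partial}}_{\circ}$) correctly accounts for the missing horizontal summand in the third lines.
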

Using the above theorem, we obtain the following
formulas of Ricci type:
\begin{equation}\label{R1'}
\left\{
\begin{array}{l}
\begin{array}{cl}
\tilde{Z}_{\ \ |\gamma |\beta }^{\alpha }-\tilde{Z}_{\ \ |\beta |\gamma
}^{\alpha } & =\left( \rho ,\eta ,h\right) \mathbb{R}_{\ \theta\ \gamma \beta
}^{\mathbb{\alpha }}\tilde{Z}^{\theta }+\left( L_{\beta \gamma }^{\theta
}\circ h\circ \pi \right) \tilde{Z}_{\ \ |\theta }^{\alpha }\vspace*{2mm} \\
& +\left( \rho ,\eta ,h\right) \mathbb{T}_{\ \beta \gamma }^{\circ}\tilde{Z}%
^{\alpha }|_{\circ}+\left( \rho ,\eta ,h\right) \mathbb{T}_{\ \beta \gamma
}^{\theta }\tilde{Z}_{\ \ |\theta }^{\alpha},%
\end{array}%
\vspace*{2mm} \\
\begin{array}{cl}
\tilde{Z}_{\ \ |\gamma }^{\alpha }|_{\circ}-\tilde{Z}^{\alpha }|_{\circ}{}_{|\gamma }
& =\left( \rho ,\eta ,h\right) \mathbb{P}_{\ \theta \ \gamma \circ}^{\mathbb{%
\alpha }}\tilde{Z}^{\theta }-\left( \rho ,\eta ,h\right) \mathbb{P}_{\ \gamma
\circ}^{\circ}\tilde{Z}^{\alpha }\vspace*{2mm}|_{\circ} \\
& -\left( \rho ,\eta \right) \mathbb{H}_{\circ\gamma }^{\circ}\tilde{Z}^{\alpha
}|_{\circ},%
\end{array}%
\vspace*{2mm} \\
\begin{array}{cc}
\tilde{Z}^{\alpha }|_{\circ}|_{\circ}-\tilde{Z}^{\alpha }|_{\circ}|_{\circ} & =\left( \rho
,\eta ,h\right) \mathbb{S}_{\ \theta \ \circ\circ}^{\mathbb{\alpha }}\tilde{Z}^{\theta
}+\left( \rho ,\eta ,h\right) \mathbb{S}_{\ \circ\circ}^{\circ}\tilde{Z}^{\alpha }|_{\circ},%
\end{array}%
\end{array}%
\right.
\end{equation}
and
\begin{equation}\label{R2'}
\left\{
\begin{array}{l}
\begin{array}{cl}
Y_{\ \ |\gamma |\beta }^{\circ}-Y_{\ \ |\beta |\gamma }^{\circ} & =\left( \rho ,\eta
,h\right) \mathbb{R}_{\ \circ\ \gamma \beta }^{\circ}Y^{\circ}+\left( L_{\beta \gamma
}^{\theta }\circ h\circ \pi \right) Y_{\ \ |\theta }^{\circ}\vspace*{2mm} \\
& +\left( \rho ,\eta \right) \mathbb{T}_{\ \beta \gamma
}^{\circ}Y^{\circ}|_{\circ}+\left( \rho ,\eta ,h\right) \mathbb{T}_{\ \beta \gamma
}^{\theta }Y_{\ \ |\theta }^{\circ},%
\end{array}%
\vspace*{2mm} \\
\begin{array}{cl}
Y_{\ \ |\gamma }^{\circ}|_{\circ}-Y^{\circ}|_{\circ}{}_{|\gamma } & =\left( \rho ,\eta
,h\right) \mathbb{P}_{\ \theta \ \gamma \circ}^{\circ}Y^{\theta}-\left( \rho ,\eta
,h\right) \mathbb{P}_{\ \gamma \circ}^{\circ}Y^{\circ}\vspace*{2mm}|_{\circ} \\
& -\left( \rho ,\eta \right) \mathbb{H}_{\circ\gamma }^{\circ}Y^{\circ}|_{\circ},%
\end{array}%
\vspace*{2mm} \\
\begin{array}{cc}
Y^{\circ}|_{\circ}|_{\circ}-Y^{\circ}|_{\circ}|_{\circ} & =\left( \rho ,\eta ,h\right) \mathbb{S}%
_{\ \circ\ \circ\circ}^{\circ}Y^{\circ}+\left( \rho ,\eta ,h\right) \mathbb{S}_{\ \circ\circ}^{\circ}Y^{\circ}|_{\circ}.%
\end{array}%
\end{array}%
\right.
\end{equation}
In particular, if $\left( \rho ,\eta ,h\right) =\left(
Id_{TM},Id_{M},id_{M}\right) $ and the Lie bracket $\left[ ,\right]
_{TM}$ is the usual Lie bracket, then the formulas of Ricci type (\ref{R1'}) and (\ref{R2'}) reduce to
\begin{equation*}
\left\{
\begin{array}{cl}
\tilde{Z}_{\ \ |k|j}^{i}-\tilde{Z}_{\ \ |j|k}^{i} & =\mathbb{R}_{\ h\ kj}^{i}%
\tilde{Z}^{h}+\mathbb{T}_{\ jk}^{\circ}\tilde{Z}^{i}|_{\circ}+\mathbb{T}_{\ jk}^{h}%
\tilde{Z}_{\ \ |h}^{i},\vspace*{2mm} \\
\tilde{Z}^{i}\ _{|k}|_{\circ}-\tilde{Z}^{i}|_{\circ|k} & =\mathbb{P}_{\ h\ k\circ}^{i}%
\tilde{Z}^{h}-\mathbb{P}_{\ k\circ}^{\circ}\tilde{Z}^{i}|_{\circ}-\mathbb{H}_{\circ k}^{\circ}%
\tilde{Z}^{i}|_{\circ},\vspace*{2mm} \\
\tilde{Z}^{i}|_{\circ}|_{\circ}-\tilde{Z}^{i}|_{\circ}|_{\circ} & =\mathbb{S}_{\ h\ \circ\circ}^{i}%
\tilde{Z}^{h}+\mathbb{S}_{\ \circ\circ}^{\circ}\tilde{Z}^{i}|_{\circ},%
\end{array}%
\right.
\end{equation*}%
and
\begin{equation*}
\left\{
\begin{array}{cl}
Y_{\ \ |k|j}^{\circ}-Y_{\ \ |j|k}^{\circ} & =\mathbb{R}_{\ 0\ kj}^{\circ}Y^{c}+\mathbb{T}%
_{\ jk}^{\circ}Y^{\circ}|_{\circ}+\mathbb{T}_{\ jk}^{h}Y_{\ \ |h}^{\circ},\vspace*{2mm} \\
Y_{\ \ \ |k}^{\circ}|_{\circ}-Y^{\circ}|_{\circ |k} & =\mathbb{P}_{\ h\ k\circ}^{\circ}Y^{h}-\mathbb{P}%
_{\ k\circ}^{\circ}Y^{\circ}|_{\circ}-\mathbb{H}_{\circ k}^{\circ}Y^{\circ}|_{\circ},\vspace*{2mm} \\
Y^{\circ}|_{\circ}|_{\circ}-Y^{\circ}|_{\circ}|_{\circ} & =\mathbb{S}_{\ \circ\ \circ\circ}^{\circ}Y^{\circ}+\mathbb{S}%
_{\ \circ\circ}^{\circ}Y^{\circ}|_{\circ}.%
\end{array}%
\right.
\end{equation*}
We consider the following $1$-forms associated to distinguished linear $\left( \rho
,\eta \right) $-connection $\left( \left( \rho ,\eta \right) H,\left( \rho
,\eta \right) V\right) $%
\begin{equation*}
\begin{array}{c}
\left( \rho ,\eta \right) \omega _{\beta }^{\alpha }=\left( \rho ,\eta
\right) H_{\beta \gamma }^{\alpha }d\tilde{z}^{\gamma }+\left( \rho ,\eta
\right) V_{\beta \circ}^{\alpha }\delta \tilde{y}^{\circ},\vspace*{2mm} \\
\left( \rho ,\eta \right) \omega _{\circ}^{\circ}=\left( \rho ,\eta \right)
H_{\circ\gamma }^{\circ}d\tilde{z}^{\gamma }+\left( \rho ,\eta \right)
V_{\circ\circ}^{\circ}\delta \tilde{y}^{\circ}.
\end{array}%
\end{equation*}
Moreover, we obtain the following torsion $2$-forms
\begin{equation*}
\left\{
\begin{array}{cl}
\left( \rho ,\eta ,h\right) \mathbb{T}^{\alpha }\!\!\!\!& \displaystyle=\frac{1}{2}%
\left( \rho ,\eta ,h\right) \mathbb{T}_{~\beta \gamma }^{\alpha }d\tilde{z}%
^{\beta }\wedge d\tilde{z}^{\gamma }+\left( \rho ,\eta ,h\right) \mathbb{P}%
_{\ \beta \circ}^{\alpha }d\tilde{z}^{\beta }\wedge \delta \tilde{y}^{\circ},\vspace*{%
2mm} \\
\left( \rho ,\eta ,h\right) \mathbb{T}^{\circ}\!\!\!\!&=\displaystyle\frac{1}{2}\left(
\rho ,\eta ,h\right) \mathbb{T}_{~\beta \gamma }^{\circ}d\tilde{z}^{\beta
}\wedge d\tilde{z}^{\gamma }+\left( \rho ,\eta ,h\right) \mathbb{P}_{\ \beta
\circ}^{\circ}d\tilde{z}^{\beta }\wedge \delta \tilde{y}^{\circ}\vspace*{2mm} \\
& \ \ +\displaystyle\frac{1}{2}\left( \rho ,\eta ,h\right) \mathbb{S}%
_{\ \circ\circ}^{\circ}\delta \tilde{y}^{\circ}\wedge \delta \tilde{y}^{\circ},
\end{array}%
\right.
\end{equation*}%
and
\begin{equation*}
\left\{
\begin{array}{cl}
\left( \rho ,\eta ,h\right) \mathbb{R}_{\ \beta }^{\alpha }\!\!\!\!& =\displaystyle%
\frac{1}{2}\left( \rho ,\eta ,h\right) \mathbb{R}_{\ \beta\ \gamma \theta
}^{\alpha }d\tilde{z}^{\gamma }\wedge d\tilde{z}^{\theta }+\left( \rho ,\eta
,h\right) \mathbb{P}_{\ \beta \ \gamma \circ}^{\alpha }d\tilde{z}^{\gamma }\wedge
\delta \tilde{y}^{\circ}\vspace*{2mm} \\
&\ \ +\frac{1}{2}\left( \rho ,\eta ,h\right) \mathbb{S}_{\ \beta
\ \circ\circ}^{\alpha }\delta \tilde{y}^{\circ}\wedge \delta \tilde{y}^{\circ},\vspace*{2mm}
\\
\left( \rho ,\eta ,h\right) \mathbb{R}_{\ \circ}^{\circ}\!\!\!\!& \displaystyle=\frac{1}{2}%
\left( \rho ,\eta ,h\right) \mathbb{R}_{\ \circ\ \gamma \theta }^{\circ}d\tilde{z}%
^{\gamma }\wedge d\tilde{z}^{\theta }+\left( \rho ,\eta ,h\right) \mathbb{P}%
_{\ \circ\ \gamma \circ}^{\circ}d\tilde{z}^{\gamma }\wedge \delta \tilde{y}^{\circ}\vspace*{2mm%
} \\
&\ \ +\frac{1}{2}\left( \rho ,\eta ,h\right) \mathbb{S}%
_{\ \circ\ \circ\circ}^{\circ}\delta \tilde{y}^{\circ}\wedge \delta \tilde{y}^{\circ}.
\end{array}%
\right.
\end{equation*}
\begin{theorem}\label{AP3}
The following identities of Cartan type hold
\begin{equation}\label{C1}
\begin{array}{c}
\left( \rho ,\eta ,h\right) \mathbb{T}^{\alpha }=d^{\left( \rho ,\eta
\right) TE}\left( d\tilde{z}^{\alpha }\right) +\left( \rho ,\eta \right)
\omega _{\beta }^{\alpha }\wedge d\tilde{z}^{\beta },\vspace*{2mm} \\
\left( \rho ,\eta ,h\right) \mathbb{T}^{\circ}=d^{\left( \rho ,\eta \right)
TE}\left( \delta \tilde{y}^{\circ}\right) +\left( \rho ,\eta \right) \omega
_{\circ}^{\circ}\wedge \delta \tilde{y}^{\circ},%
\end{array}
\end{equation}
and%
\begin{equation}\label{C2}
\begin{array}{c}
\left( \rho ,\eta ,h\right) \mathbb{R}_{\ \beta }^{\alpha }=d^{\left( \rho
,\eta \right) TE}\left( \left( \rho ,\eta \right) \omega _{\beta }^{\alpha
}\right) +\left( \rho ,\eta \right) \omega _{\gamma }^{\alpha }\wedge \left(
\rho ,\eta \right) \omega _{\beta }^{\gamma },\vspace*{2mm} \\
\left( \rho ,\eta ,h\right) \mathbb{R}_{\ \circ}^{\circ}=d^{\left( \rho ,\eta \right)
TE}\left( \left( \rho ,\eta \right) \omega _{\circ}^{\circ}\right) +\left( \rho
,\eta \right) \omega _{\circ}^{\circ}\wedge \left( \rho ,\eta \right) \omega
_{\circ}^{\circ}.%
\end{array}
\end{equation}
\end{theorem}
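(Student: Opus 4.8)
The plan is to verify each of the four equations by pairing both sides with vector fields taken from the adapted $(\rho,\eta)$-base $(\tilde\delta_\alpha,\dot{\tilde\partial}_\circ)$ --- that is, evaluating on $(\tilde\delta_\gamma,\tilde\delta_\theta)$, on $(\dot{\tilde\partial}_\circ,\tilde\delta_\gamma)$, and on $(\dot{\tilde\partial}_\circ,\dot{\tilde\partial}_\circ)$ --- and then comparing coefficients. Here $d^{(\rho,\eta)TE}$ is the exterior differential of the Lie algebroid $((\rho,\eta)TE,[,]_{(\rho,\eta)TE},(\tilde\rho,Id_E))$, so on a $1$-form $\omega$ one has $d^{(\rho,\eta)TE}\omega(X,Y)=\Gamma(\tilde\rho,Id_E)(X)(\omega(Y))-\Gamma(\tilde\rho,Id_E)(Y)(\omega(X))-\omega([X,Y]_{(\rho,\eta)TE})$, extended to higher forms as an antiderivation. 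The two essential inputs already available are: first, the bracket relations $[\tilde\delta_\alpha,\tilde\delta_\beta]_{(\rho,\eta)TE}=L^\gamma_{\alpha\beta}\circ(h\circ\pi)\,\tilde\delta_\gamma+(\rho,\eta,h)\mathbb{R}^\circ_{\ \alpha\beta}\dot{\tilde\partial}_\circ$ and $[\tilde\delta_\alpha,\dot{\tilde\partial}_\circ]_{(\rho,\eta)TE}=\Gamma(\tilde\rho,Id_E)(\dot{\tilde\partial}_\circ)((\rho,\eta)\Gamma^\circ_\alpha)\dot{\tilde\partial}_\circ$; and second, the already-computed local components of the $(\rho,\eta,h)$-torsion and $(\rho,\eta,h)$-curvature tensors, together with the fact that, the fibre being one-dimensional, the single vertical index forces $\delta\tilde y^\circ\wedge\delta\tilde y^\circ=0$ and $(\rho,\eta,h)\mathbb{S}^\circ_{\ \circ\circ}=0$.

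For the identities (\ref{C1}) I would first compute $d^{(\rho,\eta)TE}(d\tilde z^\alpha)$: since $d\tilde z^\alpha(\tilde\delta_\beta)=\delta^\alpha_\beta$ is constant and $d\tilde z^\alpha(\dot{\tilde\partial}_\circ)=0$, the anchor terms drop and only the bracket term survives, giving $d^{(\rho,\eta)TE}(d\tilde z^\alpha)=-\frac12(L^\alpha_{\beta\gamma}\circ h\circ\pi)\,d\tilde z^\beta\wedge d\tilde z^\gamma$. Expanding $(\rho,\eta)\omega^\alpha_\beta\wedge d\tilde z^\beta$ through $(\rho,\eta)\omega^\alpha_\beta=(\rho,\eta)H^\alpha_{\beta\gamma}d\tilde z^\gamma+(\rho,\eta)V^\alpha_{\beta\circ}\delta\tilde y^\circ$ and antisymmetrizing produces $\frac12((\rho,\eta)H^\alpha_{\beta\gamma}-(\rho,\eta)H^\alpha_{\gamma\beta})d\tilde z^\beta\wedge d\tilde z^\gamma$ in the horizontal slot and $(\rho,\eta)V^\alpha_{\beta\circ}d\tilde z^\beta\wedge\delta\tilde y^\circ$ in the mixed slot; adding the two and inserting $(\rho,\eta,h)\mathbb{T}^\alpha_{\ \beta\gamma}=(\rho,\eta)H^\alpha_{\beta\gamma}-(\rho,\eta)H^\alpha_{\gamma\beta}-L^\alpha_{\beta\gamma}\circ h\circ\pi$ and $(\rho,\eta,h)\mathbb{P}^\alpha_{\ \beta\circ}=(\rho,\eta)V^\alpha_{\beta\circ}$ reproduces $(\rho,\eta,h)\mathbb{T}^\alpha$. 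The identity for $\mathbb{T}^\circ$ is analogous: using $\delta\tilde y^\circ=(\rho,\eta)\Gamma^\circ_\alpha d\tilde z^\alpha+d\tilde y^\circ$, the Leibniz rule, the value of $d^{(\rho,\eta)TE}(d\tilde z^\alpha)$ just found, and the two brackets, the anchor-derivative terms of $(\rho,\eta)\Gamma^\circ_\alpha$ together with the $\mathbb{R}^\circ$-part of $[\tilde\delta_\alpha,\tilde\delta_\beta]$ assemble into $(\rho,\eta,h)\mathbb{R}^\circ_{\ \beta\gamma}$, adding $(\rho,\eta)\omega^\circ_\circ\wedge\delta\tilde y^\circ$ supplies $(\rho,\eta,h)\mathbb{P}^\circ_{\ \beta\circ}=\partial_{y^\circ}((\rho,\eta)\Gamma^\circ_\beta)-(\rho,\eta)H^\circ_{\circ\beta}$, and the $\delta\tilde y^\circ\wedge\delta\tilde y^\circ$ slot is empty, matching $(\rho,\eta,h)\mathbb{S}^\circ_{\ \circ\circ}=0$.

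For the curvature identities (\ref{C2}) I would compute $d^{(\rho,\eta)TE}((\rho,\eta)\omega^\alpha_\beta)$ by the antiderivation rule on the coefficient functions, $d^{(\rho,\eta)TE}(f\,d\tilde z^\gamma)=d^{(\rho,\eta)TE}f\wedge d\tilde z^\gamma+f\,d^{(\rho,\eta)TE}(d\tilde z^\gamma)$ and likewise with $\delta\tilde y^\circ$. The differentials of $(\rho,\eta)H^\alpha_{\beta\gamma}$ and $(\rho,\eta)V^\alpha_{\beta\circ}$ produce the terms $\Gamma(\tilde\rho,Id_E)(\tilde\delta_\varepsilon)((\rho,\eta)H^\alpha_{\beta\gamma})$, $\Gamma(\tilde\rho,Id_E)(\dot{\tilde\partial}_\circ)((\rho,\eta)H^\alpha_{\beta\gamma})$ and their $V$-counterparts; the factors $f\,d^{(\rho,\eta)TE}(d\tilde z^\gamma)$ and $f\,d^{(\rho,\eta)TE}(\delta\tilde y^\circ)$ contribute the $L^\theta_{\gamma\varepsilon}\circ h\circ\pi$ and $(\rho,\eta,h)\mathbb{R}^\circ_{\ \gamma\varepsilon}$ pieces; and $(\rho,\eta)\omega^\alpha_\gamma\wedge(\rho,\eta)\omega^\gamma_\beta$ supplies the quadratic pieces $(\rho,\eta)H^\alpha_{\theta\varepsilon}(\rho,\eta)H^\theta_{\beta\gamma}$, $(\rho,\eta)V^\alpha_{\theta\circ}(\rho,\eta)H^\theta_{\beta\gamma}$, $(\rho,\eta)V^\alpha_{\theta\circ}(\rho,\eta)V^\theta_{\beta\circ}$, and so on. Collecting coefficients of $d\tilde z^\gamma\wedge d\tilde z^\theta$, of $d\tilde z^\gamma\wedge\delta\tilde y^\circ$, and of $\delta\tilde y^\circ\wedge\delta\tilde y^\circ$ and comparing with $(\rho,\eta,h)\mathbb{R}^\alpha_{\ \beta}$ yields exactly the three local-component formulas for $(\rho,\eta,h)\mathbb{R}^\alpha_{\ \beta\ \gamma\varepsilon}$, $(\rho,\eta,h)\mathbb{P}^\alpha_{\ \beta\ \gamma\circ}$ and $(\rho,\eta,h)\mathbb{S}^\alpha_{\ \beta\ \circ\circ}$ established earlier, the last again degenerating to $0$. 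The computation for $(\rho,\eta)\omega^\circ_\circ$ is identical in form --- using that $(\rho,\eta)H^\circ_{\circ\gamma}$ and $(\rho,\eta)V^\circ_{\circ\circ}$ carry only the vertical index --- and reproduces $(\rho,\eta,h)\mathbb{R}^\circ_{\ \circ}$.

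The main obstacle I anticipate is purely the combinatorial bookkeeping: keeping the antisymmetrization factors $\frac12$, the wedge-product signs, and the distinction between $\Gamma(\tilde\rho,Id_E)(\tilde\delta_\gamma)$ acting on a function and the plain partial derivatives $\partial_i$, $\dot\partial_\circ$ consistent throughout, and in particular checking that the $L$-terms and $(\rho,\eta,h)\mathbb{R}^\circ$-terms produced by $d^{(\rho,\eta)TE}(d\tilde z^\gamma)$ and $d^{(\rho,\eta)TE}(\delta\tilde y^\circ)$ inside $d^{(\rho,\eta)TE}((\rho,\eta)\omega)$ land with exactly the signs and index placements recorded in the earlier curvature theorem. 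No genuinely new idea is needed beyond the Lie algebroid Cartan calculus together with the structure equations and the component formulas already proved.
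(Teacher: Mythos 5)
The paper states this theorem without any proof, so there is no argument of the authors to compare against; your plan --- evaluating each structure equation on pairs drawn from the adapted base $\left(\tilde{\delta}_{\alpha},\dot{\tilde{\partial}}_{\circ}\right)$, using the exterior differential of the Lie algebroid $\left(\left(\rho,\eta\right)TE,\left[,\right]_{\left(\rho,\eta\right)TE},\left(\tilde{\rho},Id_{E}\right)\right)$, the two bracket identities for the adapted base, and the previously established local components of $\left(\rho,\eta,h\right)\mathbb{T}$ and $\left(\rho,\eta,h\right)\mathbb{R}$ --- is exactly the standard derivation the statement is built to admit, and it is correct. The only care needed is the bookkeeping you already flag (the paper's reversed argument order in $\mathcal{H}\left(\rho,\eta,h\right)\mathbb{T}\left(\tilde{\delta}_{\gamma},\tilde{\delta}_{\beta}\right)=\left(\rho,\eta,h\right)\mathbb{T}_{\ \beta\gamma}^{\alpha}\tilde{\delta}_{\alpha}$ and the factors $\tfrac{1}{2}$), together with the observation that the one-dimensional fibre forces $\delta\tilde{y}^{\circ}\wedge\delta\tilde{y}^{\circ}=0$ and $\left(\rho,\eta\right)\omega_{\circ}^{\circ}\wedge\left(\rho,\eta\right)\omega_{\circ}^{\circ}=0$, consistently with the vanishing of the $\mathbb{S}$-components.
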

In particular, if $\left( \rho ,\eta ,h\right) =\left(
Id_{TM},Id_{M},Id_{M}\right) $ and the Lie bracket $\left[ ,\right] _{TM}$
is the usual Lie bracket, then the identities of Cartan type (\ref{C1}) and (\ref{C2}) reduce to
\begin{equation*}
\begin{array}{c}
\mathbb{T}^{i}=d^{\left( Id_{TE},Id_{E}\right) TE}\left( d\tilde{z}%
^{i}\right)+\omega _{j}^{i}\wedge d\tilde{z}^{j},\vspace*{2mm} \\
\mathbb{T}^{\circ}=d^{\left( Id_{TE},Id_{E}\right) TE}\left( \delta \tilde{y}%
^{\circ}\right) +\omega_{\circ}^{\circ}\wedge \delta \tilde{y}^{\circ},%
\end{array}
\end{equation*}%
and
\begin{equation*}
\begin{array}{c}
\mathbb{R}_{~j}^{i}=d^{\left( Id_{TE},Id_{E}\right) TE}\left( \omega
_{j}^{i}\right) +\omega _{k}^{i}\wedge \omega _{j}^{k},\vspace*{2mm} \\
\mathbb{R}_{\ \circ}^{\circ}=d^{\left( Id_{TE},Id_{E}\right) TE}\left( \omega
_{\circ}^{\circ}\right) +\omega _{\circ}^{\circ}\wedge \omega _{\circ}^{\circ}.%
\end{array}
\end{equation*}
\begin{remark}\label{Ap4}
For any $X,Y,Z\in \Gamma \left(
\left( \rho ,\eta \right) TE,\left( \rho ,\eta \right) \tau _{E},E\right) $,
the following identities hold
\begin{equation*}
\begin{array}{rc}
\mathcal{V}\left( \rho ,\eta ,h\right) \mathbb{R}\left( X,Y\right) \mathcal{H%
}Z & =0,\vspace*{2mm} \\
\mathcal{H}\left( \rho ,\eta ,h\right) \mathbb{R}\left( X,Y\right) \mathcal{V%
}Z & =0,%
\end{array}
\end{equation*}%
\begin{equation*}
\begin{array}{cc}
\mathcal{V}D_{X}\left( \left( \rho ,\eta ,h\right) \mathbb{R}\left(
Y,Z\right) \mathcal{H}U\right) & =0,\vspace*{2mm} \\
\mathcal{H}D_{X}\left( \left( \rho ,\eta ,h\right) \mathbb{R}\left(
Y,Z\right) \mathcal{V}U\right) & =0,%
\end{array}
\end{equation*}%
and%
\begin{equation*}
\begin{array}{cl}
\left( \rho ,\eta ,h\right) \mathbb{R}\left( X,Y\right) Z=\mathcal{H}\left(
\rho ,\eta ,h\right) \mathbb{R}\left( X,Y\right) \mathcal{H}Z+\mathcal{V}%
\left( \rho ,\eta ,h\right) \mathbb{R}\left( X,Y\right) \mathcal{V}Z. &
\end{array}
\end{equation*}
\end{remark}
Using the formulas of Bianchi type, Theorem \ref{AP3} and Remark \ref{Ap4}, we
obtain the following
\begin{theorem}
The identities of Bianchi type:
\begin{equation*}
\left\{
\begin{array}{l}
\underset{cyclic\left( X,Y,Z\right) }{\dsum }\left\{ \mathcal{H}\left( \rho
,\eta \right) D_{X}\left( \left( \rho ,\eta ,h\right) \mathbb{T}\left(
Y,Z\right) \right) -\mathcal{H}\left( \rho ,\eta ,h\right) \mathbb{R}\left(
X,Y\right) Z\right. \vspace*{2mm} \\
\qquad \qquad +\mathcal{H}\left( \rho ,\eta ,h\right) \mathbb{T}\left(
\mathcal{H}\left( \rho ,\eta ,h\right) \mathbb{T}\left( X,Y\right) ,Z\right)
\vspace*{2mm} \\
\qquad \qquad \left. +\mathcal{H}\left( \rho ,\eta ,h\right) \mathbb{T}%
\left( \mathcal{V}\left( \rho ,\eta ,h\right) \mathbb{T}\left( X,Y\right)
,Z\right) \right\} =0,\vspace*{4mm} \\
\underset{cyclic\left( X,Y,Z\right) }{\dsum }\left\{ \mathcal{V}\left( \rho
,\eta \right) D_{X}\left( \left( \rho ,\eta ,h\right) \mathbb{T}\left(
Y,Z\right) \right) -\mathcal{V}\left( \rho ,\eta ,h\right) \mathbb{R}\left(
X,Y\right) Z\right. \vspace*{2mm} \\
\qquad \qquad +\mathcal{V}\left( \rho ,\eta ,h\right) \mathbb{T}\left(
\mathcal{H}\left( \rho ,\eta ,h\right) \mathbb{T}\left( X,Y\right) ,Z\right)
\vspace*{2mm} \\
\qquad \qquad \left. +\mathcal{V}\left( \rho ,\eta ,h\right) \mathbb{T}%
\left( \mathcal{V}\left( \rho ,\eta ,h\right) \mathbb{T}\left( X,Y\right)
,Z\right) \right\} =0.%
\end{array}%
\right.
\end{equation*}%
and
\begin{equation*}
\left\{
\begin{array}{l}
\underset{cyclic\left( X,Y,Z,U\right) }{\dsum }\left\{ \mathcal{H}\left(
\rho ,\eta \right) D_{X}\left( \left( \rho ,\eta ,h\right) \mathbb{R}\left(
Y,Z\right) U\right) \right. \vspace*{2mm} \\
\qquad \qquad -\mathcal{H}\left( \rho ,\eta ,h\right) \mathbb{R}\left(
\mathcal{H}\left( \rho ,\eta ,h\right) \mathbb{T}\left( X,Y\right) ,Z\right)
U\vspace*{2mm} \\
\qquad \qquad \left. -\mathcal{H}\left( \rho ,\eta ,h\right) \mathbb{R}%
\left( \mathcal{V}\left( \rho ,\eta ,h\right) \mathbb{T}\left( X,Y\right)
,Z\right) U\right\} =0,\vspace*{4mm} \\
\underset{cyclic\left( X,Y,Z,U\right) }{\dsum }\left\{ \mathcal{V}\left(
\rho ,\eta \right) D_{X}\left( \left( \rho ,\eta ,h\right) \mathbb{R}\left(
Y,Z\right) U\right) \right. \vspace*{2mm} \\
\qquad \qquad -\mathcal{V}\left( \rho ,\eta ,h\right) \mathbb{R}\left(
\mathcal{H}\left( \rho ,\eta ,h\right) \mathbb{T}\left( X,Y\right) ,Z\right)
U\vspace*{2mm} \\
\qquad \qquad \left. -\mathcal{V}\left( \rho ,\eta ,h\right) \mathbb{R}%
\left( \mathcal{V}\left( \rho ,\eta ,h\right) \mathbb{T}\left( X,Y\right)
,Z\right) U\right\} =0,%
\end{array}%
\right.
\end{equation*}%
hold for any $X,Y,Z\in \Gamma \left( \left( \rho ,\eta \right)
TE,\left( \rho ,\eta \right) \tau _{E},E\right) .$
\end{theorem}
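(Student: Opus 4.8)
The statement collects the first and second Bianchi identities for the linear connection $(\rho ,\eta )D$ on the Lie algebroid $((\rho ,\eta )TE,[,]_{(\rho ,\eta )TE},(\tilde{\rho},Id_{E}))$, written out in horizontal and vertical components. The plan is to prove first the two \emph{unprojected} identities
\[
\sum_{cyclic(X,Y,Z)}\{(\rho ,\eta )D_{X}((\rho ,\eta ,h)\mathbb{T}(Y,Z))-(\rho ,\eta ,h)\mathbb{R}(X,Y)Z+(\rho ,\eta ,h)\mathbb{T}((\rho ,\eta ,h)\mathbb{T}(X,Y),Z)\}=0
\]
and
\[
\sum_{cyclic(X,Y,Z)}\{(\rho ,\eta )D_{X}((\rho ,\eta ,h)\mathbb{R}(Y,Z)U)-(\rho ,\eta ,h)\mathbb{R}((\rho ,\eta ,h)\mathbb{T}(X,Y),Z)U\}=0,
\]
which hold for a linear connection on an arbitrary Lie algebroid, and then to decompose them with the projectors $\mathcal{H}$ and $\mathcal{V}$. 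Alternatively, the unprojected identities follow from the Cartan identities of Theorem \ref{AP3} by applying $d^{(\rho ,\eta )TE}$ and using $d^{(\rho ,\eta )TE}\circ d^{(\rho ,\eta )TE}$ on the adapted dual base, but the route through the Jacobi identity is more direct.

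For the first identity I would insert the definition $(\rho ,\eta ,h)\mathbb{T}(Y,Z)=(\rho ,\eta )D_{Y}Z-(\rho ,\eta )D_{Z}Y-[Y,Z]_{(\rho ,\eta )TE}$, apply $(\rho ,\eta )D_{X}$, and form the cyclic sum. After adding and subtracting the terms $(\rho ,\eta )D_{[X,Y]_{(\rho ,\eta )TE}}Z$, the double-derivative terms recombine into $\sum_{cyclic}(\rho ,\eta ,h)\mathbb{R}(X,Y)Z$ plus a residue built only from brackets and single derivatives; rewriting $(\rho ,\eta )D_{X}[Y,Z]_{(\rho ,\eta )TE}$ via the torsion formula and invoking the Jacobi identity $\sum_{cyclic}[X,[Y,Z]_{(\rho ,\eta )TE}]_{(\rho ,\eta )TE}=0$ — valid because $[,]_{(\rho ,\eta )TE}$ is a Lie algebroid bracket — this residue collapses to $-\sum_{cyclic}(\rho ,\eta ,h)\mathbb{T}((\rho ,\eta ,h)\mathbb{T}(X,Y),Z)$. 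Since $((\rho ,\eta )H,(\rho ,\eta )V)$ is distinguished, $(\rho ,\eta )D_{X}$ preserves the horizontal and vertical subbundles and hence commutes with $\mathcal{H}$ and $\mathcal{V}$; writing $(\rho ,\eta ,h)\mathbb{T}(X,Y)=\mathcal{H}(\rho ,\eta ,h)\mathbb{T}(X,Y)+\mathcal{V}(\rho ,\eta ,h)\mathbb{T}(X,Y)$ in the last term and applying $\mathcal{H}$, respectively $\mathcal{V}$, to the whole identity yields exactly the two torsion identities of the theorem.

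For the second identity I would expand $(\rho ,\eta ,h)\mathbb{R}(Y,Z)U$ by its definition, apply $(\rho ,\eta )D_{X}$, and cyclically sum; the triple-derivative terms recombine into curvature terms, the $(\rho ,\eta )D_{[Y,Z]_{(\rho ,\eta )TE}}U$ terms are absorbed using the Jacobi identity, and replacing each remaining bracket by $[X,Y]_{(\rho ,\eta )TE}=(\rho ,\eta )D_{X}Y-(\rho ,\eta )D_{Y}X-(\rho ,\eta ,h)\mathbb{T}(X,Y)$ leaves precisely $\sum_{cyclic}(\rho ,\eta ,h)\mathbb{R}((\rho ,\eta ,h)\mathbb{T}(X,Y),Z)U$. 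Projecting with $\mathcal{H}$ and $\mathcal{V}$ and invoking Remark \ref{Ap4} — in particular $\mathcal{V}(\rho ,\eta ,h)\mathbb{R}(X,Y)\mathcal{H}Z=0$, $\mathcal{H}(\rho ,\eta ,h)\mathbb{R}(X,Y)\mathcal{V}Z=0$, together with $\mathcal{V}D_{X}((\rho ,\eta ,h)\mathbb{R}(Y,Z)\mathcal{H}U)=0$ and $\mathcal{H}D_{X}((\rho ,\eta ,h)\mathbb{R}(Y,Z)\mathcal{V}U)=0$ — removes the mixed contributions, and the splitting of $(\rho ,\eta ,h)\mathbb{T}(X,Y)$ in the last term gives the two stated curvature identities.

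The algebraic core of the first two steps is just the Jacobi identity in disguise, so those computations are routine; the delicate part is the bookkeeping of the projections. The main obstacle is to check carefully that $\mathcal{H}\circ (\rho ,\eta )D_{X}=(\rho ,\eta )D_{X}\circ \mathcal{H}$ and $\mathcal{V}\circ (\rho ,\eta )D_{X}=(\rho ,\eta )D_{X}\circ \mathcal{V}$ — precisely the content of the distinguished hypothesis — and that the mixed torsion and curvature components behave as in Remark \ref{Ap4}; once these are in place, the projected identities follow immediately from the unprojected ones. A secondary subtlety is the reading of the cyclic sum in the curvature identity, which runs over $(X,Y,Z)$ with $U$ held fixed: one must track the correction terms distinguishing $(\rho ,\eta )D_{X}((\rho ,\eta ,h)\mathbb{R}(Y,Z)U)$ from $((\rho ,\eta )D_{X}(\rho ,\eta ,h)\mathbb{R})(Y,Z)U$ and verify that they cancel in the cyclic sum or are absorbed into the $(\rho ,\eta ,h)\mathbb{R}((\rho ,\eta ,h)\mathbb{T}(X,Y),Z)U$ terms.
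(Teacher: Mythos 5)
Your proposal is correct and follows essentially the route the paper itself indicates: the paper offers no written proof beyond the one-line attribution ``Using the formulas of Bianchi type, Theorem \ref{AP3} and Remark \ref{Ap4}'', i.e.\ exactly your skeleton of first establishing the unprojected Bianchi identities for $(\rho ,\eta )D$ on the Lie algebroid $\left( \left( \rho ,\eta \right) TE,\left[ ,\right] _{\left( \rho ,\eta \right) TE},\left( \tilde{\rho},Id_{E}\right) \right)$ and then splitting them with $\mathcal{H}$, $\mathcal{V}$ and the vanishing of the mixed curvature components. The only caution worth recording is that the cancellation you attribute to the Jacobi identity closes up exactly only when $\left( \rho ,\eta \right) D_{X}\left( \left( \rho ,\eta ,h\right) \mathbb{T}\left( Y,Z\right) \right)$ and $\left( \rho ,\eta \right) D_{X}\left( \left( \rho ,\eta ,h\right) \mathbb{R}\left( Y,Z\right) U\right)$ are read as tensorial covariant derivatives (otherwise cyclic leftovers of the form $\mathbb{T}\left( \left( \rho ,\eta \right) D_{X}Y,Z\right) +\mathbb{T}\left( Y,\left( \rho ,\eta \right) D_{X}Z\right)$ survive) --- a point you flag for the curvature identity but which applies equally to the torsion identity.
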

\begin{corollary}
Using the following sections $\left( \delta _{\theta },\delta _{\gamma },\delta _{\beta }\right) $, we obtain:
\begin{equation*}
\left\{
\begin{array}{l}
\underset{cyclic\left( \beta ,\gamma ,\theta \right) }{\dsum }\left\{ \left(
\rho ,\eta ,h\right) \mathbb{T}_{\ \ \beta \gamma _{|\theta }}^{\alpha
}-\left( \rho ,\eta ,h\right) \mathbb{R}_{\ \beta\ \gamma \theta }^{\alpha
}\right. \vspace*{2mm} \\
\qquad \qquad \left. +\left( \rho ,\eta ,h\right) \mathbb{T}_{\ \gamma \theta
}^{\lambda }\left( \rho ,\eta ,h\right) \mathbb{T}_{\ \beta \gamma }^{\alpha
}+\left( \rho ,\eta ,h\right) \mathbb{T}_{\ \gamma \theta }^{\circ}\left( \rho
,\eta ,h\right) \mathbb{T}_{\ \beta \circ}^{\alpha }\right\} =0,\vspace*{4mm} \\
\underset{cyclic\left( \beta ,\gamma ,\theta \right) }{\dsum }\left\{ \left(
\rho ,\eta ,h\right) \mathbb{T}_{\ \ \beta \gamma _{|\theta }}^{\circ}+\left(
\rho ,\eta ,h\right) \mathbb{T}_{\ \gamma \theta }^{\alpha }\left( \rho ,\eta
,h\right) \mathbb{P}_{\ \beta \alpha }^{\circ}\right. \vspace*{2mm} \\
\qquad \qquad \left. +\left( \rho ,\eta ,h\right) \mathbb{P}_{\ \gamma \theta
}^{\circ}\left( \rho ,\eta ,h\right) \mathbb{P}_{\ \circ\beta }^{\circ}\right\} =0,%
\end{array}%
\right.
\end{equation*}%
and 
\begin{equation*}
\left\{
\begin{array}{l}
\underset{cyclic\left( \beta ,\gamma ,\theta ,\lambda \right) }{\dsum }%
\left\{ \left( \rho ,\eta ,h\right) \mathbb{R}_{\ \ \beta \ \gamma \theta
_{|\lambda }}^{\alpha }-\left( \rho ,\eta ,h\right) \mathbb{T}_{~\theta
\lambda }^{\mu }\left( \rho ,\eta ,h\right) \mathbb{R}_{\ \beta \ \gamma \mu
}^{\alpha }\right. \vspace*{2mm} \\
\qquad \qquad \left. -\left( \rho ,\eta ,h\right) \mathbb{T}_{\ \theta
\lambda }^{\circ}\left( \rho ,\eta ,h\right) \mathbb{P}_{\ \beta \ \gamma
\circ}^{\alpha }\right\} =0,\vspace*{4mm} \\
\underset{cyclic\left( \beta ,\gamma ,\theta ,\lambda \right) }{\dsum }%
\left\{ \left( \rho ,\eta ,h\right) \mathbb{R}_{\ \ \circ\ \gamma \theta
_{|\lambda }}^{\circ}-\left( \rho ,\eta ,h\right) \mathbb{T}_{\ \theta \lambda
}^{\mu }\left( \rho ,\eta ,h\right) \mathbb{R}_{\ \circ\ \gamma \mu }^{\circ}\right.
\vspace*{2mm} \\
\qquad \qquad \left. -\left( \rho ,\eta ,h\right) \mathbb{T}_{\ \theta
\lambda }^{\circ}\left( \rho ,\eta ,h\right) \mathbb{P}_{\ \circ\ \gamma
\circ}^{\circ}\right\} =0.%
\end{array}%
\right.
\end{equation*}
\end{corollary}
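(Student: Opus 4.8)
The plan is to evaluate the coordinate-free Bianchi identities of the preceding theorem on the adapted $\left(\rho,\eta\right)$-frame and read off components. For the two torsion identities I take $X=\tilde{\delta}_{\beta}$, $Y=\tilde{\delta}_{\gamma}$, $Z=\tilde{\delta}_{\theta}$, and for the two curvature identities I take in addition $U$ to be one of the adapted basis sections. Since these sections are horizontal (resp. vertical), the projections $\mathcal{H}$ and $\mathcal{V}$ act on them trivially, which already removes several of the mixed terms in the abstract identities.

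The computational inputs are: (i) the bracket formula $\left[\tilde{\delta}_{\alpha},\tilde{\delta}_{\beta}\right]_{\left(\rho,\eta\right)TE}=\left(L_{\alpha\beta}^{\gamma}\circ h\circ\pi\right)\tilde{\delta}_{\gamma}+\left(\rho,\eta,h\right)\mathbb{R}_{\ \alpha\beta}^{\circ}\dot{\tilde{\partial}}_{\circ}$, together with $\left(\rho,\eta,h\right)\mathbb{R}_{\ \alpha\beta}^{\circ}=\left(\rho,\eta,h\right)\mathbb{T}_{\ \alpha\beta}^{\circ}$ from the torsion–components theorem, so that $\mathcal{H}\left[\tilde{\delta}_{\alpha},\tilde{\delta}_{\beta}\right]$ produces the $L$-term and $\mathcal{V}\left[\tilde{\delta}_{\alpha},\tilde{\delta}_{\beta}\right]$ the $\mathbb{T}^{\circ}$-term; (ii) the covariant-derivative formula of the earlier theorem, which identifies $\mathcal{H}\left(\rho,\eta\right)D_{\tilde{\delta}_{\theta}}$ applied to a horizontal field $T^{\alpha}\tilde{\delta}_{\alpha}$ with $T_{\ |\theta}^{\alpha}\tilde{\delta}_{\alpha}$ (and similarly in the vertical slot and for vertical objects); (iii) the frame definitions of the components $\left(\rho,\eta,h\right)\mathbb{T}$, $\left(\rho,\eta,h\right)\mathbb{P}$, $\left(\rho,\eta,h\right)\mathbb{R}$; and (iv) the vanishing relations of Remark \ref{Ap4}, namely $\mathcal{V}\left(\rho,\eta,h\right)\mathbb{R}\left(X,Y\right)\mathcal{H}Z=0$ and $\mathcal{H}\left(\rho,\eta,h\right)\mathbb{R}\left(X,Y\right)\mathcal{V}Z=0$, which let the curvature term survive only in the matching ($\mathcal{H}$ with horizontal, $\mathcal{V}$ with vertical) identity.

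Concretely, in the first Bianchi identity I expand $\mathcal{H}\left(\rho,\eta\right)D_{\tilde{\delta}_{\theta}}\!\left(\left(\rho,\eta,h\right)\mathbb{T}\left(\tilde{\delta}_{\gamma},\tilde{\delta}_{\beta}\right)\right)$ to get the $\left(\rho,\eta,h\right)\mathbb{T}_{\ \beta\gamma|\theta}^{\alpha}$ term; the term $\mathcal{H}\left(\rho,\eta,h\right)\mathbb{R}\left(\tilde{\delta}_{\beta},\tilde{\delta}_{\gamma}\right)\tilde{\delta}_{\theta}$ gives $-\left(\rho,\eta,h\right)\mathbb{R}_{\ \beta\ \gamma\theta}^{\alpha}$; and the two torsion-of-torsion terms, after inserting the frame expansions of $\mathcal{H}\mathbb{T}$ and $\mathcal{V}\mathbb{T}$, give the products $\left(\rho,\eta,h\right)\mathbb{T}_{\ \gamma\theta}^{\lambda}\left(\rho,\eta,h\right)\mathbb{T}_{\ \beta\lambda}^{\alpha}$ and $\left(\rho,\eta,h\right)\mathbb{T}_{\ \gamma\theta}^{\circ}\left(\rho,\eta,h\right)\mathbb{P}_{\ \beta\circ}^{\alpha}$; the cyclic sum over $\left(\beta,\gamma,\theta\right)$ yields the first displayed identity, and its $\mathcal{V}$-projection (where the curvature term drops by Remark \ref{Ap4}) yields the second. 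The curvature Bianchi identities are treated identically: I expand $\mathcal{H}\left(\rho,\eta\right)D_{\tilde{\delta}_{\lambda}}\!\left(\left(\rho,\eta,h\right)\mathbb{R}\left(\tilde{\delta}_{\theta},\tilde{\delta}_{\gamma}\right)\tilde{\delta}_{\beta}\right)$ into $\left(\rho,\eta,h\right)\mathbb{R}_{\ \beta\ \gamma\theta|\lambda}^{\alpha}$ and resolve $\mathbb{R}\left(\mathcal{H}\mathbb{T}\left(\cdot,\cdot\right),\cdot\right)$ and $\mathbb{R}\left(\mathcal{V}\mathbb{T}\left(\cdot,\cdot\right),\cdot\right)$ using the component definitions of $\mathbb{R}$ and $\mathbb{P}$, then cyclic-sum over $\left(\beta,\gamma,\theta,\lambda\right)$ for the third identity and take the vertical version for the fourth.

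The only genuine obstacle is bookkeeping: one must track carefully which slots are horizontal and which vertical in each iterated bracket and covariant derivative, so that the $L$-coefficients, the $\mathbb{T}^{\circ}$-coefficients, and the $\dot{\partial}_{\circ}$-derivative contributions already built into the definition of "$|$" are not double-counted, and so that the terms killed by Remark \ref{Ap4} are discarded. Once the frame is fixed and these conventions are pinned down, each identity follows by direct substitution and cancellation, with no new idea needed beyond the theorems already established.
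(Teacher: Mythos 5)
Your proposal is correct and follows exactly the route the paper intends: the corollary is obtained by evaluating the coordinate-free Bianchi identities of the preceding theorem on the adapted $\left(\rho,\eta\right)$-basis sections, using the bracket formula for $\left[\tilde{\delta}_{\alpha},\tilde{\delta}_{\beta}\right]_{\left(\rho,\eta\right)TE}$, the component definitions of $\mathbb{T}$, $\mathbb{P}$, $\mathbb{R}$, and the vanishing relations of the remark to discard the mixed horizontal/vertical terms. Your identification of the torsion-of-torsion products as $\left(\rho,\eta,h\right)\mathbb{T}_{\ \gamma\theta}^{\lambda}\left(\rho,\eta,h\right)\mathbb{T}_{\ \beta\lambda}^{\alpha}$ and $\left(\rho,\eta,h\right)\mathbb{T}_{\ \gamma\theta}^{\circ}\left(\rho,\eta,h\right)\mathbb{P}_{\ \beta\circ}^{\alpha}$ is in fact the consistent reading of the stated formulas, whose printed indices appear to contain typographical slips.
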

Note that using another base of sections, we can obtain new identities of Bianchi
type necessary in the applications.
\section{(Pseudo) generalized Kaluza-Klein $G$-spaces}
Let $(\rho, \eta)\Gamma$ be a $(\rho, \eta)$-connection for the Kaluza-Klein bundle $(E, \pi, M)$ and let $((\rho, \eta)H, (\rho, \eta)V)$ be a distinguished linear $(\rho, \eta)$-connection for the Lie algebroid generalized tangent bundle
\[
\Big(((\rho, \eta)TE, (\rho, \eta)\tau_E, E), [, ]_{(\rho, \eta)TE}, (\tilde{\rho}, Id_E)\Big).
\]
We consider the following tensor field
\begin{equation*}
G=g_{\alpha \beta }d\tilde{z}^{\alpha }\otimes d\tilde{z}^{\beta
}+g_{\circ\circ}\delta \tilde{y}^{\circ}\otimes \delta \tilde{y}^{\circ}\in \mathcal{T}%
_{22}^{00}\left( \left( \rho ,\eta \right) TE,\left( \rho ,\eta \right) \tau
_{E},E\right).
\end{equation*}
If $G$ is symmetric and for any point $u_{x}\in E$, matrice $\left(g_{\alpha \beta }\left( u_{x}\right)\right)$ is nondegenerate and also $g_{\circ\circ}(u_x)$ is invertible, then $G$ is called \emph{pseudometrical structure}.
Moreover, if  $\left( g_{\alpha \beta }\left( u_{x}\right)
\right)$ and $g_{\circ\circ}(u_{x})$ have
constant signature, then  $G$ is called \emph{%
metrical structure}\textit{.}

If $\alpha ,\beta \in 1, \ldots, p$, then for any local fiber chart
$\left( U,s_{U}\right) $ of $\left( E,\pi ,M\right) $, we consider the real
functions
\begin{equation*}
\begin{array}{ccc}
\pi ^{-1}\left( U\right) & ^{\underrightarrow{~\ \ \tilde{g}^{\beta \alpha
}~\ \ }} & \!\!\!\!\!\mathbb{R},
\end{array}%
\end{equation*}%
and
\begin{equation*}
\begin{array}{ccc}
\pi ^{-1}\left( U\right) & ^{\underrightarrow{~\ \ \tilde{g}^{\circ\circ}~\ \ }} \!\!\!\!\!&
\mathbb{R},
\end{array}%
\end{equation*}%
such that%
\begin{equation*}
\begin{array}{c}
\left(\tilde{g}^{\beta \alpha }\left( u_{x}\right) \right)
=\left(g_{\alpha \beta }\left( u_{x}\right) \right)^{-1},
\end{array}%
\end{equation*}%
and
\begin{equation*}
\begin{array}{c}
\tilde{g}^{\circ\circ}(u_{x})=g_{\circ\circ}(u_{x})^{-1},%
\end{array}%
\end{equation*}%
for any $u_{x}\in \pi ^{-1}\left( U\right) \backslash \left\{ 0_{x}\right\}$.

If around each point $x\in M$ there exists a local
vector chart $\left( U,s_{U}\right) $ and a local chart $\left(
U,\xi _{U}\right) $ such that $g_{\alpha \beta }\circ s_{U}^{-1}\circ \left(
\xi _{U}^{-1}\times Id_{\mathbb{R}}\right) \left( x,y\right) $ (respectively $g_{\circ\circ}\circ s_{U}^{-1}\circ \left( \xi _{U}^{-1}\times Id_{\mathbb{R}}\right) \left( x,y\right)$) depends only on $x$, for any $u_{x}\in \pi
^{-1}\left( U\right) ,$ then we say that the (pseudo)metrical
structure $G$ is is a {\it Riemannian $\mathcal{H}$-(pseudo) metrical structure} (respectively, {\it Riemannian $\mathcal{V}$-(pseudo)metrical structure}). Moreover, $G$ is called {\it Riemannian (pseudo) metrical structure} if it is $\mathcal{H}$- and $\mathcal{V}$-(pseudo) metrical structures.
\begin{definition}
If there exists a (pseudo) metrical structure%
\begin{equation}\label{metr}
G=g_{\alpha \beta }d\tilde{z}^{\alpha }\otimes d\tilde{z}^{\beta
}+g_{\circ\circ}\delta \tilde{y}^{\circ}\otimes \delta \tilde{y}^{\circ},
\end{equation}
such that
\begin{equation}\label{AP5}
\begin{array}{c}
\left( \rho ,\eta \right) D_{X}G=0,~\forall X\in \Gamma \left( \left( \rho
,\eta \right) TE,\left( \rho ,\eta \right) \tau _{E},E\right),
\end{array}
\end{equation}
then the Lie algebroid generalized tangent bundle
\begin{equation*}
\begin{array}{c}
\left( \left( \left( \rho ,\eta \right) TE,\left( \rho ,\eta \right) \tau
_{E},E\right) ,\left[ ,\right] _{\left( \rho ,\eta \right) TE},\left( \tilde{%
\rho},Id_{E}\right) \right),
\end{array}%
\end{equation*}%
is called (pseudo) generalized Kaluza-Klein $G$-space. Moreover, the local real functions $(\rho, \eta)\Gamma^\circ_\alpha$ and $(\rho, \eta)R^\circ_{\,\ \alpha\beta}$ are called the electromagnetic potentials and the components of the electromagnetic field, respectively.
\end{definition}
One can deduce that the condition (\ref{AP5}) is equivalent with the following:
\begin{equation*}
\begin{array}{c}
g_{\alpha \beta \mid \gamma}=0,\ \ g_{\circ\circ\mid \gamma }=0,\ \ g_{\alpha \beta
}\mid _{\circ}=0\,,\,\,g_{\circ\circ}\mid _{\circ}=0.%
\end{array}
\end{equation*}
If $g_{\alpha \beta \mid \gamma }{=}0$ and $\,g_{\circ\circ\mid \gamma }{=}0$, then
we will say that the Lie algebroid generalized tangent bundle 
\begin{equation*}
\begin{array}{c}
\left( \left( \left( \rho ,\eta \right) TE,\left( \rho ,\eta \right) \tau
_{E},E\right) ,\left[ ,\right] _{\left( \rho ,\eta \right) TE},\left( \tilde{%
\rho},Id_{E}\right) \right),
\end{array}%
\end{equation*}
is \textit{$\mathcal{H}$-(pseudo) generalized Kaluza-Klein $G$-space}. Also, if $g_{\alpha \beta }|_{\circ}{=}0$ and $\,g_{\circ\circ}|_{\circ}{=}0$, then we say
that the Lie algebroid generalized tangent bundle
\begin{equation*}
\begin{array}{c}
\left( \left( \left( \rho ,\eta \right) TE,\left( \rho ,\eta \right) \tau
_{E},E\right) ,\left[ ,\right] _{\left( \rho ,\eta \right) TE},\left( \tilde{%
\rho},Id_{E}\right) \right),
\end{array}%
\end{equation*}
is \textit{$\mathcal{V}$-(pseudo) generalized Kaluza-Klein $G$-space}.
\begin{theorem}\label{Th}
If $\left( \left( \rho ,\eta \right) \mathring{H}%
,\left( \rho ,\eta \right) \mathring{V}\right) $ is a distinguished
linear $\left( \rho ,\eta \right) $-connection for the Lie algebroid generalized tangent bundle
\begin{equation*}
\begin{array}{c}
\left( \left( \left( \rho ,\eta \right) TE,\left( \rho ,\eta \right) \tau
_{E},E\right) ,\left[ ,\right] _{\left( \rho ,\eta \right) TE},\left( \tilde{%
\rho},Id_{E}\right)\right),
\end{array}%
\end{equation*}
and $G$ is a (pseudo) metrical structure given by (\ref{metr}), then the real local functions:
\begin{equation*}
\begin{array}{ll}
\left( \rho ,\eta \right) H_{\beta \gamma }^{\alpha }\!\! & =\displaystyle%
\frac{1}{2}\tilde{g}^{\alpha \varepsilon }\left( \Gamma \left( \tilde{\rho}%
,Id_{E}\right) \left( \tilde{\delta}_{\gamma }\right) g_{\varepsilon \beta
}+\Gamma \left( \tilde{\rho},Id_{E}\right) \left( \tilde{\delta}_{\beta
}\right) g_{\varepsilon \gamma }-\Gamma \left( \tilde{\rho},Id_{E}\right)
\left( \tilde{\delta}_{\varepsilon }\right) g_{\beta \gamma }\right.
\vspace*{1mm} \\
& \left. +g_{\theta \varepsilon }L_{\gamma \beta }^{\theta }\circ h\circ \pi
-g_{\beta \theta }L_{\gamma \varepsilon }^{\theta }\circ h\circ \pi
-g_{\theta \gamma }L_{\beta \varepsilon }^{\theta }\circ h\circ \pi \right),
\vspace*{2mm} \\
\left( \rho ,\eta \right) H_{\circ\gamma }^{\circ}\!\! & =\left( \rho ,\eta \right)
\mathring{H}_{\circ\gamma }^{\circ}+\displaystyle\frac{1}{2}\tilde{g}^{\circ\circ}g_{\circ\circ%
\overset{0}{\mid }\gamma },\vspace*{2mm} \\
\left( \rho ,\eta \right) V_{\beta \circ}^{\alpha }\!\! & =\left( \rho ,\eta
\right) \mathring{V}_{\beta \circ}^{\alpha }+\displaystyle\frac{1}{2}\tilde{g}%
^{\alpha \varepsilon }g_{\beta \varepsilon \overset{\circ}{\mid }\circ},\vspace*{2mm}
\\
\left( \rho ,\eta \right) V_{\circ\circ}^{\circ}\!\! & =\displaystyle\frac{1}{2}\tilde{g}%
^{\circ\circ}\left( \Gamma \left( \tilde{\rho},Id_{E}\right) \left( \dot{%
\tilde{\partial}}_{\circ}\right) g_{\circ\circ}+\Gamma \left( \tilde{\rho},Id_{E}\right)
\left( \dot{\tilde{\partial}}_{\circ}\right) g_{\circ\circ}-\Gamma \left(
\tilde{\rho},Id_{E}\right) \left( \dot{\tilde{\partial}}%
_{\circ}\right) g_{\circ\circ}\right),
\end{array}%
\end{equation*}
are the components of a distinguished linear $(\rho, \eta)$-connection such that the Lie algebroid generalized tangent bundle
\[
\Big(((\rho, \eta)TE, (\rho, \eta)\tau_E, E), [, ]_{(\rho, \eta)TE}, (\tilde{\rho}, Id_E)\Big),
\]
becomes (pseudo) generalized Kaluza-Klein G-space.
\end{theorem}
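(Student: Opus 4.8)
The plan is to verify directly that the five families of local functions displayed in the statement transform according to the change-relations for a distinguished linear $(\rho,\eta)$-connection (the theorem four lines above the one about the Berwald connection), and then to check that the associated covariant derivative $(\rho,\eta)D$ annihilates $G$. For the first task I would treat the four blocks — the $H^\alpha_{\beta\gamma}$ block, the $H^\circ_{\circ\gamma}$ block, the $V^\alpha_{\beta\circ}$ block, and the $V^\circ_{\circ\circ}$ block — separately. The $H^\alpha_{\beta\gamma}$ expression is a Christoffel-type formula built out of $\Gamma(\tilde\rho,Id_E)(\tilde\delta_\gamma)g_{\varepsilon\beta}$ and the structure functions $L^\theta_{\alpha\beta}\circ h\circ\pi$; since $g_{\alpha\beta}$ is a genuine tensor component (so it transforms by two factors $\Lambda^{\bullet}_{\bullet}\circ h\circ\pi$), since $\tilde g^{\alpha\varepsilon}$ transforms inversely, and since the $L$'s satisfy the cocycle-type rule that makes them ``connection-like,'' the verification is the usual computation showing that a Levi-Civita-type symbol transforms with the extra inhomogeneous term $\Lambda^{\alpha}_\alpha\circ h\circ\pi\cdot\Gamma(\tilde\rho,Id_E)(\tilde\delta_\gamma)(\Lambda^\alpha_{\beta'}\circ h\circ\pi)$. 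The remaining three blocks are written as a \emph{correction} of an already-given distinguished linear connection $((\rho,\eta)\mathring H,(\rho,\eta)\mathring V)$ by a term of the form $\tfrac12\tilde g^{\bullet\bullet}(\text{covariant derivative of }g)$; the point is that the difference of two distinguished linear $(\rho,\eta)$-connections is a tensor, so I only need to check that each correction term is tensorial in its indices, which follows because $g_{\bullet\bullet\,\overset{0}{\mid}\gamma}$, $g_{\bullet\bullet\,\overset{\circ}{\mid}\circ}$ are components of covariant derivatives taken with respect to the fixed connection $(\mathring H,\mathring V)$ and hence transform tensorially.

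Next I would establish the metricity $(\rho,\eta)D_XG=0$. By the derivative formula in the tensor-calculus theorem (the long one with $T_{\cdots\mid\gamma}$ and $T_{\cdots\mid\circ}$), this is equivalent to the four scalar identities
\[
g_{\alpha\beta\mid\gamma}=0,\quad g_{\circ\circ\mid\gamma}=0,\quad g_{\alpha\beta}\mid_\circ=0,\quad g_{\circ\circ}\mid_\circ=0,
\]
as already noted in the text just before the theorem. For $g_{\alpha\beta\mid\gamma}$ I would expand
\[
g_{\alpha\beta\mid\gamma}=\Gamma(\tilde\rho,Id_E)(\tilde\delta_\gamma)g_{\alpha\beta}-(\rho,\eta)H^\theta_{\alpha\gamma}g_{\theta\beta}-(\rho,\eta)H^\theta_{\beta\gamma}g_{\alpha\theta},
\]
substitute the explicit Christoffel-type formula for $(\rho,\eta)H^\alpha_{\beta\gamma}$, and watch the standard cancellation: the three ``$\Gamma(\tilde\rho,Id_E)(\tilde\delta)g$'' terms and the three ``$gL$'' terms combine in pairs and vanish, exactly as in the classical Koszul computation, using $L^\gamma_{\alpha\beta}=-L^\gamma_{\beta\alpha}$ and symmetry of $g_{\alpha\beta}$. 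For $g_{\circ\circ}\mid_\circ$ the formula for $(\rho,\eta)V^\circ_{\circ\circ}$ is again of Christoffel form (all three copies are $\Gamma(\tilde\rho,Id_E)(\dot{\tilde\partial}_\circ)g_{\circ\circ}$, so it collapses to $\tfrac12\tilde g^{\circ\circ}\Gamma(\tilde\rho,Id_E)(\dot{\tilde\partial}_\circ)g_{\circ\circ}$), and one checks $g_{\circ\circ}\mid_\circ=\Gamma(\tilde\rho,Id_E)(\dot{\tilde\partial}_\circ)g_{\circ\circ}-2(\rho,\eta)V^\circ_{\circ\circ}g_{\circ\circ}=0$ immediately. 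For the two mixed equations $g_{\circ\circ\mid\gamma}=0$ and $g_{\alpha\beta}\mid_\circ=0$ I would plug in the corrected expressions $(\rho,\eta)H^\circ_{\circ\gamma}=(\rho,\eta)\mathring H^\circ_{\circ\gamma}+\tfrac12\tilde g^{\circ\circ}g_{\circ\circ\overset{0}{\mid}\gamma}$ and $(\rho,\eta)V^\alpha_{\beta\circ}=(\rho,\eta)\mathring V^\alpha_{\beta\circ}+\tfrac12\tilde g^{\alpha\varepsilon}g_{\beta\varepsilon\overset{\circ}{\mid}\circ}$; here the $\mathring{}$-connection contributes exactly the $\mathring{}$-covariant derivative of $g$, and the correction term is designed to cancel it, so that $g_{\circ\circ\mid\gamma}=g_{\circ\circ\overset{0}{\mid}\gamma}-\tilde g^{\circ\circ}g_{\circ\circ\overset{0}{\mid}\gamma}g_{\circ\circ}=0$ and similarly for the other.

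The main obstacle I anticipate is purely bookkeeping rather than conceptual: one must be careful with how the anchor-image $\Gamma(\tilde\rho,Id_E)(\tilde\delta_\gamma)$ acts (it is the adapted-base derivation $\delta_\gamma$-type operator read off from the earlier formula $\Gamma(\tilde\rho,Id_E)(\tilde\delta_\alpha)=(\rho^i_\alpha\circ h\circ\pi)\partial_i-(\rho,\eta)\Gamma^\circ_\alpha\dot\partial_\circ$), and with the fact that in the mixed blocks $g_{\circ\circ}$ carries only the single fibre index $\circ$ so some of the ``$gL$'' terms that appear in the purely horizontal computation are simply absent. Since the index $\circ$ is one-dimensional, several apparent $V$- and $H$-corrections degenerate, which is why the $\mathbb S$-type torsions vanished identically in the earlier theorems; I would exploit that to shorten the checks. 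Finally I would remark that taking $(\eta,h)=(Id_M,Id_M)$ and then $\rho=Id_{TM}$ recovers the metric ($=$ Levi-Civita-type distinguished) connection underlying Bejancu's Kaluza–Klein construction, which is the point of the subsequent section, and that by construction the resulting space satisfies Definition of a (pseudo) generalized Kaluza–Klein $G$-space, completing the proof. \rule{0.5em}{0.5em}
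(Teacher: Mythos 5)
Your proposal is correct; note that the paper states this theorem without supplying a proof, so there is nothing to compare against, but your argument is exactly the standard (Miron/Obata-type) verification one would expect: the Koszul cancellation for $g_{\alpha\beta\mid\gamma}=0$ using the symmetry of $g_{\alpha\beta}$ and the antisymmetry $L^{\theta}_{\gamma\varepsilon}=-L^{\theta}_{\varepsilon\gamma}$, the identities $g_{\circ\circ\mid\gamma}=g_{\circ\circ\overset{0}{\mid}\gamma}-\tilde g^{\circ\circ}g_{\circ\circ\overset{0}{\mid}\gamma}g_{\circ\circ}=0$ and $g_{\alpha\beta}\mid_{\circ}=g_{\alpha\beta}\overset{\circ}{\mid}_{\circ}-\frac12g_{\alpha\beta\overset{\circ}{\mid}\circ}-\frac12g_{\beta\alpha\overset{\circ}{\mid}\circ}=0$ for the corrected blocks, and the tensoriality of the correction terms as differences of distinguished linear $(\rho,\eta)$-connections. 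All of these checks go through as you describe.
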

\begin{corollary}
In the particular case of Lie algebroids, $%
\left( \eta ,h\right) =\left( Id_{M},Id_{M}\right)$, then we obtain
\begin{equation*}
\begin{array}{ll}
\rho H_{\beta \gamma }^{\alpha }\!\! & =\displaystyle\frac{1}{2}\tilde{g}%
^{\alpha \varepsilon }\left( \Gamma \left( \tilde{\rho},Id_{E}\right) \left(
\tilde{\delta}_{\gamma }\right) g_{\varepsilon \beta }+\Gamma \left( \tilde{%
\rho},Id_{E}\right) \left( \tilde{\delta}_{\beta }\right) g_{\varepsilon
\gamma }-\Gamma \left( \tilde{\rho},Id_{E}\right) \left( \tilde{\delta}%
_{\varepsilon }\right) g_{\beta \gamma }\right. \vspace*{1mm} \\
& \left. +g_{\theta \varepsilon }L_{\gamma \beta }^{\theta }\circ \pi
-g_{\beta \theta }L_{\gamma \varepsilon }^{\theta }\circ \pi -g_{\theta
\gamma }L_{\beta \varepsilon }^{\theta }\circ \pi \right) ,\vspace*{2mm} \\
\rho H_{\circ\gamma }^{\circ}\!\! & =\rho \mathring{H}_{\circ\gamma }^{\circ}+\displaystyle%
\frac{1}{2}\tilde{g}^{\circ\circ}g_{\circ\circ\overset{\circ}{\mid }\gamma },\vspace*{2mm} \\
\rho V_{\beta \circ}^{\alpha }\!\! & =\rho \mathring{V}_{\beta \circ}^{\alpha }+%
\displaystyle\frac{1}{2}\tilde{g}^{\alpha \varepsilon }g_{\beta \varepsilon
\overset{\circ}{\mid }\circ},\vspace*{2mm} \\
\rho V_{\circ\circ}^{\circ}\!\! & =\displaystyle\frac{1}{2}\tilde{g}^{\circ\circ}\left( \Gamma
\left( \tilde{\rho},Id_{E}\right) \left( \dot{\tilde{\partial}}%
_{\circ}\right) g_{\circ\circ}+\Gamma \left( \tilde{\rho},Id_{E}\right) \left( \dot{\tilde{\partial}}_{\circ}\right) g_{\circ\circ}-\Gamma \left( \tilde{\rho}%
,Id_{E}\right) \left( \dot{\tilde{\partial}}_{\circ}\right)
g_{\circ\circ}\right) \vspace*{1mm}%
\end{array}%
\end{equation*}
\end{corollary}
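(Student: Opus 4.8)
The plan is to establish two facts: that the listed local functions obey the change relations characterizing a distinguished linear $\left(\rho,\eta\right)$-connection (so that they genuinely define such an object on $\left(\left(\rho,\eta\right)TE,\left(\rho,\eta\right)\tau_E,E\right)$), and that the resulting connection is metrical, i.e. $\left(\rho,\eta\right)D_XG=0$. By the equivalence recorded immediately after the definition of a (pseudo) generalized Kaluza-Klein $G$-space, the second fact amounts to the four scalar identities $g_{\alpha\beta\mid\gamma}=0$, $g_{\circ\circ\mid\gamma}=0$, $g_{\alpha\beta}\mid_\circ=0$, $g_{\circ\circ}\mid_\circ=0$.

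For the transformation laws I would treat the four families of coefficients separately. The components $\left(\rho,\eta\right)H^\circ_{\circ\gamma}$ and $\left(\rho,\eta\right)V^\alpha_{\beta\circ}$ are given as $\left(\rho,\eta\right)\mathring{H}^\circ_{\circ\gamma}$, respectively $\left(\rho,\eta\right)\mathring{V}^\alpha_{\beta\circ}$, plus a term built entirely from $G$ and its covariant derivatives with respect to the fixed distinguished linear $\left(\rho,\eta\right)$-connection $\left(\left(\rho,\eta\right)\mathring{H},\left(\rho,\eta\right)\mathring{V}\right)$; such a correction has tensorial transformation behaviour, so adding it to the coefficients of a d-connection again yields coefficients satisfying the required change relations. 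For the genuine Christoffel-type components $\left(\rho,\eta\right)H^\alpha_{\beta\gamma}$ and $\left(\rho,\eta\right)V^\circ_{\circ\circ}$, I would verify the inhomogeneous transformation directly, using that $\Gamma\left(\tilde{\rho},Id_E\right)\left(\tilde{\delta}_\gamma\right)$ is a derivation whose action on the transition functions $\Lambda^\alpha_{\alpha'}\circ h\circ\pi$ reproduces precisely the inhomogeneous summand appearing in the change relation for $H$, together with the transformation rule of the structure functions $L^\gamma_{\alpha\beta}$ recalled in Section 2; alternatively, one may invoke the standard uniqueness argument, since the Koszul-type expression is exactly the solution of the linear system consisting of $g_{\alpha\beta\mid\gamma}=0$ and $\left(\rho,\eta,h\right)\mathbb{T}^\alpha_{\beta\gamma}=0$, and the existence part of that argument already exhibits the solution as a bona fide d-connection.

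For metricity, the identities $g_{\circ\circ\mid\gamma}=0$, $g_{\alpha\beta}\mid_\circ=0$ and $g_{\circ\circ}\mid_\circ=0$ follow by direct substitution into the tensor-differentiation formula established above: the $\mathring{}$-correction terms are arranged so that, using $\tilde{g}^{\circ\circ}g_{\circ\circ}=1$, $\tilde{g}^{\alpha\varepsilon}g_{\varepsilon\beta}=\delta^\alpha_\beta$ and the symmetry of $G$, one gets $g_{\circ\circ\mid\gamma}=g_{\circ\circ\overset{0}{\mid}\gamma}-g_{\circ\circ\overset{0}{\mid}\gamma}=0$, and similarly in the vertical slot. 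The remaining identity $g_{\alpha\beta\mid\gamma}=0$ is the core computation: writing $g_{\alpha\beta\mid\gamma}=\Gamma\left(\tilde{\rho},Id_E\right)\left(\tilde{\delta}_\gamma\right)g_{\alpha\beta}-\left(\rho,\eta\right)H^\theta_{\alpha\gamma}g_{\theta\beta}-\left(\rho,\eta\right)H^\theta_{\beta\gamma}g_{\alpha\theta}$, substituting the proposed Koszul expression for $\left(\rho,\eta\right)H$, contracting with $G$ and collecting terms, the three cyclically permuted $\Gamma\left(\tilde{\rho},Id_E\right)\left(\tilde{\delta}\right)g$ derivative terms cancel in pairs and the six structure-function terms cancel by antisymmetry $L^\gamma_{\alpha\beta}=-L^\gamma_{\beta\alpha}$ and symmetry $g_{\alpha\beta}=g_{\beta\alpha}$.

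The main obstacle I anticipate is the bookkeeping in this last step: keeping the anchored frame vectors $\Gamma\left(\tilde{\rho},Id_E\right)\left(\tilde{\delta}_\gamma\right)$ as derivations so that the Leibniz rule is correctly applied to products such as $\tilde{g}^{\alpha\varepsilon}g_{\varepsilon\beta}$, tracking the raising and lowering of indices with $\tilde{g}$ (which is exactly why $G$ is assumed symmetric), and matching each structure-function term to its cancelling partner with the correct sign. Once $g_{\alpha\beta\mid\gamma}=0$ is in hand the other three identities are routine, and a final appeal to the equivalence stated in the excerpt completes the proof.
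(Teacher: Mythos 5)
You have, in effect, re-proved the general metrizability theorem rather than deduced the corollary from it. In the paper this corollary carries no independent argument: it is the immediate specialization of the theorem stated just before it (the one producing a metrical distinguished linear $(\rho,\eta)$-connection from a reference connection $\left((\rho,\eta)\mathring{H},(\rho,\eta)\mathring{V}\right)$ and the (pseudo) metrical structure $G$) to the Lie algebroid case $(\eta,h)=(Id_{M},Id_{M})$, where the only change is that every occurrence of $h\circ\pi$ collapses to $\pi$; nothing else in the displayed formulas depends on $h$. Your route instead verifies from scratch that the listed local functions transform as the coefficients of a distinguished linear connection and annihilate $G$, i.e.\ it amounts to a proof of the theorem itself (restricted to $h=Id_{M}$), and the individual steps are sound: the corrections added to $\mathring{H}^{\circ}_{\circ\gamma}$ and $\mathring{V}^{\alpha}_{\beta\circ}$ are indeed tensorial, so the change relations survive; the cancellations $g_{\circ\circ\mid\gamma}=g_{\circ\circ\overset{\circ}{\mid}\gamma}-\tilde{g}^{\circ\circ}g_{\circ\circ\overset{\circ}{\mid}\gamma}\,g_{\circ\circ}=0$ and $g_{\alpha\beta}\mid_{\circ}=g_{\alpha\beta\overset{\circ}{\mid}\circ}-\tfrac{1}{2}g_{\alpha\beta\overset{\circ}{\mid}\circ}-\tfrac{1}{2}g_{\beta\alpha\overset{\circ}{\mid}\circ}=0$ work exactly as you describe; and the Koszul computation for $g_{\alpha\beta\mid\gamma}=0$ closes using $L^{\theta}_{\alpha\beta}=-L^{\theta}_{\beta\alpha}$ and the symmetry of $g_{\alpha\beta}$. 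What your approach buys is a self-contained verification, which has some value since the paper omits the proof of the general theorem as well; what it costs is the logical economy intended here, because once that theorem is granted the corollary requires nothing beyond the substitution $h=Id_{M}$ (and one should note that the transformation rule for the structure functions $L^{\gamma}_{\alpha\beta}$, which your direct check of the change relation for $H^{\alpha}_{\beta\gamma}$ would need, is not actually recorded in Section~2, so the uniqueness-of-the-Koszul-solution variant of your argument is the safer one).
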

In the classical case, $\left( \rho ,\eta ,h\right) =\left(
Id_{TM},Id_{M},Id_{M}\right)$, we obtain
\begin{equation*}
\begin{array}{ll}
H_{jk}^{i}\!\! & =\displaystyle\frac{1}{2}\tilde{g}^{ih}\left( \delta
_{k}g_{hj}+\delta _{j}g_{hk}-\delta _{h}g_{jk}\vspace*{1mm}\right),\\
H_{\circ k}^{\circ}\!\! & =\mathring{H}_{\circ k}^{\circ}+\displaystyle\frac{1}{2}\tilde{g}%
^{\circ\circ}g_{\circ\circ\overset{\circ}{\mid }k},\vspace*{2mm} \\
V_{j\circ}^{i}\!\! & =\mathring{V}_{j\circ}^{i}+\displaystyle\frac{1}{2}\tilde{g}%
^{ih}g_{jh\overset{\circ}{\mid }\circ},\vspace*{2mm} \\
V_{\circ\circ}^{\circ}\!\! & =\displaystyle\frac{1}{2}\tilde{g}^{\circ\circ}\left( \dot{\partial}%
_{\circ}g_{_{\circ\circ}}+\dot{\partial}_{\circ}g_{_{\circ\circ}}-\dot{\partial}_{\circ}g_{_{\circ\circ}}\right).
\end{array}
\end{equation*}
\begin{theorem}
If the distinguished linear $%
\left( \rho ,\eta \right) $-connection $\left( \left( \rho ,\eta
\right) \mathring{H},\left( \rho ,\eta \right) \mathring{V}\right) $ coincides with the Berwald linear $\left( \rho ,\eta \right) $-connection in Theorem \ref{Th}, then the local real functions:
\begin{equation*}
\begin{array}{ll}
\left( \rho ,\eta \right) \overset{c}{H}_{\beta \gamma }^{\alpha }\!\!\! & =%
\displaystyle\frac{1}{2}\tilde{g}^{\alpha \varepsilon }\left( \Gamma \left(
\tilde{\rho},Id_{E}\right) \left( \tilde{\delta}_{\gamma }\right)
g_{\varepsilon \beta }+\Gamma \left( \tilde{\rho},Id_{E}\right) \left(
\tilde{\delta}_{\beta }\right) g_{\varepsilon \gamma }\right. \vspace*{1mm}
\\
& -\Gamma \left( \tilde{\rho},Id_{E}\right) \left( \tilde{\delta}%
_{\varepsilon }\right) g_{\beta \gamma }+g_{\theta \varepsilon }L_{\gamma
\beta }^{\theta }\circ h\circ \pi \vspace*{1mm}\\
&\left. -g_{\beta
\theta }L_{\gamma \varepsilon }^{\theta }\circ h\circ \pi-g_{\theta \gamma
}L_{\beta \varepsilon }^{\theta }\circ h\circ \pi \right),\vspace*{2mm}\\
\left( \rho ,\eta \right) \overset{c}{H}_{\circ\gamma}^{\circ}\!\!\! & =%
\displaystyle\dot{\partial}_\circ\left(\left( \rho ,\eta \right) \Gamma _{\gamma }^{\circ}\right)+\frac{1}{2}\tilde{g}^{\circ\circ}g_{\circ\circ\overset{\circ}{\mid }\gamma },%
\vspace*{2mm} \\
\left( \rho ,\eta \right) \overset{c}{V}_{\beta \circ}^{\alpha }\!\! & =%
\displaystyle\frac{1}{2}\tilde{g}^{\alpha \varepsilon }\dot{\partial}_\circ
g_{\beta \varepsilon },\vspace*{2mm} \\
\left( \rho ,\eta \right) \overset{c}{V}_{\circ\circ}^{\circ}\!\!\! & =\displaystyle%
\frac{1}{2}\tilde{g}^{\circ\circ}\dot{\partial}_\circ g_{\circ\circ},
\end{array}
\end{equation*}%
are the components of a distinguished linear $\left( \rho ,\eta
\right) $-connection such that the Lie algebroid generalized tangent
bundle
\begin{equation*}
\begin{array}{c}
\left( \left( \left( \rho ,\eta \right) TE,\left( \rho ,\eta \right) \tau
_{E},E\right) ,\left[ ,\right] _{\left( \rho ,\eta \right) TE},\left( \tilde{%
\rho},Id_{E}\right) \right),
\end{array}%
\end{equation*}%
becomes (pseudo) generalized Kaluza-Klein $G$-space.
\end{theorem}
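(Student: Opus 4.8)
The plan is to obtain this statement as an immediate specialization of Theorem \ref{Th}, taking the auxiliary distinguished linear $\left(\rho,\eta\right)$-connection $\left(\left(\rho,\eta\right)\mathring{H},\left(\rho,\eta\right)\mathring{V}\right)$ appearing there to be the Berwald linear $\left(\rho,\eta\right)$-connection. First I would recall from the Example introducing the Berwald linear $\left(\rho,\eta\right)$-connection that it is a bona fide distinguished linear $\left(\rho,\eta\right)$-connection for the generalized tangent bundle, whose relevant components are
\[
\left(\rho,\eta\right)\mathring{H}_{\circ\gamma}^{\circ}=\dot{\partial}_{\circ}\left(\left(\rho,\eta\right)\Gamma_{\gamma}^{\circ}\right),\qquad\left(\rho,\eta\right)\mathring{V}_{\beta\circ}^{\alpha}=0,\qquad\left(\rho,\eta\right)\mathring{V}_{\circ\circ}^{\circ}=0;
\]
in particular it is a legitimate input for Theorem \ref{Th}.

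Applying Theorem \ref{Th} to this particular $\left(\left(\rho,\eta\right)\mathring{H},\left(\rho,\eta\right)\mathring{V}\right)$ and to the given (pseudo) metrical structure $G$ of the form (\ref{metr}), the four local functions built by that theorem are automatically the components of a distinguished linear $\left(\rho,\eta\right)$-connection for which the Lie algebroid generalized tangent bundle becomes a (pseudo) generalized Kaluza-Klein $G$-space; equivalently, $g_{\alpha\beta\mid\gamma}=0$, $g_{\circ\circ\mid\gamma}=0$, $g_{\alpha\beta}\mid_{\circ}=0$ and $g_{\circ\circ}\mid_{\circ}=0$ hold. Hence the whole structural content of the present statement is inherited directly from Theorem \ref{Th}; no independent check of the connection axioms or of the metric-compatibility identities is needed.

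It then remains only to verify that substituting the Berwald components into the four formulas of Theorem \ref{Th} reproduces the displayed expressions for $\left(\rho,\eta\right)\overset{c}{H}_{\beta\gamma}^{\alpha}$, $\left(\rho,\eta\right)\overset{c}{H}_{\circ\gamma}^{\circ}$, $\left(\rho,\eta\right)\overset{c}{V}_{\beta\circ}^{\alpha}$ and $\left(\rho,\eta\right)\overset{c}{V}_{\circ\circ}^{\circ}$. The formulas for $H_{\beta\gamma}^{\alpha}$ and $V_{\circ\circ}^{\circ}$ in Theorem \ref{Th} do not involve the auxiliary connection, so they are transcribed verbatim---in the second one, $\Gamma\left(\tilde{\rho},Id_{E}\right)\left(\dot{\tilde{\partial}}_{\circ}\right)$ acts on a function as $\dot{\partial}_{\circ}$ and the three terms collapse to $\tfrac{1}{2}\tilde{g}^{\circ\circ}\dot{\partial}_{\circ}g_{\circ\circ}$. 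In the formula for $H_{\circ\gamma}^{\circ}$ one simply substitutes $\left(\rho,\eta\right)\mathring{H}_{\circ\gamma}^{\circ}=\dot{\partial}_{\circ}\left(\left(\rho,\eta\right)\Gamma_{\gamma}^{\circ}\right)$, leaving the covariant derivative $g_{\circ\circ\overset{\circ}{\mid}\gamma}$ (now taken with respect to the Berwald connection) untouched. In the formula for $V_{\beta\circ}^{\alpha}$ one uses $\left(\rho,\eta\right)\mathring{V}_{\beta\circ}^{\alpha}=0$ together with the fact that, the vertical part of the Berwald connection vanishing, $g_{\beta\varepsilon\overset{\circ}{\mid}\circ}$ reduces to $\dot{\partial}_{\circ}g_{\beta\varepsilon}$, giving $\left(\rho,\eta\right)\overset{c}{V}_{\beta\circ}^{\alpha}=\tfrac{1}{2}\tilde{g}^{\alpha\varepsilon}\dot{\partial}_{\circ}g_{\beta\varepsilon}$. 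The step I expect to be the main---indeed the only---obstacle, modest as it is, is precisely this bookkeeping: unpacking the shorthands $\overset{0}{\mid}$ and $\overset{\circ}{\mid}$ relative to the Berwald connection and checking that the index contractions and signs match the stated formulas. No new hypotheses enter, since the nondegeneracy of $\left(g_{\alpha\beta}\right)$ and the invertibility of $g_{\circ\circ}$ needed to form $\tilde{g}^{\alpha\varepsilon}$ and $\tilde{g}^{\circ\circ}$ are already part of the assumption that $G$ is a (pseudo) metrical structure.
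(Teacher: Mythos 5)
Your proposal is correct and follows exactly the route the paper intends: the theorem is an immediate specialization of Theorem \ref{Th} obtained by substituting the Berwald components $\bigl(\dot{\partial}_{\circ}((\rho,\eta)\Gamma_{\gamma}^{\circ}),\,\dot{\partial}_{\circ}((\rho,\eta)\Gamma_{\gamma}^{\circ}),\,0,\,0\bigr)$ for $\bigl((\rho,\eta)\mathring{H},(\rho,\eta)\mathring{V}\bigr)$, and the paper in fact supplies no separate proof beyond this substitution. Your bookkeeping of the shorthands $\overset{\circ}{\mid}$ (in particular that the vanishing vertical Berwald components reduce $g_{\beta\varepsilon\overset{\circ}{\mid}\circ}$ to $\dot{\partial}_{\circ}g_{\beta\varepsilon}$) is exactly the only content that needs checking, and it checks out.
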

Moreover, if the (pseudo) metrical structure $G$ is $\mathcal{H%
}$- and $\mathcal{V}$-Riemannian, then the local real
functions:
\begin{equation*}
\begin{array}{cl}
(\rho ,\eta )\overset{c}{H}_{\beta \gamma }^{\alpha } & {=}\frac{1}{2}\tilde{%
g}^{\alpha \varepsilon }\left( \rho _{\gamma }^{k}{\circ }h{\circ }\pi \frac{%
\partial g_{\varepsilon \beta }}{\partial x^{k}}+\rho _{\beta }^{j}{\circ }h{%
\circ }\pi \frac{\partial g_{\varepsilon \gamma }}{\partial x^{j}}-\rho
_{\varepsilon }^{e}{\circ }h{\circ }\pi \frac{\partial g_{\beta \gamma }}{%
\partial x^{e}}+\right.  \\
& \left. +g_{\theta \varepsilon }L_{\gamma \beta }^{\theta }{\circ }h{\circ }%
\pi -g_{\beta \theta }L_{\gamma \varepsilon }^{\theta }{\circ }h{\circ }\pi
-g_{\theta \gamma }L_{\beta \varepsilon }^{\theta }{\circ }h{\circ }\pi
\right) ,\vspace*{1mm} \\
\left( \rho ,\eta \right) \overset{c}{H}_{\circ\gamma }^{\circ} & {=}\dot{\partial}_\circ
\left(\left( \rho ,\eta \right) \Gamma _{\gamma }^{\circ}\right)+\frac{1}{2}%
\tilde{g}^{\circ\circ}\left( \rho _{\gamma }^{i}{\circ }h{\circ }\pi \frac{\partial
g_{_{\circ\circ}}}{\partial x^{i}}-2\dot{\partial}_\circ
\left(\left( \rho ,\eta \right) \Gamma _{\gamma }^{\circ}\right)\right) , \\
\left( \rho ,\eta \right) \overset{c}{V}_{\beta \circ}^{\alpha } & =0, \\
\left( \rho ,\eta \right) \overset{c}{V}_{\circ\circ}^{\circ} & =0.%
\end{array}
\end{equation*}
are the components of a distinguished linear $\left( \rho ,\eta
\right) $-connection such that the Lie algebroid generalized tangent
bundle becomes (pseudo) generalized Kaluza-Klein $G$-space.
\begin{corollary}
In the particular case of Lie algebroids,
$\left( \eta ,h\right) =\left( Id_{M},Id_{M}\right)$, then we obtain
\begin{equation*}
\begin{array}{ll}
\rho \overset{c}{H}_{\beta \gamma }^{\alpha }\!\!\! & =\displaystyle\frac{1}{%
2}\tilde{g}^{\alpha \varepsilon }\left( \Gamma \left( \tilde{\rho}%
,Id_{E}\right) \left( \tilde{\delta}_{\gamma }\right) g_{\varepsilon \beta
}+\Gamma \left( \tilde{\rho},Id_{E}\right) \left( \tilde{\delta}_{\beta
}\right) g_{\varepsilon \gamma }\right. \vspace*{1mm}\\
&\ \ -\Gamma \left( \tilde{\rho},Id_{E}\right) \left( \tilde{\delta}%
_{\varepsilon }\right) g_{\beta \gamma }+g_{\theta \varepsilon }L_{\gamma
\beta }^{\theta }\circ \pi \vspace*{1mm}\left. -g_{\beta \theta }L_{\gamma
\varepsilon }^{\theta }\circ \pi -g_{\theta \gamma }L_{\beta \varepsilon
}^{\theta }\circ \pi \right),\\
\rho \overset{c}{H}_{\circ\gamma }^{\circ}\!\!\! & =\displaystyle\dot{\partial}_\circ(\rho
\Gamma _{\gamma }^{\circ})+\frac{1}{2}\tilde{g}^{\circ\circ}g_{\circ\circ\overset%
{\circ}{\mid }\gamma },\vspace*{2mm} \\
\rho \overset{c}{V}_{\beta \circ}^{\alpha }\!\! & =\displaystyle\frac{1}{2}%
\tilde{g}^{\alpha \varepsilon }\dot{\partial}_\circ g_{\beta \varepsilon },\vspace*{2mm} \\
\rho \overset{c}{V}_{\circ\circ}^{\circ}\!\!\! & =\displaystyle\frac{1}{2}\tilde{g}%
^{\circ\circ}\dot{\partial}_\circ g_{\circ\circ}.
\end{array}
\end{equation*}
If the (pseudo) metrical structure $G$ is $\mathcal{H}$-and $\mathcal{V}$-Riemannian, then
\begin{equation*}
\begin{array}{l}
\rho \overset{c}{H}_{\beta \gamma }^{\alpha }{=}\displaystyle\frac{1}{2}%
\tilde{g}^{\alpha \varepsilon }\left( \rho _{\gamma }^{k}{\circ }\pi \frac{%
\partial g_{\varepsilon \beta }}{\partial x^{k}}+\rho _{\beta }^{j}{\circ }%
\pi \frac{\partial g_{\varepsilon \gamma }}{\partial x^{j}}-\rho
_{\varepsilon }^{e}{\circ }\pi \frac{\partial g_{\beta \gamma }}{\partial
x^{e}}+\right. \vspace*{1mm} \\
\hfill \left. +g_{\theta \varepsilon }L_{\gamma \beta }^{\theta }{\circ }\pi
-g_{\beta \theta }L_{\gamma \varepsilon }^{\theta }{\circ }\pi -g_{\theta
\gamma }L_{\beta \varepsilon }^{\theta }{\circ }\pi \right) ,\vspace*{1mm}
\\
\rho \overset{c}{H}_{\circ\gamma }^{\circ}{=}\dot{\partial}_\circ(\rho \Gamma _{\gamma
}^{\circ})+\frac{1}{2}\tilde{g}^{\circ\circ}\rho _{\gamma }^{i}{\circ }%
\pi \frac{\partial g_{_{\circ\circ}}}{\partial x^{i}},\vspace*{2mm} \\
\rho \overset{c}{V}_{\beta \circ}^{\alpha }=0,\ \rho \overset{c}{V}_{00}^{0}=0.
\end{array}%
\end{equation*}
\end{corollary}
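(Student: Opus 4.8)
The plan is to deduce this theorem from Theorem~\ref{Th} by taking the auxiliary distinguished linear $\left( \rho ,\eta \right)$-connection $\left( \left( \rho ,\eta \right) \mathring{H},\left( \rho ,\eta \right) \mathring{V}\right)$ appearing there to be the Berwald linear $\left( \rho ,\eta \right)$-connection, whose components are $\left( \frac{\partial \left( \rho ,\eta \right) \Gamma _{\gamma }^{\circ}}{\partial y^{\circ}},\frac{\partial \left( \rho ,\eta \right) \Gamma _{\gamma }^{\circ}}{\partial y^{\circ}},0,0\right)$. Since those functions were already shown, in the Example introducing the Berwald connection, to satisfy the change relations characterizing a distinguished linear $\left( \rho ,\eta \right)$-connection, the pair $\left( \left( \rho ,\eta \right) \mathring{H},\left( \rho ,\eta \right) \mathring{V}\right)$ is an admissible input for Theorem~\ref{Th}; hence all hypotheses of that theorem hold with this choice of reference connection and the given (pseudo) metrical structure $G$.

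First I would substitute the Berwald components into the four formulas of Theorem~\ref{Th}. The expression there for $\left( \rho ,\eta \right) H_{\beta \gamma }^{\alpha }$ contains neither $\mathring{H}$ nor $\mathring{V}$: it is the Christoffel-type combination of the derivatives $\Gamma \left( \tilde{\rho},Id_{E}\right) \left( \tilde{\delta}_{\gamma }\right) g_{\varepsilon \beta }$ (and their index-permuted partners) with the structure functions $L_{\gamma \beta }^{\theta }\circ h\circ \pi $, so it reappears unchanged as $\left( \rho ,\eta \right) \overset{c}{H}_{\beta \gamma }^{\alpha }$. In $\left( \rho ,\eta \right) H_{\circ\gamma }^{\circ}=\left( \rho ,\eta \right) \mathring{H}_{\circ\gamma }^{\circ}+\frac{1}{2}\tilde{g}^{\circ\circ}g_{\circ\circ\overset{\circ}{\mid }\gamma }$, replacing $\left( \rho ,\eta \right) \mathring{H}_{\circ\gamma }^{\circ}$ by $\dot{\partial}_{\circ}\left( \left( \rho ,\eta \right) \Gamma _{\gamma }^{\circ}\right)$ gives the stated $\left( \rho ,\eta \right) \overset{c}{H}_{\circ\gamma }^{\circ}$. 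For the vertical components I would use that $\mathring{V}_{\beta \circ}^{\alpha }=0=\mathring{V}_{\circ\circ}^{\circ}$, which makes the vertical covariant derivative $g_{\beta \varepsilon \overset{\circ}{\mid }\circ}$ taken with respect to $\left( \mathring{H},\mathring{V}\right)$ collapse to the plain derivative $\dot{\partial}_{\circ}g_{\beta \varepsilon }$; thus $\left( \rho ,\eta \right) V_{\beta \circ}^{\alpha }=\left( \rho ,\eta \right) \mathring{V}_{\beta \circ}^{\alpha }+\frac{1}{2}\tilde{g}^{\alpha \varepsilon }g_{\beta \varepsilon \overset{\circ}{\mid }\circ}$ becomes $\frac{1}{2}\tilde{g}^{\alpha \varepsilon }\dot{\partial}_{\circ}g_{\beta \varepsilon }$, while the three coinciding $\Gamma \left( \tilde{\rho},Id_{E}\right) \left( \dot{\tilde{\partial}}_{\circ}\right) g_{\circ\circ}$ terms in the $V_{\circ\circ}^{\circ}$ formula of Theorem~\ref{Th} collapse to $\frac{1}{2}\tilde{g}^{\circ\circ}\dot{\partial}_{\circ}g_{\circ\circ}$. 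These are exactly the four displayed expressions.

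Finally I would invoke the conclusion of Theorem~\ref{Th}: for any admissible reference connection the functions produced there are the components of a distinguished linear $\left( \rho ,\eta \right)$-connection with respect to which $\left( \rho ,\eta \right) D_{X}G=0$, equivalently $g_{\alpha \beta \mid \gamma }=0$, $g_{\circ\circ\mid \gamma }=0$, $g_{\alpha \beta }\mid _{\circ}=0$, $g_{\circ\circ}\mid _{\circ}=0$, so the Lie algebroid generalized tangent bundle becomes a (pseudo) generalized Kaluza-Klein $G$-space; applying this with the Berwald connection proves the theorem. The subsequent $\mathcal{H}$- and $\mathcal{V}$-Riemannian refinement follows by the same substitution, using in addition that then $g_{\alpha \beta }$ and $g_{\circ\circ}$ depend only on $x$, so that $\dot{\partial}_{\circ}g_{\alpha \beta }=\dot{\partial}_{\circ}g_{\circ\circ}=0$ and $\Gamma \left( \tilde{\rho},Id_{E}\right) \left( \tilde{\delta}_{\gamma }\right)$ acts through its horizontal part $\left( \rho _{\gamma }^{k}\circ h\circ \pi \right) \partial _{k}$ alone. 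I expect the only delicate point to be notational bookkeeping --- keeping the covariant derivative $\overset{\circ}{\mid }$ of $g$, taken relative to $\left( \mathring{H},\mathring{V}\right)$, distinct from the covariant derivative $\mid $ relative to the metrical connection being constructed, and observing that it is precisely the vanishing of $\mathring{V}$ that collapses the former to $\dot{\partial}_{\circ}$ --- while everything else is routine substitution into the already-established formulas of Theorem~\ref{Th} and of the Example.
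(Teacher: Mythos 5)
Your proposal is correct and follows essentially the same route as the paper: the corollary is obtained by feeding the Berwald linear $\left( \rho ,\eta \right)$-connection into Theorem~\ref{Th} (so that the vertical covariant derivative $\overset{\circ}{\mid}$ collapses to $\dot{\partial}_{\circ}$ because $\mathring{V}=0$), and then specializing $\left( \eta ,h\right) =\left( Id_{M},Id_{M}\right)$, which merely replaces $\circ\, h\circ \pi$ by $\circ\, \pi$ in the structure functions; the Riemannian refinement likewise follows from $\dot{\partial}_{\circ}g_{\alpha\beta}=\dot{\partial}_{\circ}g_{\circ\circ}=0$. The only step you leave implicit is that final notational specialization to the Lie algebroid case, which is immediate.
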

In the classical case, $\left( \rho ,\eta ,h\right) =\left(
Id_{TM},Id_{M},Id_{M}\right)$, we obtain
\begin{equation*}
\begin{array}{ll}
\overset{c}{H}_{jk}^{i}\!\!\! & =\displaystyle\frac{1}{2}\tilde{g}%
^{ih}\left( \delta _{k}g_{hj}+\delta _{j}g_{hk}-\delta _{h}g_{jk}\right),\\
\overset{c}{H}_{\circ k}^{\circ}\!\!\! & =\displaystyle\dot{\partial}_\circ\Gamma _{k}^{\circ}+\frac{1}{2}\tilde{g}^{\circ\circ}g_{\circ\circ\overset{\circ}{\mid }k},%
\vspace*{2mm} \\
\overset{c}{V}_{j\circ}^{i}\!\! & =\displaystyle\frac{1}{2}\tilde{g}^{ih}\dot{\partial}_\circ g_{jh},\vspace*{2mm} \\
\overset{c}{V}_{\circ\circ}^{\circ}\!\!\! & =\displaystyle\frac{1}{2}\tilde{g}%
^{\circ\circ}\dot{\partial}_\circ g_{_{\circ\circ}}.
\end{array}%
\end{equation*}
If the (pseudo)metrical structure $G$ is $\mathcal{H}$- and $\mathcal{V}$-Riemannian, then
\begin{equation*}
\begin{array}{l}
\overset{c}{H}_{jk}^{i}{=}\displaystyle\frac{1}{2}\tilde{g}^{ih}\left( \frac{%
\partial g_{hj}}{\partial x^{k}}+\frac{\partial g_{hk}}{\partial x^{j}}-%
\frac{\partial g_{jk}}{\partial x^{h}}\right),\\
\hfill  \\
\overset{c}{H}_{\circ k}^{\circ}{=}\dot{\partial}_\circ \Gamma _{k}^{\circ}+%
\frac{1}{2}\tilde{g}^{\circ\circ}\frac{\partial g_{_{\circ\circ}}}{\partial x^{k}},\vspace*{%
2mm} \\
\overset{c}{V}_{j\circ}^{i}=0,\ \overset{c}{V}_{\circ\circ}^{\circ}=0.%
\end{array}%
\end{equation*}
Let
\begin{equation*}
\begin{array}{c}
\left( \left( \left( \rho ,\eta \right) TE,\left( \rho ,\eta \right) \tau
_{E},E\right) ,\left[ ,\right] _{\left( \rho ,\eta \right) TE},\left( \tilde{%
\rho},Id_{E}\right) \right),
\end{array}%
\end{equation*}%
be a (pseudo) generalized Kaluza-Klein $G$-space.
\begin{definition}
If $\left( \rho ,\eta ,h\right)
\mathbb{R}_{\ \alpha \ \beta }$ and $\left( \rho ,\eta ,h\right) \mathbb{S}%
_{\ \circ\ \circ}$ are the components of Ricci tensor associated to distinguished
linear $\left( \rho ,\eta \right) $-connection
\[
\begin{array}{c}
\left( \left( \rho ,\eta \right) H,\left( \rho ,\eta \right) V\right) ,%
\end{array}%
\]%
then the scalar
\[
\begin{array}{c}
\left( \rho ,\eta ,h\right) \mathbb{R=}\left( \rho ,\eta ,h\right) \mathbb{R}%
_{~\alpha ~\beta }\tilde{g}^{\alpha \beta }+\left( \rho ,\eta ,h\right)
\mathbb{S}_{\ \circ\ \circ}\tilde{g}^{\circ\circ},
\end{array}
\]%
is called the scalar curvature of the distinguished linear $(\rho
,\eta )$-connection \\$((\rho ,\eta )H,(\rho ,\eta )V)$.
\end{definition}
\begin{definition}
The tensor field
\[
\begin{array}{cc}
\left( \rho ,\eta ,h\right) \mathbb{T} & =\left( \rho ,\eta ,h\right)
\mathbb{T}_{\ \alpha \ \beta }d\tilde{z}^{\alpha }\otimes d\tilde{z}^{\beta
}+\left( \rho ,\eta ,h\right) \mathbb{T}_{\ \alpha \ \circ}d\tilde{z}^{\alpha
}\otimes \delta \tilde{y}^{\circ}\vspace*{2mm} \\
&\ \ +\left( \rho ,\eta ,h\right) \mathbb{T}_{\ \circ\ \beta }\delta \tilde{y}%
^{\circ}\otimes d\tilde{z}^{\alpha }+\left( \rho ,\eta ,h\right) \mathbb{T}%
_{\ \circ\ \circ}\delta \tilde{y}^{\circ}\otimes \delta \tilde{y}^{\circ},
\end{array}
\]%
such that its components satisfy the following conditions:
\begin{equation}\label{AP6}
\begin{array}{rl}
\varkappa \left( \rho ,\eta ,h\right) \mathbb{T}_{~\alpha ~\beta } & =\left(
\rho ,\eta ,h\right) \mathbb{R}_{~\alpha ~\beta }-\displaystyle\frac{1}{2}%
\left( \rho ,\eta ,h\right) \mathbb{R\cdot }g_{\alpha \beta },\vspace*{1mm}
\\
-\varkappa \left( \rho ,\eta ,h\right) \mathbb{T}_{~\alpha ~\circ} & =\left(
\rho ,\eta ,h\right) \mathbb{P}_{~\alpha ~\circ},\vspace*{2mm} \\
\varkappa \left( \rho ,\eta ,h\right) \mathbb{T}_{~\circ~\beta } & =\left( \rho
,\eta ,h\right) \mathbb{P}_{~\circ~\beta },\vspace*{1mm} \\
\varkappa \left( \rho ,\eta ,h\right) \mathbb{T}_{~\circ~\circ} & =\left( \rho ,\eta
,h\right) \mathbb{S}_{~\circ~\circ}-\displaystyle\frac{1}{2}\left( \rho ,\eta
,h\right) \mathbb{R\cdot }g_{\circ\circ},%
\end{array}%
\end{equation}
where $\varkappa $ is a constant, is called the energy-momentum
tensor field associated to the (pseudo) generalized Kluza-Klein $G$-space
\begin{equation*}
\begin{array}{c}
\left( \left( \left( \rho ,\eta \right) TE,\left( \rho ,\eta \right) \tau
_{E},E\right) ,\left[ ,\right] _{\left( \rho ,\eta \right) TE},\left( \tilde{%
\rho},Id_{E}\right) \right).
\end{array}%
\end{equation*}
Also, the equations (\ref{AP6}) are called the Einstein equations
associated to the (pseudo) generalized Kluza-Klein $G$-space.

\end{definition}
Formally, the Einstein equations can be written%
\[
\mathbf{Ric}\left( \left( \rho ,\eta \right) H,\left( \rho ,\eta \right)
V\right) -\frac{1}{2}\left( \rho ,\eta ,h\right) \mathbb{R\cdot }G=\varkappa
\cdot \left( \rho ,\eta ,h\right) \mathbb{T}.
\]\\
{\bf\Large Acknowledgment}\\\\
The first author would like to give his warmest thanks to Radinesti-Gorj Cultural Sciantifique Society for financial support.


\noindent
Constantin M Arcu\c{s}\\
Secondary School "Cornelius Radu"\\
Radinesti Village, 217196\\
Gorj County, Romania\\
Email:\ c\_arcus@radinesti.ro\\

\noindent
Esmail Peyghan\\
Department of Mathematics, Faculty  of Science\\
Arak University\\
Arak 38156-8-8349,  Iran\\
Email: e-peyghan@araku.ac.ir

\end{document}